\documentclass[11pt,onecolumn]{article}

\setlength{\topmargin}{-.6in}
\setlength{\textwidth}{6.5in}
\setlength{\evensidemargin}{0.0in}
\setlength{\oddsidemargin}{0.0in}
\setlength{\textheight}{9in}

\usepackage{times}
\usepackage{amsmath}
\usepackage{amssymb}
\usepackage{multirow}
\usepackage{xspace}
\usepackage{graphicx}
\usepackage{ifpdf}
\usepackage{url,hyperref}
\usepackage{latexsym}
\usepackage{xspace}
\usepackage{color}
\usepackage{makeidx}
\usepackage{wrapfig}
\usepackage{mathtools}
\usepackage{authblk}

\providecommand{\customgenericname}{}
\newcommand{\newcustomtheorem}[2]{%
  \newenvironment{#1}[1]
  {%
   \renewcommand\customgenericname{#2}%
   \renewcommand\theinnercustomgeneric{##1}%
   \innercustomgeneric
  }
  {\endinnercustomgeneric}
}

\newcustomtheorem{customdiscussion}{Discussion}

\long\def\remove#1{}

\newtheorem{theorem}{Theorem}[section] 
\newtheorem{lemma}[theorem]{Lemma}
\newtheorem{claim}[theorem]{Claim}

\newtheorem{definition}[theorem]{Definition}
\newenvironment{proof}{{\em Proof:}}{\hfill{\hfill\rule{2mm}{2mm}}}

\definecolor{darkred}{rgb}{1, 0.1, 0.3}

\newcommand{\myparagraph}[1]	{{\vspace*{0.08in}\noindent{\bf #1~}}}
\newcommand {\mm}[1] {\ifmmode{#1}\else{\mbox{\(#1\)}}\fi}
\newcommand{\denselist}{\itemsep 0pt\parsep=1pt\partopsep 0pt}
\newcommand{\eps}{{\varepsilon}}

\newcommand{\anotherG}		{{\tilde{G}}}
\newcommand{\Ghat}			{{\widehat{G}}}

\newcommand{\athreshold}		{\tau}
\newcommand{\X}				{\mathcal{X}}
\newcommand{\aball}			{\mathrm{B}}

\newcommand{\tG}				{{G^*}}

\newcommand{\tE}				{E^*}
\newcommand{\myER}			{{Erd\H{o}s--R\'enyi\xspace}}
\newcommand{\dm}				{{L}}

\newcommand{\mys}				{{\mathrm{s}}}

\newcommand{\myprob}			{\mathbb{P}}

\newcommand{\anotherGt}		{{\tilde{G}_{\athreshold}}}

\newcommand{\ourmodel}		{{ER-perturbed random geometric graph\xspace}}
\newcommand{\myassumption}		{{\sf Assumption-A}}
\newcommand{\myRegular}			{{\sf Regularity-cond}}
\newcommand{\mydense}			{{\sf Density-cond}}
\newcommand{\Ehat}			{{\widehat{E}}}
\newcommand{\mygoodE}		{{good-edge}\xspace}
\newcommand{\mybadE}		{{bad-edge}\xspace}
\newcommand{\myWSP}		{{well-separated clique-partitions family\xspace}}
\newcommand{\cliqueP}		{{clique-partition\xspace}}
\newcommand{\Bconst}		{{Besicovitch constant\xspace}}
\newcommand{\myBC}		{{\beta}}
\newcommand{\myRC}		{{\rho}}
\newcommand{\aC}			{{\mathrm{C}}}
\newcommand{\aK}			{{\mathsf{K}}}
\newcommand{\myE}			{{\mathbb{E}}}
\newcommand{\aF}		{{\mathrm{F}}}
\newcommand{\aX}		{{\mathrm{X}}}
\newcommand{\aY}		{{\mathrm{Y}}}

\newcommand{\aH}		{{\mathrm{H}}}

\bibliographystyle{plain}

\DeclarePairedDelimiter\ceil{\lceil}{\rceil}
\DeclarePairedDelimiter\floor{\lfloor}{\rfloor}

\DeclareMathOperator*{\argmax}{arg\,max}

\begin{document}

\title{Local cliques in ER-perturbed random geometric graphs}

\author[1]{Matthew Kahle}
\author[2]{Minghao Tian}
\author[2]{Yusu Wang}
\affil[1]{Department of Mathematics, The Ohio State University, USA \\
\texttt{mkahle@math.osu.edu}}
\affil[2]{Computer Science and Engineering Dept., The Ohio State University, USA \\
\texttt{tian.394@osu.edu, yusu@cse.ohio-state.edu}}

\maketitle

\begin{abstract}
Random graphs are mathematical models that have applications in a wide range of domains. Among them, the Erd\H{o}s--R\'enyi  and the random geometric graphs are two models whose theoretical properties (e.g., clique number and the largest component) are perhaps most well studied. 
We are interested in mixed models coming from the overlay of two graph structures. In particular, we study the following model where one adds Erd\H{o}s--R\'enyi\xspace (ER) type perturbation to a random geometric graph. More precisely, assume 
$G_\mathcal{X}^{*}$ is a random geometric graph sampled from a nice measure on a metric space $\mathcal{X} = (X,d)$. The input observed graph $\widehat{G}(p,q)$ is generated by removing each existing edge from  $G_\mathcal{X}^*$ with probability $p$, while inserting each non-existent edge to $G_\mathcal{X}^{*}$ with probability $q$.  We refer to such random $p$-deletion and $q$-insertion as ER-perturbation. Although these graphs are related to the objects in the continuum percolation theory, our understanding of them is still rather limited.  

In this paper we consider a localized version of the classical notion of clique number for the aforementioned \emph{ER-perturbed random geometric graphs}: Specifically, we study the \emph{edge clique number} for each edge in a graph, defined as the size of the largest clique(s) in the graph containing that edge. The clique number of the graph is simply the largest edge clique number. 
Interestingly, given a ER-perturbed random geometric graph, we show that the edge clique number presents two fundamentally different types of behaviors, depending on which ``type'' of randomness it is generated from. 

As an application of the above results, we show that by using a filtering process based on the edge clique number, we can recover the shortest-path metric of the random geometric graph $G_\mathcal{X}^*$ within a multiplicative factor of $3$, from an ER-perturbed observed graph $\widehat{G}(p,q)$, for a significantly wider range of insertion probability $q$ than in previous work. 
\end{abstract}

\setcounter{page}{0}
\newpage

\section{Introduction}
Random graphs are mathematical models which have applications in a wide spectrum of domains. \myER{} graph $G(n,p)$ is one of the oldest and most-studied models for networks \cite{newmanrandom}, constructed by adding edges between all pairs of $n$ vertices with probability $p$ independently. 

It is also called the Poisson random graph since in the limit as $n$ tends to infinity, $G(n,p)$ has a Poisson degree distribution \cite{newman2010networks}. Many global properties of this model are well-studied by using probabilistic method \cite{alon2016probabilistic}, such as the clique number and the phase transition behaviors of connected components w.r.t. parameter $p$. 

Another classical type of random graphs is the random geometric graph $G(\mathbb{X}_n; r)$ introduced by Edgar Gilbert in 1961 \cite{gilbert1961random}. This model starts with a set of $n$ points $\mathbb{X}_n$ randomly sampled over a metric space (typically a cube in $\mathbb{R}^d$) from some probability distribution, and edges are added between all pairs of points within distance $r$ to each other. 
The \myER{} random graphs and random geometric graphs exhibit similar behavior for the Poisson degree distribution; however, other properties, such as the clique number and phase transition (w.r.to $p$ or to $r$), could be very different 
\cite{gupta1999critical,penrose1997,penrose2003random,mcdiarmid2011chromatic}. 

This model has many applications in real world where the physical locations of objects involved play an important role \cite{dettmann2016random}, for example wireless or transportation networks \cite{nekovee2007worm,blanchard2008mathematical}, and protein-protein interaction (PPI) networks \cite{higham2008fitting}.

We are interested in mixed models that ``combine'' both types of randomness together. 
One way to achieve this is to add \myER{} type perturbation (percolation) to random geometric graphs. 
A natural question arises: what are the properties of this type of random graphs?
Although these graphs are related to the continuum percolation theory \cite{meester1996continuum}, our understanding about them so far is still limited: In previous studies, the underlying spaces are typically plane (called the Gilbert disc model) \cite{bollobas2006percolation}, cubes \cite{Coppersmith2002} and tori \cite{janson2016bootstrap};  the vertices are often chosen as the standard lattices of the space; and the results usually concern the connectivity \cite{booth2003ad,peters2015theoretical} or diameter (e.g, \cite{Wu2017Mixing}). 

Cliques in graphs are important objects in many application domains (e.g, in social networks \cite{hanneman2005introduction}, chemistry \cite{national1995mathematical} and PPI networks \cite{srihari2017computational}).  
The clique number in \myER{} random graph has been studied extensively in 20th century \cite{grimmett1975colouring,bollobas1976cliques}. The clique number in random geometric graph is a relative new topic \cite{penrose2003random}. It has dramatically different behaviors when different ranges of $r$ are chosen \cite{penrose2003random,mcdiarmid2011chromatic}. The high dimensional case (when the space where points are sampled from is high dimensional) is also studied \cite{devroye2011high}. More recently, there has been work on the clique number in inhomogeneous random graphs (e.g. power-law graph) \cite{janson2010large,dolevzal2017cliques}.

\myparagraph{Our work.} 
In this paper, we consider a mixed model of \myER{} random graphs and random geometric graphs, and study the behavior of a local property called \emph{edge clique number}. 
More precisely, we use the following \emph{\ourmodel{} model} previously introduced in \cite{ParthasarathyST17}. 
Suppose there is a compact metric space $\X = (X, d)$ (as feature space) with a probability distribution induced by a ``nice'' measure $\mu$ supported on $\X$ (e.g, the uniform measure supported on an embedded smooth low-dimensional Riemannian manifold). 
Assume we now randomly sample $n$ points $V$ i.i.d from this measure $\mu$, and build the random geometric graph $G^{*}_\X(r)$, which is the $r$-neighborhood graph spanned by $V$ (i.e, two points $u,v \in V$ are connected if their distance $d(u, v) \leq r$). 
Next, we add \myER{} (ER) type perturbation to $G^{*}_\X(r)$: each edge in $G^{*}_\X(r)$ is deleted with a uniform probability $p$, while each ``short-cut'' edge between two unconnected nodes $u, v$ is inserted to $G^{*}_\X(r)$ with a uniform probability $q$. We denote the resulting generated graph as $\Ghat_\X^{p,q}(r)$. 

Intuitively, one can imagine that a graph is generated first as a proximity graph (captured by the random geometric graph) in some feature space ($\X$ in the above setting). The random insertion / deletion of edges then allows for noise or exceptions. For example, in a social network, nodes could be sampled from some feature space of people, and two people could be connected if they are nearby in the feature space. However, there are always some exceptions -- friends could be established by chance even they are very different from each other (``far away''), and two similar (``close'') persons (say, close geographically and in tastes) may not develop friendship. The ER-perturbation introduced above by \cite{ParthasarathyST17} aims to account for such exceptions. 

We introduce a local property called the \emph{edge clique number} of a graph $G$, to provide a more refined view than the global clique number. It is defined for each edge $(u,v)$ in the graph, denoted as $\omega_{u,v}(G)$, as the size of the largest clique containing $uv$ in graph $G$. 
Our main result is that $\omega_{u,v}(G)$ presents two fundamentally different types of behaviors, depending on from which ``type'' of randomness the edge $(u,v)$ is generated from: A ``good'' edge from the random-geometric graph $G^{*}_\X(r)$ has an edge-clique number similar to edges from a certain random-geometric graph; while a ``bad'' edge $(u,v)$ introduced during the random-insertion process has an edge-clique number similar to edges in some random \myER{} graph. See Theorems \ref{thm:insertiononlygoodedge}, \ref{thm:insertiononlybadedge}, \ref{thm:combinedgoodclique}, and \ref{thm:combinedbadclique} for the precise statements. 

On the surface, this may not look surprising: Consider the insertion only case (i.e, the deletion probability $p = 0$ during the ER-perturbation stage). For this simpler case, one could view the final observed graph $\Ghat$ as the \emph{union} of a random geometric graph $\tG$ and an \myER{} random graph $G(n, q)$. However, in general, the edge clique number for an edge in the union $G = G_1 \cup G_2$ of two graphs $G_1$ and $G_2$ could be significantly larger than the clique number in each individual graph $G_i$: Consider for example $G_1$ is a collection of $\sqrt{n}$ disjoint cliques, each of size $\sqrt{n}$, while $G_2$ equals to the complement of $G_1$. The union $G_1 \cup G_2$ is the complete graph and the edge clique number for every edge is $n$. However, the largest clique in $G_1$ or in $G_2$ is $\sqrt{n}$. Our results suggest that due to the randomness in each of the individual graph we are considering, with high probability such a scenario will not happen. To prove our technical results, we apply a novel approach using what we call a well-separated clique-partitions family to help us to decouple the interaction between the two types of hidden random structures (i.e, random geometric graph, and the ER-perturbation).

As an application of our theoretical analysis, in Theorem \ref{thm:maincombined}, we show that by using a filtering process based on our edge clique number, we can recover the shortest-path metric of the random geometric graph $G^{*}_\X(r)$ within a multiplicative factor of $3$, from an ER-perturbed graph $\Ghat_\X^{p,q}(r)$, for a significantly wider range of insertion probability $q$ than what's required in \cite{ParthasarathyST17} \footnote{We note that however, for the metric recovery purpose, the filtering process in \cite{ParthasarathyST17} requires only computing the so-called Jaccard index and is thus significantly simpler than using cliques.}, although we do need a stronger regularity condition on the measure $\mu$. For example, in the case where only insertion-type perturbation is added to the random geometric graph $G^*_{\X}(r)$ (in which case, the observed graph $\Ghat_\X^{0,q}(r)$ can simply be thought of as a union of a random geometric graph and an \myER{} graph), depending on the the density of the graph $G_\X^*(r)$, the contrast could be between requiring that $q \leq \frac{\ln n}{n}$ in previous work versus only requiring that $q = o(1)$ by our method. 
See more discussion at the end of Section \ref{sec:metricrecovery}.

\section{Preliminaries}
\label{sec:definition}

Suppose we are given a compact geodesic metric space $\X = (X, d)$ 
\cite{bridson2011metric}
\footnote{A geodesic metric space is a metric space where any two points in it are connected by a path whose length equals the distance between them. Uniqueness of geodesics is not required. Riemannian manifolds or path-connected compact sets in the Euclidean space are all geodesic metric spaces.}. 
We will consider ``nice'' measures on $\X$. Specifically, 

\begin{definition}[Doubling measure]\label{def:doublingdim}
Given a metric space $\X = (X, d)$, let $B_r(x) \subset X$ denotes the closed metric ball $B_r(x) = \{y\in X \mid d(x,y) \leq r\}$. 
A measure $\mu: X \to \mathbb{R}$ on $\X$ is said to be \emph{doubling} if every metric ball (with positive radius) has finite and positive measure and there is a constant $L=L(\mu)$ s.t. for all $x\in X$ and every $r > 0$, we have 
$\mu(B_{2r}(x)) \le L \cdot \mu(B_r(x))$. 
We call $L$ the \emph{doubling constant} and say $\mu$ is an \emph{$L$-doubling measure}. 
\end{definition}

It is known that any metric space supporting a doubling measure is necessarily a \emph{doubling space}. 
Intuitively, the doubling measure generalizes a nice measure on the Euclidean space, but still behaves nicely in the sense that the growth of the mass within a metric ball is bounded as the radius of the ball increases. 

For our theoretical results later, we in fact need a stronger condition on the input measure, which we will specify later in \myassumption{} at the beginning of Section \ref{sec:edgeclique}. 

\myparagraph{ER-perturbed random geometric graph.} Following \cite{ParthasarathyST17}, we consider the following random graph model: 
Given a compact metric space $\X = (X, d)$ and a $\dm$-doubling probability measure $\mu$ supported on $X$, let $V$ be a set of $n$ points sampled i.i.d. from $\mu$. We build the $r-$neighborhood graph $G^{*}_\X(r) = (V, \tE)$ for some parameter $r > 0$ on $V$; that is, $\tE = \{ (u, v) \mid d(u, v) \le r, u, v \in V \}$. 
We call $G^*_\X(r)$ a random geometric graph generated from $(\X, \mu, r)$. 
Now we add the following two types of random perturbations:
\begin{description}\denselist
\item {\sf $p$-deletion}: For each existing edge $(u,v) \in \tE$, we delete edge $(u,v)$ with probability $p$. 
\item {\sf $q$-insertion}: For each non-existent edge $(u,v) \notin \tE$, we insert edge $(u,v)$ with probability $q$. 
\end{description}
The order of applying the above two types of perturbations doesn't matter since they are applied to two disjoint sets respectively. The final graph $\Ghat_\X^{p,q}(r) = (V, \Ehat)$ is called a \emph{($p,q$)-perturbation of $G^*_\X(r)$}, or simply an \emph{ER-perturbed random geometric graph}. 
The reference $\X$ and parameters $r, p, q$ are sometimes omitted from the notations when their choices are clear. 

We now introduce a local version of the standard clique number: 
\begin{definition}[Edge clique number] \label{def:edgecliquenumber}
Given a graph $G = (V,E)$, for any edge $(u,v) \in E$, its \emph{edge clique number $\omega_{u,v}(G)$} is defined as
\begin{align*}
\omega_{u,v}(G) = \text{ the size of the largest clique in $G$ containing } (u,v).
\end{align*}
\end{definition}

\myparagraph{Setup for the remainder of the paper.} In what follows, we fix the compact geodesic metric space $\X=(X,d)$, the $L$-doubling probability measure $\mu$, and the set of $n$ graph nodes $V$ sampled i.i.d from $\mu$. The input is a ($p,q$)-perturbation $\Ghat = \Ghat_\X^{p,q}(r) = (V, \Ehat)$ of a random geometric graph $\tG = G^*_\X(r)$ spanned by $V$ with radius parameter $r$. For an arbitrary graph $G$, let $V(G)$ and $E(G)$ refer to its vertex set and edge set, respectively, and let $N_{G}(u)$ denote the set of neighbors of $u$ in $G$ (i.e. nodes connected to $u \in V(G)$ by edges in $E(G)$).  

\begin{definition}[Good / \mybadE{}s]\label{def:goodedge}
An edge $(u,v)$ in the perturbed graph $\Ghat$ is a \emph{\mygoodE{}} if $d(u,v) \leq r$. 
An edge $(u,v)$ in the perturbed graph $\Ghat$ is a \emph{\mybadE{}} if for any $x \in N_{G^*}(u)$ and $ y \in N_{G^*}(v)$, we have $d(x,y) > r$.
\end{definition}

In other word, $(u,v)$ is a \mybadE{} if and only if there are no edges between neighbors of $u$ and neighbors of $v$ in $\tG$. See figure \ref{fig:edgetypes} for some examples. It is easy to see that any edge $(u,v)$ in $\Ghat$ with $d(u,v) > 3r$ is necessarily a \mybadE{}. 
\begin{figure}[h]
\centering
\begin{tabular}{ccc}
\includegraphics[height=3.0cm]{./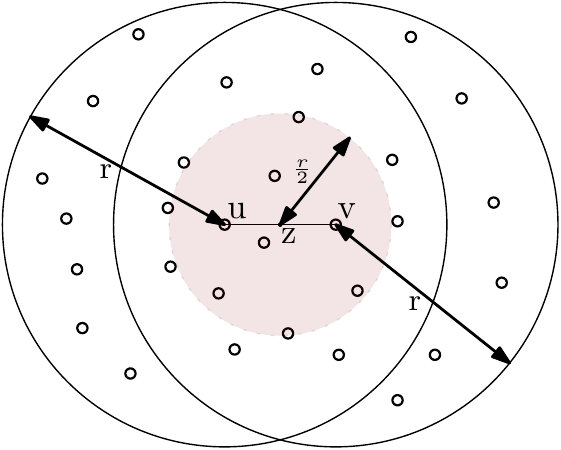} &\hspace*{0.3in} & \includegraphics[height=3.0cm]{./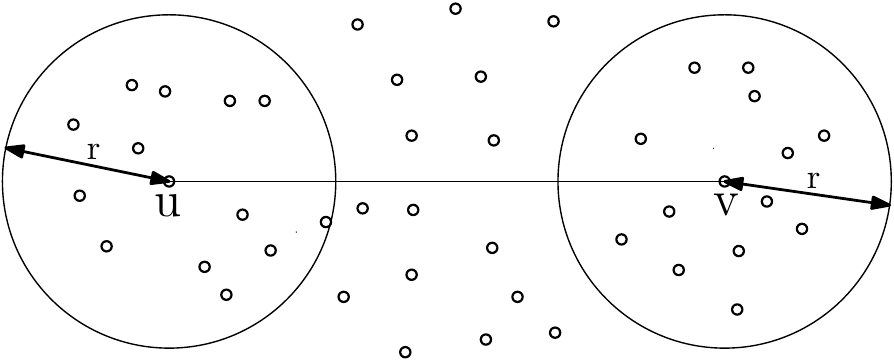} \\
(a) & & (b)
\end{tabular}
\vspace*{-0.15in}
\caption{(a) shows a \mygoodE{} $(u,v)$. It also shows that if $d(u,v) \le r$, then there exists an $r/2$-ball (shaded region) in the intersection of $B_r(u)$ and $B_r(v)$; (b) shows a \mybadE{} $(u,v)$;}
\label{fig:edgetypes}
\end{figure}

\myparagraph{Organization of paper.}  
In Section \ref{sec:edgeclique}, we study the behavior of edge clique number for different types of edges. Our main result Theorem \ref{thm:combinedbadclique} roughly suggests, under certain conditions on the insertion probability $q$,  for a \mygoodE{} $(u,v)$ of $\Ghat_\X^{p,q}(r)$, with high probability, $\omega_{u,v}\left(\Ghat\right)$ has order $\Omega\left(\log_{1/(1-p)}\ln n\right)$; while for a \mybadE{} $(u,v)$, its edge-clique number $\omega_{u,v}\left(\Ghat\right)$ has order $o\left(\log_{1/(1-p)}\ln n\right)$ with high probability. 

To illustrate the main ideas, we will first give results for when only edge-insertion type of perturbations is added to the random geometric graph in Section \ref{subsec:insertion} -- In fact, this case is of independent interest as well. 
An application of our result to recover the shortest-path metric of the hidden geometric graph is given in Section \ref{sec:metricrecovery}. 

\section{Two different behaviors of edge clique number}
\label{sec:edgeclique}

In Section \ref{subsec:insertion}, we study the edge clique numbers for the insertion-only perturbed random geometric graphs, both to illustrate the main ideas, and to show the different behaviors of the edge clique number more clearly. 
We note that this case is of independent interest as well; indeed, the graph generated this way can be thought of as the union of a random geometric graph and an \myER{} graph. To decouple the interaction between them when bounding the clique size, we develop a novel approach using what we call the well-separated clique-partitions family. 
In Section \ref{subsec:deletion}, we study the  case for deletion-only perturbed random geometric graphs, where we only delete each edge independently with probability $p$ to obtain an input graph $\Ghat$.
This case is much simpler, and our main result follows from standard probabilistic methods. Thus we only state the main theorem for the deletion-only case in Section \ref{subsec:deletion}, with proofs in Appendix \ref{appendix:subsec:deletion}. 

Finally, we discuss the combined case of an \ourmodel{} in Section \ref{subsec:combined}.

First, we need an assumption on the parameter $r$ (for the random geometric graph $G^*_\X(r)$), as well as a condition on the measure $\mu$ where graph nodes $V$ are sampled from.

\begin{description}
\item {\sf[\myassumption{}]:} The parameter $r$ and the doubling measure $\mu$ satisfy the following condition:

There exist $\mys \ge \frac{13 \ln n}{n} \left(= \Omega(\frac{\ln n}{n})\right)$ and a constant $\myRC$ such that for any $x\in X$ 
\begin{description}\denselist
\item[(\mydense)] $\mu\left(B_{r/2}(x)\right) \ge \mys. $ 
\item[(\myRegular)] $\mu\left(B_{r/2}(x)\right) \leq \myRC \mys$ 
\end{description}
\end{description}

Intuitively, these two conditions require that for the specific $r$ value we choose, the mass contained inside all radius-$r$ metric balls are similar (within a constant $\myRC$ factor); so the measure $\mu$ is roughly uniform at this scale $r$. 
These conditions can be satisfied when the input measure is the so-called (Ahlfors) $d$-regular measure \cite{heinonen2012lectures}, which is in fact stronger and essentially requires that such a bound on the mass in a metric ball $B_{r'}(x)$ holds for every radius $r'$. 
\mydense{} is equivalent to the Assumption-R in \cite{ParthasarathyST17}. 
It intuitively requires that $r$ is large enough such that with high probability each vertex $v$ in the random geometric graph $G^*_\X(r)$ has degree $\Omega(\ln n)$. Indeed, the following is already known (also see Appendix \ref{appendix:claim:nbhd} for the straightforward proof). 

\begin{claim}[\cite{ParthasarathyST17}]\label{claim:degreebound} 
Under \mydense, with probability at least $1 - n^{-5/3}$, each vertex in $G^*_\X(r)$ has at least $sn/4$ neighbors. 
\end{claim}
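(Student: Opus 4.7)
The plan is to view the degree of a fixed vertex as a binomial random variable, bound its lower tail by a multiplicative Chernoff inequality, and finish with a union bound over the $n$ vertices. All the geometry needed is already packaged in \mydense{}.

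First I would fix a vertex $v \in V$ and condition on its location. The neighbors of $v$ in $\tG = G^*_\X(r)$ are exactly the points of $V \setminus \{v\}$ that fall in the closed ball $B_r(v)$. Because $B_{r/2}(v) \subseteq B_r(v)$, \mydense{} gives
\[
\mu\!\left(B_r(v)\right) \;\ge\; \mu\!\left(B_{r/2}(v)\right) \;\ge\; \mys.
\]
The remaining $n-1$ points of $V$ are i.i.d.\ from $\mu$, so the degree $D_v := |N_{\tG}(v)|$ is distributed as $\mathrm{Bin}(n-1, p_v)$ with $p_v \ge \mys$. In particular $\mathbb{E}[D_v] \ge (n-1)\mys \ge \tfrac{n \mys}{2}$ for $n \ge 2$.

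Next I would apply a multiplicative Chernoff bound to show $D_v \ge \mys n/4$ with high probability for this fixed $v$. Writing $\mathbb{E}[D_v] = (n-1)p_v$ and choosing $\delta$ so that $(1-\delta)\mathbb{E}[D_v] = \mys n/4$ (one finds $\delta \approx 3/4$ for large $n$, since $(n-1)p_v \ge (n-1)\mys$), the standard bound
\[
\Prob\!\left[D_v \le (1-\delta)\mathbb{E}[D_v]\right] \;\le\; \left(\frac{e^{-\delta}}{(1-\delta)^{1-\delta}}\right)^{\mathbb{E}[D_v]}
\]
yields $\Prob[D_v < \mys n/4] \le c_0^{\,\mathbb{E}[D_v]}$ for an absolute constant $c_0 < 1$ (concretely $c_0 = e^{-3/4} \cdot 4^{1/4} \approx 0.67$). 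Substituting $\mathbb{E}[D_v] \ge \tfrac{n\mys}{2} \ge \tfrac{13\ln n}{2}$ from \mydense{} shows that the exponent in the Chernoff bound is linear in $\ln n$, and one checks that the constant $13$ is exactly calibrated so that this per-vertex failure probability is at most $n^{-8/3}$.

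Finally a union bound over the $n$ vertices gives that the probability that some vertex has fewer than $\mys n/4$ neighbors is at most $n \cdot n^{-8/3} = n^{-5/3}$, yielding the claim. The only non-mechanical point is checking that the numerical constant $13$ in \mydense{} is large enough to push the per-vertex tail below $n^{-8/3}$ once one picks the right $\delta$ in Chernoff; everything else is standard.
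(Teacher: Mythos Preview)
Your approach is essentially identical to the paper's: fix a vertex, bound its degree as a binomial lower tail via Chernoff using $\mu(B_r(v))\ge\mu(B_{r/2}(v))\ge\mys$, then union-bound over the $n$ vertices. The paper uses the simpler Chernoff form $\Pr[X<(1-\delta)\myE X]\le e^{-\delta^2\myE X/2}$ with $\delta=2/3$ (so that $\tfrac{sn}{4}<\tfrac{s(n-1)}{3}=(1-\delta)\myE[D_v]$ when $p_v=s$), whereas you use the tighter $\bigl(e^{-\delta}/(1-\delta)^{1-\delta}\bigr)^{\myE X}$ form with $\delta\approx 3/4$; both work.

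One small numerical slip: after correctly noting $\myE[D_v]\ge (n-1)\mys$, you weaken this to $\myE[D_v]\ge n\mys/2\ge \tfrac{13}{2}\ln n$ before plugging into Chernoff. With your $c_0\approx 0.668$ this yields an exponent of about $6.5\ln c_0\approx -2.62$, just short of the claimed $-8/3\approx -2.667$. If instead you keep $\myE[D_v]\ge (n-1)\mys\ge 12\ln n$ (which holds once $n\ge 13$), either Chernoff form comfortably gives a per-vertex bound of $n^{-8/3}$, and the union bound finishes as you say.
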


\subsection{Insertion-only perturbation} 
\label{subsec:insertion}

Recall $G^*_\X(r) = (V, E)$ is a random geometric graph whose $n$ vertices $V$ sampled i.i.d. from a $L$-doubling probability measure $\mu$ supported on a compact metric space $\X=(X, d)$. 
In this section, we assume that the input graph $\Ghat$ is generated from $\tG = G^{*}_\X(r)$ as follows: First, include all edges of $\tG$ in $\Ghat$. Next, for any $u, v\in V$ with $(u,v) \neq E(\tG)$, we add edge $(u,v)$ to $E(\Ghat)$ with probability $q$. That is, we only insert edges to $\tG$ to obtain $\Ghat$. 

First, for \mygoodE{}s, it is easy to obtain the following result (see Appendix \ref{appendix:thm:insertiononlygoodedge} for the straightforward proof). 

\begin{theorem}\label{thm:insertiononlygoodedge}
Assume \mydense{} holds. 
Let $\tG$ be an $n$-node random geometric graph generated from $(X, d, \mu)$ as described. 
Denote $\Ghat = \Ghat^{q}$ the final graph after inserting each edge not in $\tG$ independently with probability $q$. Then, with high probability, for each \mygoodE{} $(u,v)$ in $\Ghat$, its edge clique number satisfies that $\omega_{u,v}(\Ghat) \geq \mys n/4$.
\end{theorem}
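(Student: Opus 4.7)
The plan is to exhibit an explicit clique of size at least $\mys n/4$ containing any good edge, relying only on the random geometric graph $\tG$ (insertions can only help). The key geometric observation is that, since $\X$ is a geodesic metric space, for any good edge $(u,v)$ (so $d(u,v) \le r$) there exists a midpoint $z \in X$ on a geodesic from $u$ to $v$ with $d(u,z) = d(v,z) = d(u,v)/2 \le r/2$. By the triangle inequality, every $x \in B_{r/2}(z)$ satisfies $d(x,u) \le r$ and $d(x,v) \le r$, so $B_{r/2}(z) \subseteq B_r(u)\cap B_r(v)$; moreover any two points $x,y \in B_{r/2}(z)$ satisfy $d(x,y) \le r$. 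Hence the set $C_{u,v} := \{w \in V : d(w,z) \le r/2\}$, together with $u$ and $v$, forms a clique in $\tG$ containing the edge $(u,v)$.

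Next, I would lower-bound $|C_{u,v}|$ with high probability. Since $V$ is sampled i.i.d.\ from $\mu$ and $\mu(B_{r/2}(z)) \ge \mys$ by \mydense{}, the random variable $|C_{u,v}|$ stochastically dominates a Binomial$(n, \mys)$. A multiplicative Chernoff bound with deviation parameter $3/4$ gives
\begin{equation*}
\Prob\bigl[\,|C_{u,v}| < \mys n/4\,\bigr] \;\le\; \exp\!\Bigl(-\tfrac{9}{32}\mys n\Bigr).
\end{equation*}
Using the hypothesis $\mys n \ge 13 \ln n$ from \myassumption{}, this failure probability is at most $n^{-117/32}$, well below $n^{-3}$.

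Finally, I would take a union bound over the $\binom{n}{2}$ candidate pairs $(u,v)$; note that although the midpoint $z_{u,v}$ depends on $(u,v)$, for each fixed pair the event above is a statement about the remaining $n-2$ samples (or even all $n$, since including $u,v$ can only increase $|C_{u,v}|$). The union bound gives a total failure probability of $O(n^{-1})$, so with high probability \emph{every} good edge $(u,v)$ satisfies $\omega_{u,v}(\tG) \ge |C_{u,v}| \ge \mys n/4$. Since $\Ghat \supseteq \tG$ in the insertion-only model, the same clique persists in $\Ghat$, yielding $\omega_{u,v}(\Ghat) \ge \mys n/4$.

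\textbf{Main obstacle.} The argument is essentially routine once the geodesic-midpoint observation is made; the only point that requires care is confirming that the Chernoff tail decays faster than $n^{-2}$, which is forced by the precise constant $13$ in the lower bound on $\mys n$ in \myassumption{}. No new probabilistic machinery beyond standard concentration is needed, and in particular the randomness of $q$-insertions plays no role here because insertions can only enlarge cliques.
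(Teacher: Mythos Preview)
Your proposal is correct and follows essentially the same route as the paper: take the geodesic midpoint $z$ of a good edge, observe that $V\cap B_{r/2}(z)$ spans a clique in $\tG$ containing $u$ and $v$, apply a Chernoff bound using $\mu(B_{r/2}(z))\ge \mys$ to show each such ball has at least $\mys n/4$ points, and union-bound over the $O(n^2)$ pairs. The paper's proof (Appendix \ref{appendix:thm:insertiononlygoodedge}) uses slightly different Chernoff parameters (yielding failure probability $n^{-2/3}$ instead of your $O(n^{-1})$), but the argument is otherwise identical, and your handling of the dependence between the midpoint $z$ and the samples is if anything a bit more careful than the paper's sketch.
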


Bounding the edge clique number for \mybadE{}s is much more challenging, due to the interaction between local edges (from random geometric graph) and long-range edges (from random insertions). 
To handle this, we will create a specific collection of subgraphs for $\Ghat$ in an appropriate manner, and bound the edge clique number of a \mybadE{} in each such subgraph. The property of this specific collection of subgraphs is that the union of these individual cliques provides an upper bound on the edge clique number for this edge in $\Ghat$. 
To construct this collection of subgraphs, we will use the so-called Besicovitch covering lemma which has a lot of applications in measure theory \cite{federer2014geometric}. 

First, we introduce some notations. We use a \emph{packing} to refer to a countable collection $\mathcal{B}$ of \emph{pairwise disjoint} closed balls. Such a collection $\mathcal{B}$ is a \emph{packing w.r.t.  a set $P$} 
if the centers of the balls in $\mathcal{B}$ lie in the set $P \subset X$, and it is a \emph{$\delta$-packing} if all of the balls in $\mathcal{B}$ have radius $\delta$. A set $\{A_1, \ldots, A_\ell\}, A_i\subseteq X$, \emph{covers $P$} if $P \subseteq \bigcup_i A_i$. 

\begin{theorem}[Besicovitch Covering Lemma, doubling space version)\cite{kaenmaki2016local}]\label{thm:Besicovitch}
Let $\X=(X,d)$ be a doubling space. Then, there exists a constant $\myBC = \myBC(\X) \in \mathbb{N}$ such that for any $P \subset X$ and $\delta >0$,  there are $\myBC$ number of $\delta$-packings w.r.t. $P$, denoted by $\{\mathcal{B}_1, \cdots, \mathcal{B}_\myBC\}$, whose union also covers P.
\end{theorem}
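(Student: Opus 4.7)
The plan is to establish the Besicovitch constant $\myBC$ via a greedy graph-coloring argument on an intersection graph of $\delta$-balls. The strategy splits into three steps: (i) extract a sub-family of $\delta$-balls centered in $P$ whose union still covers $P$; (ii) use the doubling property of $\X$ to bound the local intersection complexity of this sub-family; and (iii) convert this local bound into a coloring by a bounded number of colors, each color class becoming a $\delta$-packing.

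For the covering step, I would let $Q \subseteq P$ be a maximal $\delta$-separated subset of $P$, meaning $d(q_1, q_2) > \delta$ for all distinct $q_1, q_2 \in Q$ and no superset of $Q$ in $P$ enjoys this property (existence follows from Zorn's lemma, or from a simple greedy construction when $P$ is separable). Maximality forces that for every $p \in P$ there is some $q \in Q$ with $d(p, q) \le \delta$, since otherwise $Q \cup \{p\}$ would still be $\delta$-separated. Hence $\{B_\delta(q) : q \in Q\}$ is a cover of $P$ by $\delta$-balls centered in $P$.

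Next, I would define an intersection graph $H$ on vertex set $Q$ by connecting $q_1, q_2$ exactly when $B_\delta(q_1) \cap B_\delta(q_2) \ne \emptyset$, equivalently when $d(q_1, q_2) \le 2\delta$. For any $q \in Q$, all of its $H$-neighbors lie in $B_{2\delta}(q)$. Because $\X$ is a doubling space, iterating its doubling condition a constant number of times shows that $B_{2\delta}(q)$ can be covered by at most $N = N(\X)$ balls of radius strictly less than $\delta/2$. Each such small ball contains at most one point of $Q$, since any two distinct points of $Q$ are at distance greater than $\delta$. This gives a uniform upper bound of $N - 1$ on the degree in $H$, so a greedy coloring produces a proper $N$-coloring.

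Each color class $\mathcal{B}_i$ is then a set of points of $Q$ whose closed $\delta$-balls are pairwise disjoint, i.e., a $\delta$-packing w.r.t. $P$, while $\bigcup_{i=1}^{N} \mathcal{B}_i = Q$ still covers $P$. Setting $\myBC := N$, which depends only on $\X$ through its doubling constant, completes the argument. The step I expect to be the most delicate is the doubling iteration inside Step (ii): turning the qualitative doubling condition (``every ball of radius $2r$ is covered by a bounded number of balls of radius $r$'') into an explicit quantitative upper bound on the number of $\delta$-separated points inside a ball of radius $2\delta$, and verifying that this bound depends neither on the particular center $q$, nor on the scale $\delta$, nor on the set $P$, so that the single universal constant $\myBC(\X)$ in the theorem is legitimate.
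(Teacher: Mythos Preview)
Your proof is correct and is in fact the standard argument for this version of the Besicovitch lemma in doubling spaces. Note, however, that the paper does not prove this statement at all: it is quoted as a known result from the cited reference, so there is no ``paper's own proof'' to compare against. Your greedy-coloring approach via a maximal $\delta$-separated net is exactly the kind of argument one finds in the literature, and the point you flag as delicate---that the doubling bound on $\delta$-separated points in $B_{2\delta}(q)$ is uniform in $q$, $\delta$, and $P$---is handled correctly by iterating the doubling condition a fixed number of times.
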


We call the constant $\myBC(\X)$ above the \emph{Besicovitch constant}. 
Given a set $P$, we say that $A$ is \emph{partitioned into} $A_1, A_2, \cdots, A_k$, if $A = A_1\cup \cdots \cup A_k$ and $A_i \cap A_j = \emptyset$ for any $i\neq j$. 

\begin{definition}[Well-separated clique-partitions family]\label{def:wellseparated}
Consider the random geometric graph $\tG = G^*_{\X}(r)$. A family $\mathcal{P} = \{P_i\}_{i \in \Lambda}$, where $P_i\subseteq V$ and $\Lambda$ is the index set of $P_i$s, forms a \emph{\myWSP}{} of $\tG$ if: 
\begin{enumerate}
\item $V = \cup_{i \in \Lambda} P_{i}$.
\item $\forall i \in \Lambda$, $P_i$ can be partitioned as $P_i= C_1^{(i)} \sqcup C_2^{(i)}\sqcup \cdots \sqcup C_{m_{i}}^{(i)}$ where
\begin{enumerate}
\item[(2-a)] $\forall j \in [1, m_i]$, there exist $\bar{v}^{(i)}_j \in V$ such that $C_j^{(i)} \subseteq B_{r/2}\left(\bar{v}^{(i)}_j\right) \cap V$. 
\item[(2-b)] For any $j_1, j_2 \in [1, m_{i}]$ with $j_1 \neq j_2$, $d_H\left(C_{j_1}^{(i)}, C_{j_2}^{(i)}\right) >r$, where $d_H$ is the Hausdorff distance between two sets in metric space $(X,d)$.
\end{enumerate}
\end{enumerate} 
We also call $C_1^{(i)} \sqcup C_2^{(i)}\sqcup \cdots \sqcup C_{m_{i}}^{(i)}$ a \emph{\cliqueP{} of $P_i$ (w.r.t. $\tG$)}, and its \emph{size} (cardinality) is $m_i$. 
The \emph{size} of the \myWSP{} $\mathcal{P}$ is its cardinality $|\mathcal{P}| = |\Lambda|$. 
\end{definition}

In the above definition, (2-a) implies that each $C_{j}^{(i)}$ spans a clique in $\tG$; thus we call $C_j^{(i)}$ as a \emph{clique} in $P_i$ and $C_1^{(i)} \sqcup C_2^{(i)}\sqcup \cdots \sqcup C_{m_{i}}^{(i)}$ a \cliqueP{} of $P_i$. (2-b) means that there are no edges in $\tG$ between any two cliques of $P_{i}$; thus later, any edge in $\Ghat$ between such cliques must come from $q$-insertion. See figure \ref{fig:wellseparated}. 

This ``well-separateness" of a clique-partition is stronger than having a $r$-packing. Nevertheless, we can apply Theorem \ref{thm:Besicovitch} multiple times to show that 
we can always find a \myWSP{} $\mathcal{P}$ of small cardinality bounded by a constant depending on the \Bconst{} $\myBC(\X)$. 
\begin{figure}[htbp]
\centering
\includegraphics[height=6cm]{./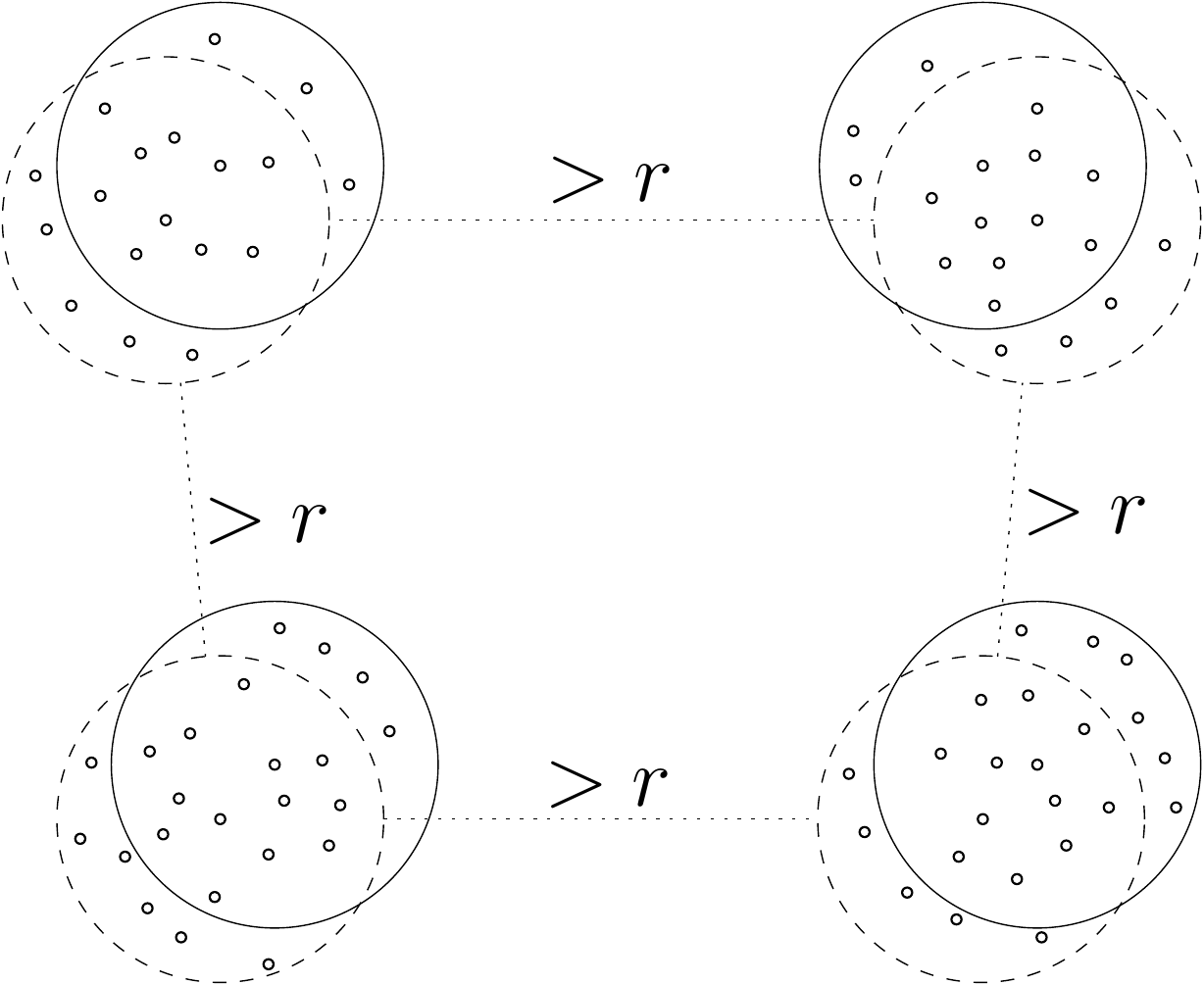}
\caption{Points in the solid balls are $P_1$, and those in dashed balls are $P_2$. Each adapts a \cliqueP{} of size $m_1 = m_2 = 4$. 
Assuming that all nodes in $\tG$ are shown in this figure, then $\mathcal{P} = \{P_1, P_2\}$ forms a \myWSP{} of $\tG$.}
\label{fig:wellseparated}
\end{figure}

\begin{lemma}\label{lem:BCLdoubling}
Let $\tG = G_\X^*(r)$ be an $n$-node random geometric graph generated from $(\X, \mu, r)$ where $\X = (X, d)$ and $\mu$ is a doubling measure supported on $X$. 
There is a \myWSP{} $\mathcal{P}=\{P_i\}_{i \in \Lambda}$ of $\tG$ with $|\Lambda| \leq \myBC^2$, where $\myBC = \myBC(\X)$ is the \Bconst{} of $\X$. 
\end{lemma}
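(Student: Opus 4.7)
The plan is to apply the Besicovitch Covering Lemma (Theorem \ref{thm:Besicovitch}) twice in a nested manner: the outer application produces $\myBC$ packings whose cliques together cover $V$, and a second application to each packing's centers splits them into $\myBC$ \emph{well-separated} subsets, giving $\myBC\cdot\myBC=\myBC^2$ families in total.

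First, I would apply Theorem~\ref{thm:Besicovitch} with $P=V$ and $\delta = r/2$ to obtain $\myBC$ $(r/2)$-packings $\mathcal{B}_1,\ldots,\mathcal{B}_{\myBC}$ whose union covers $V$. Let $Z_k\subseteq V$ denote the set of ball-centers of $\mathcal{B}_k$. Each $C_z := B_{r/2}(z)\cap V$ clearly spans a clique in $\tG$ (diameter $\le r$). Since $\X$ is a geodesic metric space, two disjoint closed $(r/2)$-balls must have centers at distance strictly greater than $r$ apart—otherwise any geodesic midpoint would lie in both balls—so the points in each $Z_k$ are pairwise $> r$ apart.

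Next, I would apply Theorem~\ref{thm:Besicovitch} a second time with $P=Z_k$ and $\delta = r$, obtaining $\myBC$ $r$-packings with centers $Y_{k,1},\ldots,Y_{k,\myBC}\subseteq Z_k$. The crucial observation is that \emph{every} $z\in Z_k$ must itself be one of these centers: if $z\in B_r(y)$ for some $y\in Y_{k,\ell}$, then $d(z,y)\le r$, which combined with the $>r$-separation of $Z_k$ forces $z=y$. Hence $Z_k=\bigcup_\ell Y_{k,\ell}$, and I can partition $Z_k$ into $Z_{k,1},\ldots,Z_{k,\myBC}$ with $Z_{k,\ell}\subseteq Y_{k,\ell}$ (for example by assigning each $z$ to the smallest index $\ell$ with $z\in Y_{k,\ell}$). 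Using the geodesic property once more, the pairwise distances within each $Y_{k,\ell}$ (centers of disjoint closed $r$-balls) exceed $2r$, and hence so do those within each $Z_{k,\ell}$.

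Finally I would set $\Lambda = [\myBC]\times[\myBC]$ and define
\[
P_{k,\ell} \;=\; \bigsqcup_{z\in Z_{k,\ell}} C_z, \qquad C_z = B_{r/2}(z)\cap V,
\]
taking the $C_z$'s as the \cliqueP{} of $P_{k,\ell}$. The covering $V=\bigcup_{(k,\ell)} P_{k,\ell}$ is immediate since any $v\in V$ lies in some $B_{r/2}(z)$ with $z\in Z_k$, and $z$ is then assigned to some $Z_{k,\ell}$. Condition (2-a) holds by construction. Condition (2-b) follows because for distinct $z_1,z_2\in Z_{k,\ell}$ with $d(z_1,z_2)>2r$, any $u\in C_{z_1}$ and $v\in C_{z_2}$ satisfy $d(u,v)\ge d(z_1,z_2)-r/2-r/2 > r$, whence $d_H(C_{z_1},C_{z_2})>r$. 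The main obstacle—and the only non-routine step—is the key observation in the second paragraph, which crucially exploits the strict $>r$-separation of $Z_k$ inherited from the geodesic structure of $\X$; once this is in hand, the nested construction produces the desired $\myBC^2$-size \myWSP{}.
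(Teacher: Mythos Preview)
Your proposal is correct and follows essentially the same two-level Besicovitch argument as the paper: first apply the covering lemma at radius $r/2$ to $V$, then at radius $r$ to each family of centers, using the separation of centers to conclude that every center reappears as a center at the second level. Your version is in fact slightly more careful than the paper's in explicitly invoking the geodesic structure of $\X$ to obtain the strict inequalities $d(z_1,z_2)>r$ and $>2r$ (the paper simply asserts these from disjointness of closed balls), and your extra step of turning the $Y_{k,\ell}$'s into a genuine partition $Z_{k,\ell}$ of $Z_k$ is harmless but unnecessary, since Definition~\ref{def:wellseparated} allows the $P_i$'s to overlap.
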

\begin{proof}
To prove the lemma, first imagine we grow an $r/2$-ball around each node in $V \subset X$ (the vertex set of $\tG$). 
By Besicovitch covering lemma (Theorem \ref{thm:Besicovitch}), we have a family of $(r/2)$-packings w.r.t. $V$, $\mathcal{B} = \{\mathcal{B}_1, \mathcal{B}_2, \cdots, \mathcal{B}_{\alpha_1}\}$, whose union covers $V$. Here, the constant $\alpha_1$ satisfies $\alpha_1 \le \myBC(\X)$. 

Each $\mathcal{B}_i$ contains a collection of disjoint $r/2$-balls centered at a subset of nodes in $V$, and let $V_i \subseteq V$ denote the centers of these balls. For any $u, v \in V_i$, we have $d(u, v) > r$ as otherwise, $B_{r/2}(u) \cap B_{r/2}(v) \neq \emptyset$ meaning that the $r/2$-balls in $\mathcal{B}_i$ are not all pairwise disjoint. 
Now consider the collection of $r$-balls centered at all nodes in $V_i$. 
Applying Besicovitch covering lemma to $V_i$ again with $\delta = r$, we now obtain a family of $r$-packings w.r.t. $V_i$, denoted by $\mathcal{D}^{(i)} = \mathcal{D}_1^{(i)} \sqcup \cdots \sqcup \mathcal{D}_{\alpha_2^{(i)}}^{(i)}$, whose union covers $V_i$. Here, the constant $\alpha_2^{(i)}$ satisfies $\alpha_2^{(i)}\le \myBC(\X)$ for each $i \in [1, \alpha_1]$. 

Now each $\mathcal{D}_{j}^{(i)}$ contains a set of disjoint $r$-balls centered at a subset of nodes $V_j^{(i)} \subseteq V_i$ of $V_i$. 
First, we claim that $\bigcup_j V_j^{(i)} = V_i$. This is because that $\mathcal{B}_i$ is an $r/2$-packing which implies that $d(u,v) > r$ for any two nodes $u, v \in V_i$. In other words, the $r$-ball around any node from $V_i$ contains no other nodes in $V_i$. 
As the union of $r$-balls $ \mathcal{D}_1^{(i)} \sqcup \cdots \sqcup \mathcal{D}_{c_2^{(i)}}^{(i)}$ covers $V_i$ by construction, it is then necessary that each node $V_i$ has to appear as the center in at least one $\mathcal{D}_j^{(i)}$ (i.e, in $V_j^{(i)}$). Hence $\bigcup_j V_j^{(i)} = V_i$.

Now for each vertex set $V_j^{(i)}$, let $P_j^{(i)} \subseteq V$ denote all points from $V$ contained in the $r/2$-balls centered at points in $V_j^{(i)}$. As $\cup_j V_j^{(i)} = V_i$, we have that $\bigcup_j P_j^{(i)} = \bigcup_{v\in V_i} \left(B_{r/2}(v) \cap V\right)$. It then follows that $\bigcup_{i \in [1, \alpha_1]}\big( \bigcup_{j\in [1, \alpha_2^{(i)}]} P_j^{(i)} \big) = V$ as the union of the family of $r/2$-packings $\mathcal{B} = \{\mathcal{B}_1, \mathcal{B}_2, \cdots, \mathcal{B}_{c_1}\}$ covers all points in $V$ (recall that $\mathcal{B}_i$ is just the set of $r/2$-balls centered at points in $V_i$). 

Clearly, each $P_j^{(i)}$ adapts a \cliqueP{}: Indeed, for each $V_j^{(i)}$, any two nodes in $V_j^{(i)}$ are at least distance $2r$ apart (as the $r$-balls centered at nodes in $V_j^{i}$ are disjoint), meaning that the $r/2$-balls around them are more than $r$ (Hausdorff-)distance away. 
In other words, $\mathcal{P} = \left\{ P_j^{(i)}, i\in [1, \alpha_1], j \in [1, \alpha_2^{(i)}] \right\}$ forms a \myWSP{} of $\tG$.  
Finally, since $\alpha_1, \alpha_2^{(i)} \le \myBC(\X) = \beta$, the cardinality of $\mathcal{P}$ is thus bounded by $\beta^2$. 
\end{proof}

Using Chernoff bound and the {\sf \myassumption{}}, we can also upper-bound the number of points in every $r/2$-ball centered at nodes of $\tG$. 
The straightforward proof is in Appendix \ref{appendix:claim:r/2ballupperbound}. 

\begin{claim}\label{lm:r/2ballupperbound}
Given an $n$-node random geometric graph $G^* = (V, E^*)$ generated from $(\X, \mu, r)$, if {\sf Assumption-A} holds, then with probability at least $1-n^{-5}$, for every $v\in V$, the ball $B_{r/2}(v) \cap V$ contains at most $3\myRC sn$ points. 
\end{claim}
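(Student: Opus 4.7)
The plan is the routine combination of a Chernoff upper tail, applied ball by ball, with a union bound over the $n$ vertices. The key structural observation is that because the points of $V$ are sampled i.i.d.\ from $\mu$, for any $v \in V$, conditioned on its location, the indicator that any other sampled point lies in $B_{r/2}(v)$ is Bernoulli with success probability $\mu(B_{r/2}(v)) \leq \myRC \mys$ by the \myRegular{} part of \myassumption{}. Hence $|B_{r/2}(v) \cap V| - 1$ is stochastically dominated by a $\mathrm{Binomial}(n-1, \myRC \mys)$ random variable $Y$ with mean $\mu_0 := (n-1)\myRC \mys$.

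First I would note that \mydense{} gives $\mys \geq 13 \ln n / n$, and that combining \mydense{} with \myRegular{} forces $\myRC \geq 1$; hence $\mu_0 \geq 12 \myRC \ln n$ for $n$ sufficiently large. A standard multiplicative Chernoff bound in the form $\Prob[Y \geq (1+\delta)\mu_0] \leq \exp(-\delta \mu_0 / 3)$ for $\delta \geq 1$, applied with $\delta = 2$ (and using $3\mu_0 = 3(n-1)\myRC \mys \leq 3 \myRC \mys n$), then yields
\[
\Prob\bigl[\, |B_{r/2}(v) \cap V| > 3 \myRC \mys n \,\bigr] \;\leq\; \Prob\bigl[Y \geq 3\mu_0\bigr] \;\leq\; \exp(-2\mu_0 / 3) \;\leq\; n^{-8 \myRC} \;\leq\; n^{-8}.
\]
Since this bound is uniform in the position of $v$, it continues to hold unconditionally after averaging over the law of $v$.

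Finally I would take a union bound over the $n$ choices of $v \in V$, yielding a total failure probability of at most $n \cdot n^{-8} = n^{-7} \leq n^{-5}$, which proves the claim. The argument has no substantive obstacle; the only minor subtlety is that the center $v$ is itself one of the sampled points rather than a deterministically fixed point, which is handled by the standard ``condition on $v$'s location, then use independence of the other $n-1$ samples'' trick, together with the fact that the Chernoff bound is uniform in the underlying Bernoulli success parameter.
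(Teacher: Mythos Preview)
Your proposal is correct and follows essentially the same route as the paper: condition on the location of $v$, bound the number of the remaining $n-1$ samples falling in $B_{r/2}(v)$ via a Chernoff upper-tail estimate (using \myRegular{} to cap the Bernoulli parameter and \mydense{} to make the mean $\Omega(\ln n)$), then union bound over $v\in V$. The paper uses the threshold $\tfrac{5}{2}\myRC\mys n$ (i.e.\ $\delta=3/2$) to get a per-vertex failure of $n^{-6}$ and absorbs the center point by noting $\tfrac{5}{2}\myRC\mys n+1<3\myRC\mys n$, whereas you take $\delta=2$ directly; the only minor imprecision in your write-up is the step $\Prob[|B_{r/2}(v)\cap V|>3\myRC\mys n]\le\Prob[Y\ge 3\mu_0]$, which needs a one-line adjustment for the ``$+1$'' contributed by $v$ itself (e.g.\ take $\delta$ slightly below $2$, or mimic the paper's trick), but this does not affect the argument.
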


We now state one of our main theorems, which relates the edge clique number for \mybadE{}s with the insertion probability. 
To simplify notations, we call a clique containing an edge $(u,v)$ a \emph{$uv$-clique}.  

\begin{theorem}\label{thm:insertiononlybadedge}
Let $\tG= G^*_{\X}(r)$ be an $n$-node random geometric graph generated by $(\X, \mu, r)$ where $\mu$ is an $L$-doubling measure supported on $X$. Suppose {\sf \myassumption{}} holds. Let $\Ghat = \Ghat^{q}$ denote the graph obtained by inserting each edge not in $\tG$ independently with probability $q$. Then there exist constants $c_1, c_2, c_3 > 0$ which depend on the doubling constant $L$ of $\mu$, the \Bconst{} $\myBC(\X)$, and the regularity constant $\myRC$, such that for any $\aK = \aK(n)$ with $\aK\rightarrow \infty$ as $n \rightarrow \infty$, 
with high probability, $\omega_{u,v}(\Ghat) <\aK$ for any \mybadE{} $(u,v)$ in $\Ghat$, as long as $q$ satisfies 
\begin{align}\label{eqn:qbound}
q ~~\leq~~ \min\left\{~c_1,~~ c_2 \cdot \left(\frac{1}{n}\right)^{c_3/\aK}\cdot \frac{\aK}{\mys n} \right\}.
\end{align}
Note that this statement holds for any $(u,v)$ with $d(u,v) > 3r$, as such an edge $(u,v)$ must be a \mybadE{}. 
\end{theorem}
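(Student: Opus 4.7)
The plan is to first condition on the high-probability structural event (from Claim~\ref{lm:r/2ballupperbound}) that every $r/2$-ball around a point of $V$ contains at most $B := 3\myRC\mys n$ points of $V$, and then to exploit the well-separated clique-partitions family $\mathcal{P}=\{P_1,\dots,P_M\}$ from Lemma~\ref{lem:BCLdoubling} (with $M\leq \myBC(\X)^2$) to reduce bounding the edge-clique number to a first-moment computation inside each $\Ghat\cap P_i$. Assume that a bad edge $(u,v)$ is contained in a clique $C$ of size $\aK$. Since $|C\setminus\{u,v\}| = \aK-2$, the pigeonhole principle applied to the $M$ parts produces some $i^*$ with $|C\cap P_{i^*}|\geq s := \lceil(\aK-2)/M\rceil$; write $S = (C\cap P_{i^*})\setminus\{u,v\}$. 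A union bound over the at most $\binom{n}{2}$ bad edges and $M$ parts reduces the task to bounding the probability of existence of a single such triple $(u,v,S)$.

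The key observation I will use to force many $q$-insertions is a direct consequence of the bad-edge condition: no vertex $w$ may lie in $N_{\tG}(u)\cap N_{\tG}(v)$, because such a $w$ would give $d(w,w)=0\le r$, contradicting the definition of a bad edge. Hence for every $w\in S$ at least one of $(u,w),(v,w)$ must be $q$-inserted, and the edge $(u,v)$ itself must also be $q$-inserted. Separately, the well-separation property inside $P_{i^*}$ forces every pair of vertices in two distinct cliques $C_{j_1}^{(i^*)},C_{j_2}^{(i^*)}$ to be absent from $\tG$, so every such ``cross'' edge inside $S$ is present in $\Ghat$ only via an independent $q$-insertion (edges inside a single $C_j^{(i^*)}$ are present deterministically). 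Writing $a_j := |S\cap C_j^{(i^*)}|\leq B$, all these $q$-insertions are independent and yield
\[
\Prob\bigl[\,S\cup\{u,v\}\text{ is a clique in }\Ghat\,\bigr]\;\leq\;q^{\,1\,+\,s\,+\,\binom{s}{2}-\sum_j\binom{a_j}{2}}.
\]

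Summing over all subsets $S$ of $P_{i^*}$ of size $s$ with a given profile $(a_j)$ (of which there are at most $\prod_j\binom{B}{a_j}$) and over the $(u,v,i^*)$-triples, the expected number of bad $uv$-cliques of size $\aK$ is at most
\[
n^2\,M\,q^{\,s+1}\!\!\sum_{\substack{(a_j)\,:\,\sum_j a_j = s\\ a_j\leq B}}\prod_j\binom{B}{a_j}\,q^{\binom{s}{2}-\sum_j\binom{a_j}{2}}.
\]
The main technical obstacle will be controlling this inner sum, which is dominated by the ``free-clique'' configurations where $S$ lies entirely inside a single $C_j^{(i^*)}$: for those $\binom{s}{2}-\sum_j\binom{a_j}{2}=0$, so no savings from cross edges are available and all the randomness budget must come from the $q^{s+1}$ factor forced by the $(u,v)$-insertion plus the $s$ bad-edge-driven insertions. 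The free-clique contribution is at most $m_{i^*}\binom{B}{s}q^{s+1}\leq m_{i^*}\,q\,(eBq/s)^s$; configurations that span $t\geq 2$ distinct cliques pay an additional factor of $q^{\Omega((t-1)s)}$ from cross edges and, after a modest counting estimate using $\binom{tB}{s}\leq (etB/s)^s$, can be absorbed into a geometric sum in $t$ dominated by the free-clique term.

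To close the argument, I plug in $B = \Theta(\mys n)$ and $s\geq \aK/(2M) = \Omega(\aK)$. The hypothesis $q\leq c_2\,n^{-c_3/\aK}\cdot \aK/(\mys n)$ then yields $eBq/s\leq C_0\, n^{-c_3/\aK}$ for a constant $C_0 = C_0(\myBC(\X),\myRC)$; choosing $c_2$ small enough so that $C_0\leq 1$ and $c_3$ large enough (depending on $\myBC(\X)$ and $\myRC$), the quantity $(eBq/s)^s$ is at most $n^{-c_3/(2M)}$, which for $c_3/M$ sufficiently large overwhelms the polynomial prefactor $n^2\,M\,m_{i^*}\,q$ (using the crude bound $m_{i^*}\leq n$). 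The total expectation then becomes $o(1)$, and Markov's inequality delivers the desired high-probability bound $\omega_{u,v}(\Ghat)<\aK$ simultaneously for every bad edge $(u,v)$ in $\Ghat$.
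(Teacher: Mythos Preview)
Your approach is correct and takes a genuinely different (and in one respect cleaner) route than the paper. The paper first splits the vertex set into $\tilde{A}_{uv}$ (points far from both $u$ and $v$) and $B_{uv}=B^V_r(u)\cup B^V_r(v)$, and then handles the two cases separately: in Case~(A) every one of the $2k$ edges from $\{u,v\}$ to $S$ is automatically a $q$-insertion, giving a $q^{2k}$ prefactor; Case~(B) reduces to a two-clique model. You avoid this split entirely by invoking the bad-edge definition directly---no vertex can lie in $N_{\tG}(u)\cap N_{\tG}(v)$, so each $w\in S$ contributes at least one forced $q$-insertion on its edges to $\{u,v\}$. This yields only $q^{s+1}$ rather than $q^{2k}$, but since the free-clique term already forces the same shape of constraint $q\lesssim (K/\mys n)\,n^{-c/\aK}$, the loss is harmless for the form of the theorem. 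The payoff is that you do not need a separate Case~(B).

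For the profile sum you organize by the number $t$ of cliques that $S$ meets, whereas the paper (Lemma~\ref{lem:insertioncaseA}) organizes by $x_{\max}=\max_j a_j$ and does a three-range analysis. Your ``geometric sum in $t$'' is valid, but be aware that it is not quite a one-liner: the bound $T_t\le\binom{m}{t}\binom{tB}{s}q^{(t-1)s/2}$ produces both an $m^{t}$ factor and a $t^{s}$ factor (from $\binom{tB}{s}$ versus $\binom{B}{s}$), and you must split the $q^{(t-1)s/2}$ savings between them---e.g.\ use $q\le c_1$ small to force $t\,q^{(t-1)/4}\le 1$ (killing $t^{s}$) and use $q\le n^{-c_3/\aK}$ with $c_3$ large enough relative to $\myBC^2$ to force $m\,q^{s/4}<1$ (killing $m^{t-1}$). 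With that bookkeeping the sum over $t\ge 2$ is indeed dominated by the free-clique term, and the rest of your closing paragraph goes through. Finally, for $\aK$ exceeding a constant multiple of $\mys n$ you should invoke the monotonicity $\Prob[\omega_{u,v}\ge\aK]\le\Prob[\omega_{u,v}\ge\aK_0]$ with $\aK_0=\Theta(\mys n)$, exactly as the paper does to justify the separate constant $c_1$ in the statement.
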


\myparagraph{Remark. } 
To illustrate the above theorem, consider for example when $\aK = \Theta(sn)$. Then the theorem says that there exists constant $c'$ such that if $q < c'$, then w.h.p. $\omega_{u,v} < \aK$ (thus $\omega_{u,v} = O(sn)$) for any \mybadE{} $(u,v)$. 
Now consider when $q = o(1)$. Then the theorem implies that w.h.p. the edge-clique number for any \mybadE{} is at most $\aK = o(sn)$ \footnote{To see this, note that the term $\left(\frac{1}{n}\right)^{\frac{c_3}{\aK}}\cdot \frac{\aK}{\mys n}$ is increasing as $\aK$ increases, and combine this with the fact that $\left(\frac{1}{n}\right)^{\frac{c_3}{\aK}}\cdot \frac{\aK}{\mys n} = \Theta(1)$ if $\aK = \Theta(sn)$.}. 
This is qualitatively different from the edge-clique number for a \mygoodE{} for the case $q=o(1)$, which is $\Omega(sn)$ as shown in Theorem \ref{thm:insertiononlygoodedge}. 
By reducing this insertion probability $q$, this gap can be made {\bf larger and larger}.   

\myparagraph{Proof of Theorem \ref{thm:insertiononlybadedge}.} 
Given any node $y$, let $B^V_r(y) \subseteq V$ denote $B_r(y) \cap V$.  
Now consider a \mybadE{} $(u,v)$. Set $A_{uv} = \left\{w \in V | w \notin B_r(u)\cup B_r(v)\right\}$ and $B_{uv} = \{w \in V | w \in B^V_r(u)\cup B^V_r(v)\}$. Denote $\tilde{A}_{uv} = A_{uv} \cup \{u\} \cup \{v\}$; 
easy to check that $V = \tilde{A}_{uv} \cup {B}_{uv}$. 

Let $G|_S$ denote the subgraph of $G$ spanned by a subset $S$ of its vertices. 
Given any set $\aC$, let $\aC|_S  = \aC \cap S$ be the restriction of $\aC$ to another set $S$. 
Now consider a subset of vertices $\aC \subseteq V$: obviously, $\aC = \aC|_{\tilde{A}_{uv}}\cup \aC|_{B_{uv}}$. 

Hence by the pigeonhole principle and the union bound, we have: 
\begin{align}\label{eqn:mainresult}
&\myprob\left[ \Ghat ~\text{has a}~uv\text{-clique of size} \ge \aK \right] \nonumber \\
\le ~~ &\myprob\left[ \Ghat|_{\tilde{A}_{uv}}~\text {has a}~uv\text{-clique of size} \ge \frac{\aK}{2}\right] ~~+~~ \myprob\left[ \Ghat|_{B_{uv}}~\text {has a}~uv\text{-clique of size} \ge \frac{\aK}{2}\right] 
\end{align}
Next, we will bound the two terms on the right hand side of Eqn. (\ref{eqn:mainresult}) separately in Case (A) and Case (B) below. 

\begin{figure}[h]
\centering
\begin{tabular}{ccc}
\includegraphics[height=3.5cm]{./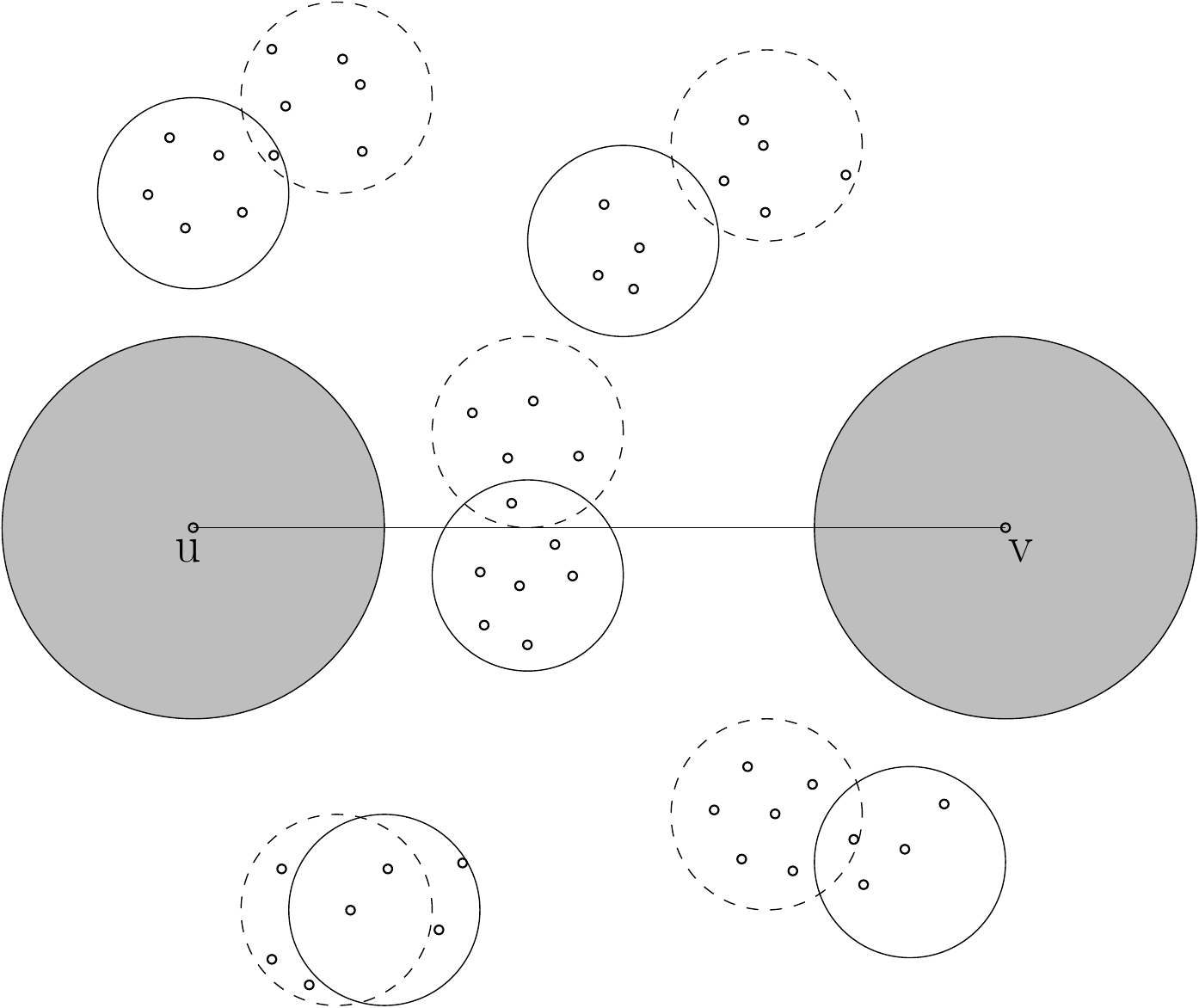} &\hspace*{0.3in}  & \begin{minipage}[c]{6cm}
\vspace*{-3.5cm}
\includegraphics[height=2cm]{./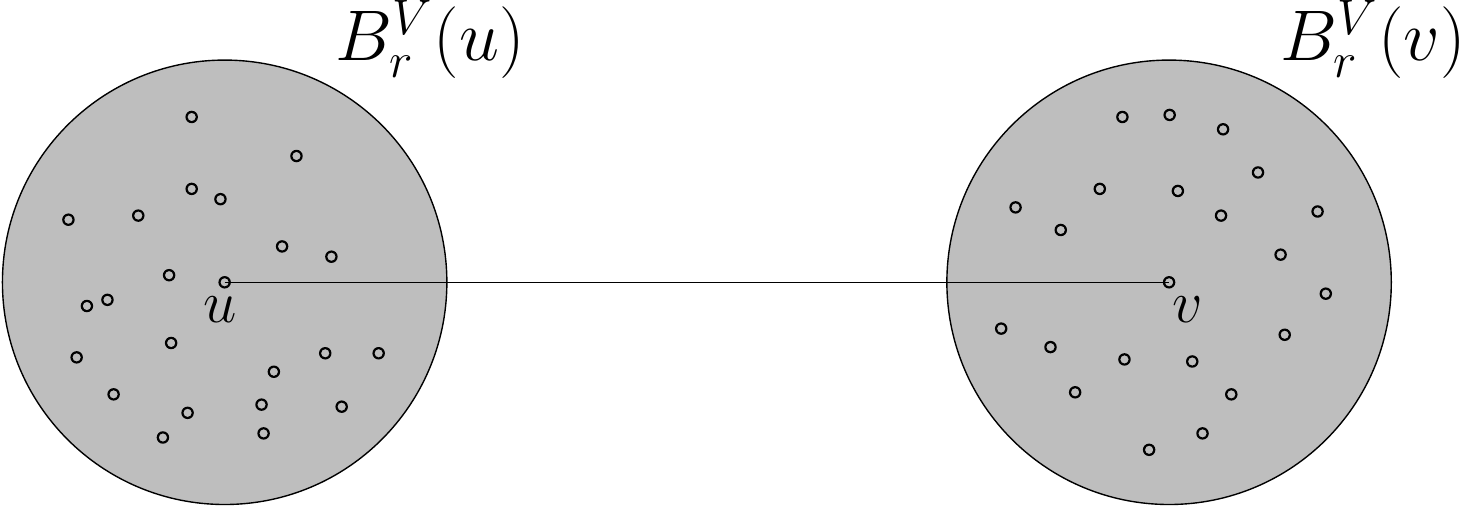}
\end{minipage}  \\
(a) &  & \hspace*{0in}(b)
\end{tabular}
\vspace*{-0.15in}
\caption{(a) A well-separated clique partition $\mathcal{P} = \{P_1, P_2\}$ of $A_{uv}$ --- points in the solid ball are $P_1$, and those in dashed ball are $P_2$. (b) Points in $B_{uv}$. }
\label{fig:twocases}
\end{figure}

\myparagraph{Case (A): bounding the first term in Eqn. (\ref{eqn:mainresult}).} 
We apply Lemma \ref{lem:BCLdoubling} for points in $A_{uv}$.   
This gives us a \myWSP{} $\mathcal{P} = \{P_i\}_{i\in \Lambda}$ of $A_{uv}$ with $|\Lambda| \le \myBC^2$ being a constant. 
See Figure \ref{fig:twocases} (a).
Augment each $P_i$ to $\tilde{P}_i = P_i \cup \{u\} \cup \{v\}$. 
Suppose there is a clique $C$ in $\Ghat|_{\tilde{A}_{uv}}$, then as $\bigcup_i \tilde{P}_i = \tilde{A}_{uv}$, we have $C = \bigcup_{i\in \Lambda} C|_{\tilde{P}_i}$, implying that $|C| \le \sum_{i\in \Lambda} \left| C|_{\tilde{P}_i}\right|$. 
Hence by pigeonhole principle and the union bound, we have: 
\begin{align}\label{eqn:caseAunionbound}
&\myprob\left[ \Ghat|_{\tilde{A}_{uv}}~\text {has a}~uv\text{-clique of size} \ge \frac{\aK}{2}\right] 
~~\leq~~ \sum_{i=1}^{|\Lambda|} ~\myprob\left[\Ghat|_{\tilde{P}_{i}}~\text {has a}~uv\text{-clique of size} \ge \frac{\aK}{2|\Lambda|}\right]
\end{align}    

Now for arbitrary $i \in \Lambda$, consider  $\Ghat|_{\tilde{P}_i}$, the induced subgraph of $\Ghat$ spanned by vertices in $\tilde{P}_i$. Note, $\Ghat|_{\tilde{P}_i}$ can be viewed as generated by inserting each edge not in $\tG|_{\tilde{P}_i} \cup \{uv\}$ to it with probability $q$. 
Recall from Definition \ref{def:wellseparated} that each $P_i$ adapts a clique decomposition $C_1^{(i)}\sqcup \dots \sqcup C_{m_i}^{(i)}$, where every $C_j^{(i)}$ is contained in an $r/2$-ball, and all such balls are $r$-separated (w.r.t Hausdorff distance).

Now fix any $i \in \Lambda$. For simplicity of the argument below, set $m = m_i$, and let $N_j = \left|C_j^{(i)}\right|$ denote the number of points in the $j$-th cluster $C_j^{(i)}$. 
Note that obviously, $m \le |P_i| \le |V|=n$ for any $i\in \Lambda$.
Set $N_{\max} = 3 \rho \mys n$. By Lemma \ref{lm:r/2ballupperbound}, we know that, with high probability (at least $1 - n^{-5}$), $N_j \leq N_{\max}$ for all $j$ in $[1, m]$. 

Denote $\aF$ to be the event that ``for every $v \in V$, the ball $B_{r/2}(v)\cap V$ contains at most $N_{\max}$ points''; and $\aF^c$ denotes the complement event of $\aF$. 
Observe that the induced subgraph $\Ghat|_{\tilde{P}_i}$ consists of a set of cliques (each clique is spanned by some $C_j^{(i)}$), $u$, $v$, edge $uv$, as well as newly inserted edge between them with insertion probability $q$ (see Figure \ref{fig:illustration_case_A}).

\begin{figure}[htbp]
\centering
\includegraphics[height=6cm]{./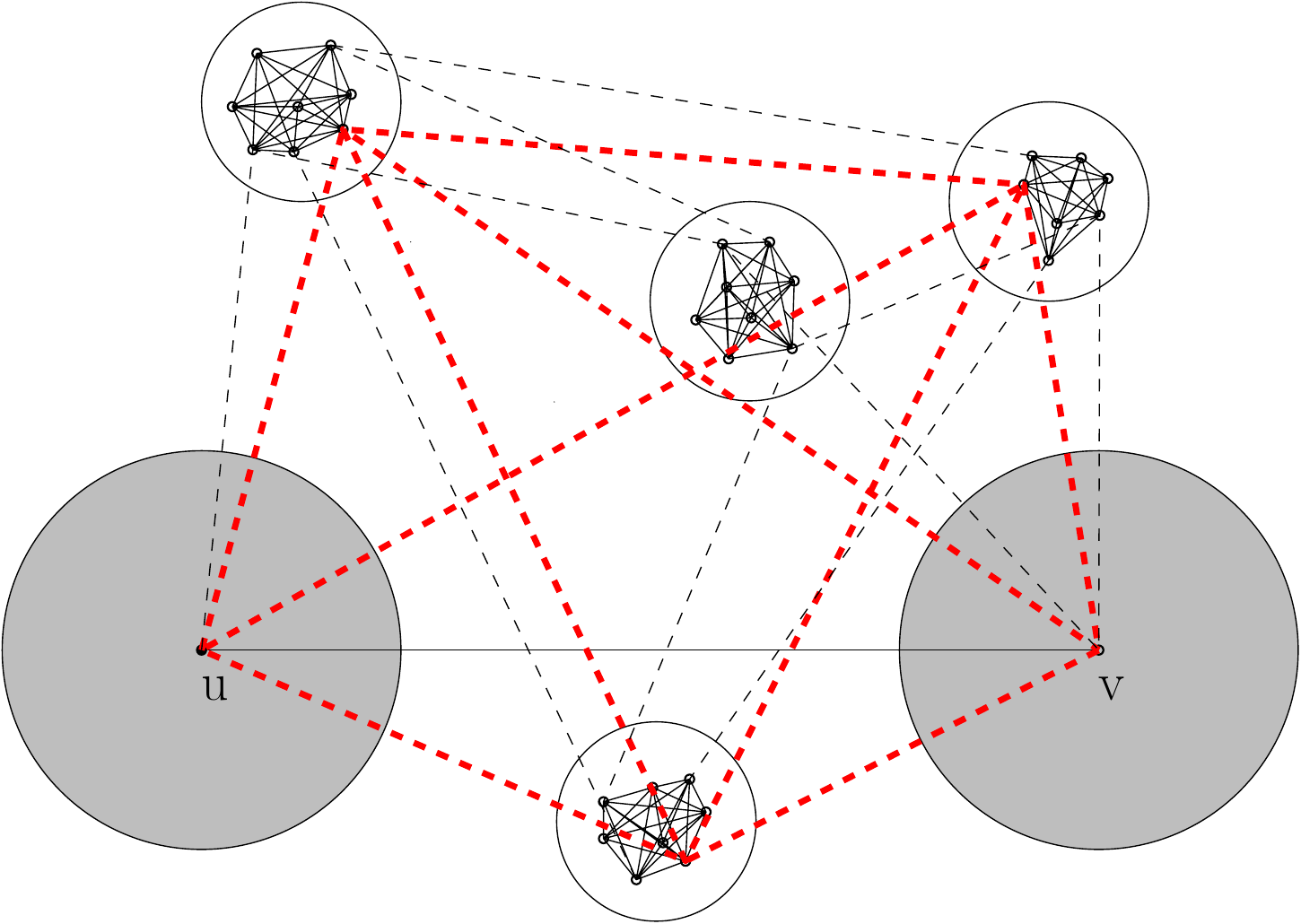}
\caption{The red dashed lines and the edge $uv$ form a possible $K_5$ $uv-$clique in some well-separated clique partition $P_i$. The points in the small balls are the nodes falling in $r/2-$balls. All the dashed lines are the possible inserted edges (all non-existing edges are inserted independently with probability $q$).}
\label{fig:illustration_case_A}
\end{figure}

Now set $k := \floor*{\frac{\aK}{2|\Lambda|}}-2$. For every set $S$ of $k+2$ vertices in this graph $\Ghat|_{\tilde{P}_i}$, let $A_S$ be the event ``\emph{$S$ is a $uv$-clique in $\Ghat|_{\tilde{P}_i}$}'' and $\aX_S$ its indicator random variable. Set 
\begin{align*}
\aX = \sum\limits_{|S|=k+2}\aX_S
\end{align*}
and note that $\aX$ is the number of $uv$-cliques of size $(k+2)$ in $\Ghat|_{\tilde{P}_i}$. 
It follows from Markov inequality that:  
\begin{align}\label{eqn:probGPi}
\myprob\left[\Ghat|_{\tilde{P}_{i}}~\text {has a}~uv\text{-clique of size} \ge  k+2\right] = \myprob[\aX > 0] &=\myprob[\aX>0 \mid \aF] \cdot \myprob[\aF] + \myprob[\aX>0 \mid \aF^c] \cdot \myprob[\aF^c] \nonumber\\
&\leq \myprob[\aX>0 \mid \aF] + \myprob[\aF^c] \leq \myE[\aX \mid \aF] + n^{-5}. 
\end{align}
On the other hand, using linearity of expectation, we have:
\begin{align}\label{expectationX}
\myE[\aX \mid \aF] ~~=~~ \sum\limits_{|S|=k+2}\myE[\aX_S\mid \aF] ~~=~~ q^{2k}\sum\limits_{\substack{x_1 + x_2 + \cdots + x_m = k \\ 0 \leq x_i \leq N_i}} \binom{N_1}{x_1} \binom{N_2}{x_2}\cdots \binom{N_m}{x_m}q^{(k^2-\sum_{i=1}^{m}x_i^2)/2}\nonumber\\
\leq q^{2k}\sum\limits_{\substack{x_1 + x_2 + \cdots + x_m = k \\ 0 \leq x_i \leq N_{\max}}} \binom{N_{\max}}{x_1} \binom{N_{\max}}{x_2}\cdots \binom{N_{\max}}{x_m}q^{(k^2-\sum_{i=1}^{m}x_i^2)/2}
\end{align}
  
To estimate this quantity, we have the following lemma:  
\begin{lemma}\label{lem:insertioncaseA}
There exists a constant $c > 0$ depending on $\myBC$ and $\rho$ such that for any constant $\epsilon>0$, if $\aK \leq c \mys n$ and
\begin{align}\label{eqn:qbound1}
q ~~\leq~~ \min \left\{\left(\dfrac{k!}{n^{\epsilon}N_{\max}^km}\right)^{1/2k},~~ \left(\dfrac{k!}{k^2n^{\epsilon}N_{\max}^km^2}\right)^{1/k},~~ \left(\dfrac{k!}{n^{\epsilon}m^kN_{\max}^{k}}\right)^{4/k^2}\right\}
\end{align}
then we have that $\myE[\aX \mid \aF] = O(n^{-\epsilon})$.

Specifically, we can set $\epsilon = 3$ (this choice will be necessary later to apply union bound) and obtain $\myE[\aX\mid \aF] = O(n^{-3})$. 
\end{lemma}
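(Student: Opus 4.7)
The plan is to control the combinatorial sum from Eqn.~(\ref{expectationX}) by splitting it into a small number of regimes indexed by how concentrated the composition $(x_1,\dots,x_m)$ is across the clusters, and then to verify that each regime is killed by one of the three bounds on $q$ in Eqn.~(\ref{eqn:qbound1}). First I would apply the standard bound $\binom{N_{\max}}{x_i}\leq N_{\max}^{x_i}/x_i!$ so that
\[
\prod_{i=1}^{m}\binom{N_{\max}}{x_i}\;\leq\;\frac{N_{\max}^k}{\prod_i x_i!}\;=\;\frac{N_{\max}^k}{k!}\binom{k}{x_1,\dots,x_m}.
\]
This replaces the product of binomials by a multinomial coefficient times the universal factor $N_{\max}^k/k!$, and nicely explains the $k!$ appearing in all three of the $q$-bounds in Eqn.~(\ref{eqn:qbound1}).

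Next I would group the compositions by their support size $t:=|\{i:x_i>0\}|\in\{1,\dots,\min(m,k)\}$. Since $q\leq 1$, the factor $q^{(k^2-\sum x_i^2)/2}$ is largest when $\sum x_i^2$ is largest; among positive compositions of $k$ into $t$ parts, $\sum x_i^2$ is maximized by the ``one-big, rest-ones'' composition $(k-t+1,1,\dots,1)$, giving the uniform lower bound
\[
\frac{k^2-\sum_i x_i^2}{2}\;\geq\;\frac{(t-1)(2k-t)}{2}.
\]
Together with $\binom{m}{t}\leq m^t$ for the cluster choice and $\binom{k-1}{t-1}\leq k^{t-1}$ for the composition count, this yields a clean per-$t$ upper bound on the contribution to $\myE[\aX\mid\aF]$. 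I would then match the three qualitatively distinct regimes ($t=1$, small $t\geq 2$, and $t$ near $k$) to the three $q$-bounds: the $t=1$ term (all $k$ vertices in one cluster, $q$-exponent only $2k$ from the edges to $u$ and $v$) is controlled by the first bound; the small-$t\geq 2$ terms (very unbalanced, $\Theta(k)$ cross-cluster edges) by the second; and the balanced terms with $t$ near $k$ ($\Theta(k^2)$ cross-cluster edges) by the third. Summing at most $\min(m,k)$ regimes and choosing $\epsilon=3$ then gives $\myE[\aX\mid\aF]=O(n^{-3})$, with the constants that depend on $\myBC$ and $\myRC$ absorbed in the slack.

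The technical heart of the argument is the extreme sensitivity of the $q$-exponent $(k^2-\sum x_i^2)/2$ to the shape of the composition: the most-concentrated composition $(k,0,\dots,0)$ produces exponent $0$, while the fully-spread composition $(1,1,\dots,1)$ produces $k(k-1)/2$. A single bound on $q$ cannot simultaneously control all compositions, which forces the three-regime structure of Eqn.~(\ref{eqn:qbound1}). A secondary subtlety is that the inequality $\sum_i x_i^2\leq (k-t+1)^2+(t-1)$ is tight only for the most unbalanced composition, so it must be used \emph{after} grouping by $t$, not on the full sum. Finally, the hypothesis $\aK\leq c\mys n$ combined with $N_{\max}=3\myRC\mys n$ ensures $k\leq N_{\max}$, so the binomial $\binom{N_{\max}}{k}$ appearing in the $t=1$ term is non-degenerate and the whole analysis stays in a regime where the multinomial bound in Step~1 is not vacuous.
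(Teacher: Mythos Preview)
Your decomposition by support size $t$ is a natural idea, but as written it has a genuine gap. Within each stratum $t\geq 2$ you pair the \emph{minimum} $q$-exponent $(t-1)(2k-t)/2$, achieved only at the extremely unbalanced composition $(k-t+1,1,\dots,1)$, with the \emph{full} multinomial sum over that stratum. But the multinomial sum $\sum_{\substack{x_1+\cdots+x_m=k\\|\{i:x_i>0\}|=t}}\binom{k}{x_1,\dots,x_m}=\binom{m}{t}\cdot|\mathrm{Surj}([k],[t])|$ is dominated by \emph{balanced} compositions, which actually carry a much larger $q$-exponent. Your $\binom{k-1}{t-1}$ is only the number of ordered compositions and does not absorb the multinomial weight. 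Concretely, for $t=2$ the multinomial sum is $2^k-2$ while your $q$-exponent lower bound is only $k-1$; none of the three bounds in Eqn.~(\ref{eqn:qbound1}) compensates for the factor $2^k$ when $k$ is large (e.g.\ $k=\Theta(sn)$), so the per-$t$ contribution is \emph{not} $O(n^{-\epsilon})$.

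The paper sidesteps this mismatch by stratifying on $x_{\max}=\max_i x_i$ rather than on the support size. For $x_{\max}=j\in[\lceil(k+1)/2\rceil,k-1]$ one isolates the largest part and uses Vandermonde to sum the remaining parts into $\binom{(m-1)N_{\max}}{k-j}$, with the tight exponent bound $(k^2-\sum x_i^2)/2\geq j(k-j)$ coming from $\sum_{i\neq \mathrm{argmax}} x_i^2\leq(k-j)^2$. For $x_{\max}\leq\lceil(k-1)/2\rceil$ the exponent is uniformly $\geq (k-1)^2/4$ and the combinatorial factor is simply $\binom{mN_{\max}}{k}$. In both ranges the $q$-exponent lower bound and the combinatorial count are simultaneously tight for the \emph{same} composition, so no blow-up occurs. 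Your stratification could be rescued by further refining each $t$-stratum according to $x_{\max}$, but at that point you are essentially reproducing the paper's argument.
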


The proof of this lemma is rather technical, and can be found in Appendix \ref{appendix:lem:insertioncaseA}. 

Furthermore, $|\Lambda| \le \myBC^2$ (which is a constant) and $m = |P_i| \le |V| = n$. 
One can then verify that there exist constants $c^a_2$ and $c^a_3$ (which depend on the doubling constant $L$ of $\mu$, the \Bconst{} $\myBC$, and the regularity constant $\myRC$), such that if 
\begin{align*}
q \le c^a_2 \cdot \left(\frac{1}{n}\right)^{c^a_3/\aK}\cdot \frac{\aK}{\mys n},
\end{align*}
then the conditions in Eqn. (\ref{eqn:qbound1}) will hold (the simple proof of this can be found in Appendix \ref{appendix:constantsc2c3}). 
Thus, by Lemma \ref{lem:insertioncaseA} (with $\epsilon$ set to be $3$) and Eqn. (\ref{eqn:probGPi}), we know that
\begin{align}
\text{If}~~\aK \leq c \mys n ~&\text{and}~~ q \leq c^a_2 \cdot \left(\frac{1}{n}\right)^{c^a_3/\aK}\cdot \frac{\aK}{\mys n}, \nonumber \\
&\text{then}~~ \forall i \in \Lambda, \myprob\left[\Ghat|_{\tilde{P}_{i}}~\text {has a}~uv\text{-clique of size} \ge  k+2\right] = O(n^{-3}). \label{eqn:qone}
\end{align}

On the other hand, note that
\begin{align*}
&\myprob\left[\Ghat|_{\tilde{P}_{i}}~\text {has a}~uv\text{-clique of size} \ge \frac{\aK}{2|\Lambda|}\right] ~~=~~ \myprob\left[\Ghat|_{\tilde{P}_{i}}~\text {has a}~uv\text{-clique of size} \ge  k+2\right]
\end{align*}

As $|\Lambda|$ is a constant, by Eqn. (\ref{eqn:caseAunionbound}), we obtain that
\begin{align}
\text{if}~~\forall i \in &\Lambda,  \myprob\left[\Ghat|_{\tilde{P}_{i}}~\text {has a}~uv\text{-clique of size} \ge \frac{\aK}{2|\Lambda|}\right] = O(n^{-3}) \text{, ~then}~ \nonumber \\
&\myprob\left[\Ghat|_{\tilde{A}_{uv}}~\text {has a}~uv\text{-clique of size} \ge \frac{\aK}{2}\right] = O(n^{-3}) \label{eqn:qtwo}
\end{align}

It then follows from Eqn. (\ref{eqn:qone}) and (\ref{eqn:qtwo}) that 
\begin{align}\label{eqn:caseAalmostfinal}
\text{If}~ \aK \leq c \mys n ~\text{and}~ q \leq c^a_2 \cdot \left(\frac{1}{n}\right)^{c^a_3/\aK}\cdot \frac{\aK}{\mys n}, ~~\text{then}~\myprob\left[ \Ghat|_{\tilde{A}_{uv}}~\text {has a}~uv\text{-clique of size} \ge \frac{\aK}{2}\right] = O(n^{-3}). 
\end{align}
Finally, suppose $\aK > \aK_0 = c sn$. Set 
$$c_1^a ~~=~~ c^a_2 \cdot \left(\frac{1}{n}\right)^{c^a_3/(c \ln n)}\cdot \frac{\aK_0}{\mys n} ~~\le~~ c^a_2 \cdot \left(\frac{1}{n}\right)^{c^a_3/\aK_0}\cdot \frac{\aK_0}{\mys n} ,$$ 
where the inequality holds as by \myassumption{} $sn > \ln n$. 
Plugging $\aK_0 = csn$ to the definition of $c_1^a$, it is then easy to see that $c_1^a$ is a positive constant. 
Using Eqn (\ref{eqn:caseAalmostfinal}), we know that if $q \le c_1^a$ and $\aK > \aK_0 = csn$, then $$\myprob\left[ \Ghat|_{\tilde{A}_{uv}}~\text {has a}~uv\text{-clique of size} \ge \frac{\aK}{2}\right] \le \myprob\left[ \Ghat|_{\tilde{A}_{uv}}~\text {has a}~uv\text{-clique of size} \ge \frac{\aK_0}{2}\right] = O(n^{-3}). $$
Combining this with Eqn. (\ref{eqn:caseAalmostfinal}), we thus obtain that: 
\begin{align}\label{eqn:caseAfinal}
\text{If}~ q \leq \min\left\{ c_1^a, ~c^a_2 \cdot \left(\frac{1}{n}\right)^{c^a_3/\aK}\cdot \frac{\aK}{\mys n} \right\}, ~~\text{then}~\myprob\left[ \Ghat|_{\tilde{A}_{uv}}~\text {has a}~uv\text{-clique of size} \ge \frac{\aK}{2}\right] = O(n^{-3}). 
\end{align}

\noindent{\bf Case (B): bounding the second term in Eqn. (\ref{eqn:mainresult}).} 
Recall that $B_{uv} = \{w\in V \mid w\in B_r^V(u) \cup B^V_r(v)\}$ (see Figure \ref{fig:twocases} (b)). 
Imagine we now build the following random graph $\tilde{G}_{uv}^{local} = (\tilde{V}, \tilde{E})$: The vertex set $\tilde{V}$ is simply ${B}_{uv}$. To construct the edge set $\tilde{E}$, first, add all edges in the clique spanned by nodes in $B^V_r(u)$ as well as edges in the clique spanned by nodes in $B^V_r(v)$ into $\tilde{E}$. Next, add edge $uv$ to $\tilde{E}$. Finally, insert each crossing edge $xy$ with $x\in B^V_r(u)$ and $y\in B^V_r(v)$ with probability $q$. 

On the other hand, consider the graph $\Ghat|_{{B}_{uv}}$, the induced subgraph of $\Ghat$ spanned by vertices in ${B}_{uv}$. We can imagine that the graph $\Ghat|_{B_{uv}}$ was produced by first taking the induced subgraph $\tG|_{{B}_{uv}}$, and then insert crossing edges $xy$ each with probability $q$. Since $uv$ is a \mybadE{}, by Definition \ref{def:goodedge}, we know that there are no edges between nodes in $B^V_r(u)$ and $B^V_r(v)$ in the random geometric graph $\tG$. In other words, edges in $\tG|_{B_{uv}}$ will be a subset of the two cliques spanned by $B^V_r(u)$ and $B^V_r(v)$, respectively. 
Hence we obtain: 
\begin{align}\label{eqn:insertiononlycaseBfirststep}
&\myprob\left[\Ghat|_{B_{uv}}~\text {has a}~uv\text{-clique of size} \ge \frac{\aK}{2} \right] 
~~\leq~~ \myprob\left[\tilde{G}_{uv}^{local} ~\text {has a}~uv\text{-clique of size} \ge \frac{\aK}{2}\right]
\end{align}    

Using a similar argument as in case (A) (the missing details can be found in Appendix \ref{appendix:detailscaseBinsertiononly}), we have that there exist constants $c_1^b,c_2^b,c_3^b >0$ which depend on the doubling constant $L$, the Besicovitch constant $\myBC$ and the regularity constant $\rho$ such that
\begin{align*}
\text{If} ~ q \leq ~ \min \left\{c_1^b, c^b_2 \cdot \left(\frac{1}{n}\right)^{c^b_3/\aK}\cdot \frac{\aK}{\mys n}\right\} \text{, then} ~ \myprob \left[\tilde{G}_{uv}^{local} ~\text {has a}~uv\text{-clique of size} \ge \frac{\aK}{2} \right] = O(n^{-3})
\end{align*}

Thus, combining this with Eqn. (\ref{eqn:insertiononlycaseBfirststep}), (\ref{eqn:caseAfinal}) and (\ref{eqn:mainresult}), there exist constants $c_1 = \min\{c_1^a, c_1^b\}$, $c_2 = \min\{c^{a}_2, c^{b}_2\}$ and $c_3 = \max\{c^{a}_1, c^{b}_3\}$ such that if $q$ satisfies conditions in Eqn. (\ref{eqn:qbound}), then 
\begin{align*}
\myprob\left[\Ghat ~\text{has a}~uv\text{-clique of size} \ge \aK\right] = O(n^{-3})
\end{align*}

Finally, by applying the union bound, this means:
\begin{align*}
\myprob\left[ \text{for every \mybadE{} } (u,v) \text{, } \Ghat ~\text{has a}~uv\text{-clique of size} \ge \aK\right]  = O(n^{-1})
\end{align*}
Thus with high probability, we have that for every \mybadE{} $(u,v)$, $\omega_{u,v}(\Ghat) < \aK$ as long as Eqn. (\ref{eqn:qbound}) holds. Since each $(u,v)$ with $d(u,v) > 3r$ is a \mybadE{}, the statement is also true for those edges.
This finishes the proof of Theorem \ref{thm:insertiononlybadedge}. 


\subsection{Edge clique numbers for the deletion-only case}
\label{subsec:deletion}

We now consider the deletion-only case, where we assume that the input graph $\Ghat = \Ghat^p$ is obtained by deleting each edge in the random geometric graph $G^* = G^*_\X(r)$ independently with probability $p$. 
For an edge $(u,v)$ in $\Ghat$, below we will give a lower bound on the edge-clique number $\omega_{u,v}(\Ghat)$. A simple observation is that for any edge $(u,v)$ in $\tG$, as  $d_X(u,v) \le r$, we have that $B_r(u) \cap B_r(v)$ must contain a metric ball of radius $r/2$ (say $B_{r/2}(z)$ centered at midpoint $z$ of a geodesic connecting $u$ to $v$ in $X$; see Figure \ref{fig:edgetypes} (a)). Thus by a similar argument as the proof of Claim \ref{claim:degreebound}, the number of points in the $r/2$-ball can be bounded from below w.h.p. 
Note that all points in a $r/2$-ball span a clique in the random geometric graph $G^*$. Since we then remove each edge from $G^*$ independently (to obtain $\Ghat$), to find a lower bound for $\omega_{u,v}(\Ghat)$, it suffices to consider the ``local'' subgraph of $\Ghat$ restricted within this $r/2$-ball $B_{r/2}(z)$. This local graph has the same behavior as the standard \myER{} random graph $G(N_{z}, 1-p)$, where $N_{z}$ is the number of points from $V$ within the ball $B_{r/2}(z)$. This eventually leads to the following result, whose proof is in 
Appendix \ref{appendix:subsec:combinedgoodclique}. 
Note that in the deletion-only case, all edges are \mygoodE{}s, so we only need to discuss the behavior of the edge clique number for \mygoodE{}s. 

\begin{theorem}\label{thm:combinedgoodclique}
Let $\tG=G_{\X}^*(r)$ be an $n$-node random geometric graph generated by $(X, d, \mu)$ where $\mu$ is an $L$-doubling probability measure. Assume \mydense{} holds. Let $\Ghat = \Ghat^{p}$ denote the final graph after deleting each edge in $\tG$ independently with probability $p$. Then, for any constant $p \in (0, 1)$, with high probability, 
we have $\omega_{u,v}(\Ghat) \geq \frac{2}{3}\log_{1/(1-p)}{sn}$ for all edges $(u,v)$ in $\Ghat$.
\end{theorem}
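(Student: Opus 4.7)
The plan is to use that every edge $(u,v)$ of $\Ghat$ is also present in $\tG$, so $d(u,v)\le r$. Because $\X$ is a geodesic space, we can choose a midpoint $z$ on a geodesic from $u$ to $v$ with $d(u,z),d(v,z)\le r/2$, so that $B_{r/2}(z)\subseteq B_r(u)\cap B_r(v)$. By the triangle inequality, every pair of points in $W_0 := B_{r/2}(z)\cap V$ lies within distance $r$, and both $u$ and $v$ are within $r$ of every point of $W_0$; hence $W_0\cup\{u,v\}$ spans a clique in $\tG$.

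First I would apply \mydense{} together with a Chernoff bound (as in the proof of Claim~\ref{claim:degreebound}) to conclude that $|W_0|\ge \mys n/4$ with probability at least $1-n^{-5/3}$. Because $W_0\cup\{u,v\}$ forms a clique in $\tG$ and $\Ghat$ is obtained by independent $p$-deletion, the induced subgraph $\Ghat|_{W_0\cup\{u,v\}}$ is distributed exactly as the \myER{} random graph $G(|W_0|+2,\,1-p)$.

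Next, let $W\subseteq W_0$ be the subset of vertices still adjacent to both $u$ and $v$ in $\Ghat$. Each vertex of $W_0$ lies in $W$ independently with probability $(1-p)^2$, so Chernoff gives $|W|\ge(1-p)^2\mys n/8$ with high probability. The crucial point is that the deletions internal to $W_0$ are independent of the deletions on edges incident to $u$ or $v$, so $\Ghat|_W$ is distributed as $G(|W|,1-p)$ conditionally on $W$, independently of the randomness defining $W$. Any $k$-clique in $\Ghat|_W$ extends together with $u,v$ to a $uv$-clique of size $k+2$ in $\Ghat$, so it suffices to show $\omega(\Ghat|_W)\ge \tfrac{2}{3}\log_{1/(1-p)}(\mys n)-2$ with high probability.

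For this I would apply Janson's inequality (or a direct second-moment argument) to the count $X_k$ of $k$-cliques in $G(m,1-p)$ with $m=|W|$ and $k=\lceil\tfrac{2}{3}\log_{1/(1-p)}(\mys n)\rceil-2$. A direct estimate gives $\log \myE[X_k]\ge k\log(m/k)-\binom{k}{2}\log\tfrac{1}{1-p}=\Omega\bigl((\log \mys n)^2\bigr)$, and because the constant $\tfrac{2}{3}$ lies well below the true clique-number threshold $2$ for $G(m,1-p)$, the overlap term in Janson's estimate is subdominant; this yields $\myprob[X_k=0]\le n^{-\Omega(1)}$, small enough to survive a union bound over all $\binom{n}{2}$ candidate edges of $\Ghat$. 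The main obstacle is the sparse regime in which $\mys n$ is only $\Theta(\ln n)$: there $|W|$ is only logarithmic in $n$, and the constants in the Janson estimate must be tracked carefully. This is precisely why the constant in the theorem is $\tfrac{2}{3}$ rather than the asymptotically optimal $2$ --- the slack is what makes the per-edge failure probability polynomially small and thus uniform over all edges of $\Ghat$.
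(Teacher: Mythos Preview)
Your outline follows the paper's closely: take the midpoint ball $B_{r/2}(z)$, observe that it spans a clique in $\tG$, view the induced subgraph after deletion as $G(N_z,1-p)$, and apply a Janson-type bound (the paper's Lemma~\ref{thm:ERclique}) together with a union bound over all edges. The paper applies the Janson bound directly to the clique number of this local graph; you go one step further by first passing to the set $W$ of common $\Ghat$-neighbors of $u$ and $v$ inside $W_0$, so that a clique in $\Ghat|_W$ automatically extends to a $uv$-clique.

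That refinement is natural, but your two-stage argument (Chernoff for $|W|$, then Janson on $G(|W|,1-p)$) does not close. The Chernoff lower tail for $|W|\sim\mathrm{Bin}(|W_0|-2,(1-p)^2)$ gives at best $\exp\bigl(-\Theta((1-p)^2\,\mys n)\bigr)$. In the sparsest admissible regime $\mys n=13\ln n$ this is only $n^{-\Theta((1-p)^2)}$, and since the implicit constant is at most $13/32<1$ for every constant $p\in(0,1)$, this per-edge failure probability never survives a union bound over $\binom{n}{2}$ pairs. You correctly flag the sparse regime as the obstacle but locate the difficulty in the Janson step; in fact the Janson step is fine (it yields $e^{-\Theta(|W|^{3/2})}$, super-polynomially small once $|W|$ is of logarithmic order), and the bottleneck is the preliminary Chernoff bound on $|W|$ itself.

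The cleanest repair is to skip the intermediate $W$ and apply Janson's inequality directly to the family of $(k'-2)$-subsets $S\subseteq W_0\setminus\{u,v\}$ with the event $B_S=\{S\cup\{u,v\}\text{ is a clique in }\Ghat\}$, conditioned on the edge $(u,v)$ having survived. The calculation is essentially that of Lemma~\ref{thm:ERclique} with the edge count shifted by $2(k'-2)$, and it yields a failure probability $e^{-\Theta(N_z^{3/2})}$; in the sparse case this is $n^{-\Theta(\sqrt{\ln n})}$, which easily absorbs the union bound. The paper's own proof is phrased in terms of $\omega(G_{uv}^{loc})$ rather than $\omega_{u,v}$, so this direct $uv$-clique version of the Janson argument is what both proofs actually need.
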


\myparagraph{Remark. } 
Note that with high probability, the edge clique number is $\Omega\left(\log_{1/(1-p)} (sn)\right)$ -- This is significantly smaller than the case when $p=0$, which is $\Omega(sn)$. 
On the other hand, this is inherited from the behavior of the \myER{} graph, where the large clique size is only logarithmic of the number of nodes for any constant insertion probability ($1-p$ in our case) smaller than $1$ (intuitively, the subgraph within the neighborhood of any \mygoodE{} is a subgraph of a certain \myER{} graph $G(csn, 1-p)$ for some constant $c$).


\subsection{Combined Case}
\label{subsec:combined}

In this section, we consider both the deletion and insertion. In other words, we consider the ER-perturbed random geometric graph $\Ghat$ generated via the model described in section \ref{sec:definition} that includes both edge-deletion probability $p$ and edge-insertion probability $q$. 
Our main results for the combined case are summarized in the following theorem. 
For the bound for \mygoodE{}s, it is easy to see that the same argument to bound the edge clique number for \mygoodE{}s in the deletion-only case works here, giving rise to the same lower bound. 

For \mybadE{}s, we can apply a similar strategy used in Section \ref{subsec:insertion} for the insertion-only case, where the main difference is that the ``local cliques'' (formed by only \mygoodE{}s) are now qualitatively smaller due to the deletion probability. The (somewhat repetitive) details can be found in Appendix \ref{appendix:thm:combinedbadclique}. 


\begin{theorem}[ER-perturbed random geometric graph]\label{thm:combinedbadclique}
Let $\tG=G_{\X}^*(r)$ be an $n$-node random geometric graph generated from $(\X, d, \mu)$ where $\mu$ is an $L$-doubling probability measure supported on $X$. Suppose {\sf \myassumption{}} holds. Let $\Ghat = \Ghat^{p,q}(r)$ denote the graph obtained by removing each edge in $\tG$ independently with constant probability $p \in (0, 1)$ and inserting each edge not in $\tG$ independently with probability $q$. There exist constants $c_1, c_2, c_3> 0$ which depend on the doubling constant $L$ of $\mu$, the \Bconst{} $\myBC(\X)$, and the regularity constant $\myRC$, such that the following holds for any $\aK = \aK(n)$ with $\aK\rightarrow \infty$ as $n \rightarrow \infty$ 
\begin{itemize}
\item[1.] W.h.p., for all \mygoodE{}s $(u,v)\in \Ghat$, $\omega_{u,v}(\Ghat) \geq \frac{2}{3}\log_{1/(1-p)}{sn}$.
\item[2.] W.h.p., for all \mybadE{}s $(u,v) \in \Ghat$, $\omega_{u,v}(\Ghat) < \aK$ as long as the insertion probability $q$ satisfies
\begin{align}\label{eqn:qboundcombined}
q \le \min\left\{c_1, ~~c_2 \cdot \left(\frac{1}{n}\right)^{c_3/\aK}\cdot \frac{\aK}{\mys n \sqrt{1-p}}\right\}
\end{align} 

\end{itemize}
\end{theorem}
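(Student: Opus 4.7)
The plan is to handle the two parts of the theorem separately, leveraging already-developed machinery from the insertion-only and deletion-only cases. Part 1 is essentially free: the proof of Theorem~\ref{thm:combinedgoodclique} never used the absence of random insertions, only that insertions can only add edges. For a good edge $(u,v)$ with $d(u,v)\le r$, pick the midpoint $z$ of a geodesic from $u$ to $v$; then $B_{r/2}(z)\subseteq B_r(u)\cap B_r(v)$, and by \mydense{} together with a Chernoff bound (as in Claim~\ref{claim:degreebound}), the set $B_{r/2}(z)\cap V$ has at least $sn/4$ points with high probability. All these points are $\tG$-adjacent to each other and to both $u,v$, so the induced subgraph of $\Ghat$ on $\{u,v\}\cup(B_{r/2}(z)\cap V)$ dominates an \myER{} graph $G(\lfloor sn/4\rfloor,1-p)$ with $u,v$ attached. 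Standard lower bounds on clique sizes in $G(N,1-p)$ give a clique of size $\tfrac{2}{3}\log_{1/(1-p)}(sn)$ inside this local structure with high probability; the $q$-insertions can only enlarge cliques. A union bound over all good edges then completes Part~1.

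For Part 2, I would mirror the architecture of the proof of Theorem~\ref{thm:insertiononlybadedge}. Fix a bad edge $(u,v)$, split $V=\tilde{A}_{uv}\cup B_{uv}$, and split the event as in Eqn.~(\ref{eqn:mainresult}). For Case (A), apply Lemma~\ref{lem:BCLdoubling} to $A_{uv}$ to obtain a \myWSP{} $\{P_i\}_{i\in\Lambda}$ with $|\Lambda|\le\myBC^2$, so that a large $uv$-clique in $\Ghat|_{\tilde A_{uv}}$ forces a proportionally large $uv$-clique inside some $\Ghat|_{\tilde P_i}$. Inside $\tilde P_i$, one expresses a prospective $(k+2)$-clique as picking $x_j$ vertices from the $j$-th cluster $C_j^{(i)}$; the crucial change from the insertion-only analysis is that each within-cluster edge is now present only with probability $1-p$, so the expected count becomes
\[
\myE[\aX\mid \aF]\ \le\ q^{2k}\!\!\sum_{\substack{x_1+\cdots+x_m=k\\ 0\le x_i\le N_{\max}}}\!\binom{N_{\max}}{x_1}\cdots\binom{N_{\max}}{x_m}\,q^{(k^2-\sum x_i^2)/2}\,(1-p)^{\sum_i\binom{x_i}{2}}.
\]
Case~(B) is treated as before by constructing the auxiliary local graph $\tilde G_{uv}^{local}$ with two $r/2$-ball cliques (whose internal edges now survive with probability $1-p$) and crossing insertions with probability $q$.

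The main technical obstacle is re-proving the analogue of Lemma~\ref{lem:insertioncaseA} with the extra factor $(1-p)^{\sum\binom{x_i}{2}}$. Writing $\binom{x_i}{2}\approx x_i^2/2$ and folding the $(1-p)$ factor into the within-cluster exponent, the relevant weight becomes $\bigl(q^{-1}(1-p)\bigr)^{\sum x_i^2/2}$ times $q^{2k}q^{k^2/2}$, so the optimization over partitions $x_1+\cdots+x_m=k$ now concentrates mass differently and effectively reduces the required $q$ by a factor of $\sqrt{1-p}$: each within-cluster pair behaves like a cross-cluster pair of probability $\sqrt{q(1-p)}\le q$ after a change of variables. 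Carrying through the same three-way maximum of multinomial bounds as in Eqn.~(\ref{eqn:qbound1}), and then reasoning as in the derivation of $c_1^a, c_2^a, c_3^a$ of Section~\ref{subsec:insertion}, yields the bound with $sn$ replaced by $sn\sqrt{1-p}$ in the denominator, exactly matching Eqn.~(\ref{eqn:qboundcombined}). A final union bound over all bad edges completes the proof; I expect the extraction of the $\sqrt{1-p}$ factor from the multinomial sum to be the only genuinely new calculation, with the rest largely a mechanical adaptation of Section~\ref{subsec:insertion}.
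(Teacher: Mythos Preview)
Your proposal is correct and follows essentially the same architecture as the paper: Part~1 is reduced to Theorem~\ref{thm:combinedgoodclique} since insertions only add edges, and Part~2 reuses the $\tilde A_{uv}/B_{uv}$ split, the \myWSP{} from Lemma~\ref{lem:BCLdoubling}, and the modified first-moment count with the extra factor $(1-p)^{\sum_i\binom{x_i}{2}}$, followed by a union bound over bad edges.

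One remark on the piece you flag as the new calculation: your heuristic ``each within-cluster pair behaves like a cross-cluster pair of probability $\sqrt{q(1-p)}\le q$'' is not right as stated (that inequality would need $1-p\le q$). The paper's device is the algebraic identity
\[
q^{2k}\,q^{(k^2-\sum x_i^2)/2}\,(1-p)^{\sum_i\binom{x_i}{2}}
=\Bigl(\tfrac{q}{1-p}\Bigr)^{2k}\Bigl(\tfrac{q}{1-p}\Bigr)^{(k^2-\sum x_i^2)/2}(1-p)^{(k^2+3k)/2},
\]
so the combined sum is exactly the insertion-only sum of Eqn.~(\ref{expectationX}) with $q$ replaced by $q/(1-p)$, times the harmless global factor $(1-p)^{(k^2+3k)/2}\le 1$. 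Re-running Lemma~\ref{lem:insertioncaseA} with $q/(1-p)$ in place of $q$ and absorbing that global factor into each of the three bounds yields relaxation factors of at least $(1/(1-p))^{3/2}$; multiplying back by $(1-p)$ gives the net $1/\sqrt{1-p}$ in Eqn.~(\ref{eqn:qboundcombined}). With this substitution in hand, the rest is indeed mechanical.
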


\myparagraph{Remark. } 
For example, assume $sn = \Theta(\ln n)$. Then for a constant deletion probability $p\in (0, 1)$, w.h.p. the edge clique number for any \mygoodE{} is {\bf at least} $\Omega\left(\log_{1/(1-p)}{\mys n}\right) = \Omega(\ln \ln n)$. 
For any \mybadE{} $uv$, if the insertion probably $q = o\left({(\frac{1}{n})^{\frac{c_3}{\ln \ln n}}}\frac{\ln \ln n}{\ln n}\right)$, then its 
edge clique number is {\bf at most} $\aK = o\left(\log_{1/(1-p)}{\mys n}\right) = o(\ln \ln n)$ w.h.p. (Note that the quantity $(\frac{1}{n})^{\frac{c_3}{\ln \ln n}}$ is asymptoticly larger than $n^{-\eps}$ for any $\eps>0$; that is, $(\frac{1}{n})^{\frac{c_3}{\ln \ln n}} = \omega(n^{-\eps})$ for any $\eps > 0$). 

As $q$ decreases, the gap between the edge clique number for \mygoodE{}s and \mybadE{}s can be made larger and larger. 

Compared to the insertion-only case, it may seem that the condition on $q$ is too restrictive (recall that for the insertion only case we only require $q = o(1)$ to have a gap between edge clique number for \mygoodE{}s and \mybadE{}s). Intuitively, this is because: even for an \myER{} graph $G(n, q)$ with $q = {(\frac{1}{n})^{\frac{c_3}{\ln \ln n}}}\frac{\ln \ln n}{\ln n}$, its clique number is of order $\Theta(\ln \ln n)$ with high probability\footnote{Indeed, the upper bound can be easily derived by computing the expectation; and Lemma \ref{thm:ERclique} in the Appendix provides the lower bound.}. This clique size is already at the same scale as the bound of edge clique number for a \mygoodE{} in the deletion-only case. Intuitively, this now gets into a regime where the good/\mybadE{}s potentially have edge cliques of asymptotically similar sizes.


\section{Recover the shortest-path metric of $\tG(r)$}\label{sec:metricrecovery} 

In this section, we show an application in recovering the shortest-path metric structure of $G_{\X}^{*}(r)$ from an input observed graph $ \Ghat_{\X}^{p,q}(r)$. 
This problem is previously introduced in \cite{ParthasarathyST17}. 
Intuitively, assume that $G^* = G_\X^*(r)$ is the true graph of interests (which reflects the metric structure of $(X,d)$), but the observed graph is a ($p,q$)-perturbed version $\Ghat = \Ghat_{\X}^{p,q}(r)$ as described in Section \ref{sec:definition}. 
The goal is to recover the shortest-path metric of $G^*$ from its noisy observation $\Ghat$ with approximation guarantees. Note that due to the random insertion, two nodes could have significantly shorter path in $\Ghat$ than in $G^*$. 

Specifically, given two different metrics defined on the same space $(Y, d_1)$ and $(Y, d_2)$, we say that $d_1 \le \alpha \cdot d_2$ if for any two points $y_1, y_2 \in Y$, we have that $d_1(y_1, y_2) \le \alpha \cdot d_2(y_1, y_2)$. 
The metric $d_1$ is an \emph{$\alpha$-approximation} of $d_2$ if $\frac{1}{\alpha} \cdot d_2 \le d_1 \le \alpha \cdot d_2$ for $\alpha \ge 1$ and $\alpha=1$ means that $d_1 = d_2$. 

Let $d_G$ denote the shortest-path metric on graph $G$.
It was observed in \cite{ParthasarathyST17} that, roughly speaking, deletion (with $p$ smaller than a certain constant) does not distort the shortest-path metric of $G^*$ by more than a factor of $2$. 
Insertion however could change shortest-path distances significantly. The authors of \cite{ParthasarathyST17} then proposed a filtering process to remove some ``bad'' edges based on the so-called Jaccard index, and showed that after the Jaccard-filtering process, the shortest-path metric of the resulting graph $\tilde{G}$ $2$-approximates that of the true graph $G^*$ when the insertion probability  $q$ is small.  

We follow the same framework as \cite{ParthasarathyST17}, but change the filtering process to be one based on the edge clique number instead. This allows us to recover the shortest-path metric within constant factor for a much larger range of values of the insertion probability $q$, although we do need the extra \myRegular{} which is not needed in \cite{ParthasarathyST17}. (Note that it does not seem that the bound of \cite{ParthasarathyST17} can be improved even with this extra \myRegular{}). 
  
We now introduce our edge-clique based filtering process.  
\begin{description}\denselist
\item {\sf $\athreshold$-Clique filtering}: 
Given graph $\Ghat$, we construct another graph $\anotherG_\athreshold$ on the same vertex set as follows: For each edge $(u,v) \in E(\Ghat)$, we insert the edge $(u,v)$ into $E(\anotherG_\athreshold)$ if and only if $\omega_{u,v}(\Ghat) \ge \athreshold$. That is, $V(\anotherG_\athreshold) = V(\Ghat)$ and $E(\anotherG_\athreshold) := \left\{ (u,v) \in E(\Ghat) \mid \omega_{u,v}(\Ghat) \ge \athreshold \right\}$. 
\end{description}

A simple application of Theorem \ref{thm:combinedbadclique} (i) and (ii) gives the following two lemmas, respectively.
\begin{lemma}\label{lm:combinedkeepgoodedges}
Under the same setting as Theorem \ref{thm:combinedbadclique}, if $p \in (0, 1)$ and the filtering parameter $\athreshold$ satisfies $\athreshold < \frac{2}{3}\log_{1/(1-p)}{sn}$, then, with high probability, our $\athreshold$-Clique filtering process will not remove any \mygoodE{}s. 
\end{lemma}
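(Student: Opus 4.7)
The plan is a direct invocation of part (1) of Theorem \ref{thm:combinedbadclique}, so there is essentially no obstacle — the lemma is really just a restatement of the good-edge bound through the lens of the filtering rule. The argument proceeds in one step.

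First, I would recall the filtering rule: for each edge $(u,v) \in E(\Ghat)$, the edge is retained in $\anotherG_\athreshold$ precisely when $\omega_{u,v}(\Ghat) \ge \athreshold$. Hence ``no \mygoodE{} is removed'' is equivalent to the assertion that every \mygoodE{} $(u,v)$ in $\Ghat$ satisfies $\omega_{u,v}(\Ghat) \ge \athreshold$.

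Next, I would appeal to Theorem \ref{thm:combinedbadclique}(1), which says that under \myassumption{} and for any constant $p \in (0,1)$, with high probability every \mygoodE{} $(u,v) \in E(\Ghat)$ satisfies
\[
\omega_{u,v}(\Ghat) \;\ge\; \tfrac{2}{3}\log_{1/(1-p)}(sn).
\]
Call this high-probability event $\aF$. On $\aF$, for every \mygoodE{} $(u,v)$ we have
\[
\omega_{u,v}(\Ghat) \;\ge\; \tfrac{2}{3}\log_{1/(1-p)}(sn) \;>\; \athreshold,
\]
where the strict inequality is the hypothesis on $\athreshold$. Therefore every \mygoodE{} meets the retention criterion of the $\athreshold$-Clique filtering and is included in $E(\anotherG_\athreshold)$. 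Since $\aF$ holds with high probability, the conclusion of the lemma follows.

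The main (and only) thing to be careful about is matching the quantifier: Theorem \ref{thm:combinedbadclique}(1) already provides a single high-probability event on which the lower bound holds simultaneously for \emph{all} \mygoodE{}s, so no additional union bound over edges is needed here. No other ingredient (and in particular no separate analysis of \mybadE{}s or of the insertion probability $q$) enters this direction of the argument; the role of the \mybadE{} bound of Theorem \ref{thm:combinedbadclique}(2) will only appear in the companion lemma that shows the filtering removes \mybadE{}s.
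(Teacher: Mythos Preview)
Your proposal is correct and matches the paper's approach exactly: the paper simply states that Lemma~\ref{lm:combinedkeepgoodedges} follows as ``a simple application of Theorem~\ref{thm:combinedbadclique}(i),'' and your write-up spells out precisely this one-step implication via the filtering rule.
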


\begin{lemma}\label{lm:combineddeletebadedges}
Under the same setting as Theorem \ref{thm:combinedbadclique}, there exist constants $c_1, c_2, c_3 > 0$ such that for constant $p \in (0,1)$, with high probability, a $\athreshold$-Clique filtering process deletes all \mybadE{}s, as long as 
$q \le \min \left\{ c_1, ~~c_2 \cdot (\frac{1}{n})^{c_3/\tau}\cdot \frac{\tau}{\mys n \sqrt{1-p}} ~\right\}.$
\end{lemma}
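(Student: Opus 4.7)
The plan is to obtain this lemma as a direct corollary of Theorem~\ref{thm:combinedbadclique}(2) by choosing the parameter $\aK$ in that theorem to equal the filtering threshold $\tau$. Recall from the definition of the $\tau$-Clique filtering process that an edge $(u,v) \in E(\Ghat)$ is \emph{retained} in $\anotherG_\tau$ if and only if $\omega_{u,v}(\Ghat) \ge \tau$; equivalently, it is \emph{deleted} if and only if $\omega_{u,v}(\Ghat) < \tau$. So to show that the filtering deletes every \mybadE{} with high probability, it suffices to show that with high probability $\omega_{u,v}(\Ghat) < \tau$ simultaneously for every \mybadE{} $(u,v)$ in $\Ghat$.

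First I would set $\aK := \tau$ in the statement of Theorem~\ref{thm:combinedbadclique}(2), inheriting the constants $c_1, c_2, c_3$ (which depend only on the doubling constant $L$, the \Bconst{} $\myBC(\X)$, and the regularity constant $\myRC$). Then the hypothesis of Theorem~\ref{thm:combinedbadclique}(2) becomes exactly the condition
\[
q \;\le\; \min\!\left\{ c_1,\; c_2 \cdot \Bigl(\tfrac{1}{n}\Bigr)^{c_3/\tau} \cdot \tfrac{\tau}{\mys n \sqrt{1-p}} \right\}
\]
assumed in the lemma. The conclusion of Theorem~\ref{thm:combinedbadclique}(2) then says that with high probability, $\omega_{u,v}(\Ghat) < \tau$ for every \mybadE{} $(u,v)$ in $\Ghat$. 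By the equivalence noted above, this is exactly the statement that the $\tau$-Clique filtering deletes every \mybadE{} with high probability, which is what we want.

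The only mild subtlety is the implicit requirement in Theorem~\ref{thm:combinedbadclique}(2) that $\aK = \aK(n) \to \infty$ as $n \to \infty$; in our application this requires $\tau = \tau(n) \to \infty$, which is consistent with the intended use in the main metric-recovery theorem, where $\tau$ is chosen to lie strictly between the lower bound $\tfrac{2}{3}\log_{1/(1-p)}(sn)$ on \mygoodE{} clique numbers (used in Lemma~\ref{lm:combinedkeepgoodedges}) and the upper bound on \mybadE{} clique numbers here, and both of these scales grow with $n$. There is no real obstacle in the argument: it is a one-line reduction to Theorem~\ref{thm:combinedbadclique}(2), with the only bookkeeping being the identification $\aK = \tau$ and the matching of constants.
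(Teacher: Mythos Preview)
Your proposal is correct and matches the paper's own approach: the paper states that this lemma follows as ``a simple application of Theorem~\ref{thm:combinedbadclique}(ii),'' which is precisely the one-line reduction you carry out by setting $\aK := \tau$. Your remark on the side condition $\aK(n)\to\infty$ is a fair point of bookkeeping that the paper leaves implicit.
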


The following result can be proved by almost the same argument as that for Theorem 12 of \cite{ParthasarathyST17}, with the help of the lemmas above. For completeness, the proof is in Appendix \ref{appendix:thm:maincombined}. 
\begin{theorem}\label{thm:maincombined}
Let $G^* = G^*_{\X}(r)$ be an $n$-node random geometric graph generated from $(\X, d, \mu)$ where $\mu$ is an $L$-doubling probability measure supported on $X$. Assume {\sf \myassumption{}} holds. Suppose $\Ghat = \Ghat^{p,q}(r)$ is the graph obtained after deleting each edge in $\tG$ independently with constant probability $p$ and inserting each edge not in $\tG$ independently with probability $q$.
Let $\anotherG_\athreshold$ denote the resulting graph after $\athreshold$-Clique filtering. 
 Then there exist constants $c_1, c_2, c_3> 0$ which depend on the doubling constant $L$ of $\mu$, the \Bconst{} $\myBC(\X)$, and the regularity constant $\myRC$, such that 
if $p \in (0, c_0)$, $\tau \le \frac{2}{3}\log_{1/(1-p)}{sn}$, and 
\begin{align*}
q &\le c_2 \cdot \left(\frac{1}{n}\right)^{c_3/\tau}\cdot \frac{\tau}{\mys n \sqrt{1-p}} ~~~\left(= \min\left\{~c_1, ~~c_2 \cdot \left(\frac{1}{n}\right)^{c_3/\tau}\cdot \frac{\tau}{\mys n \sqrt{1-p}}~\right\}\right), 
\end{align*}
then, with high probability, the shortest-path metric $d_{\anotherG_\athreshold}$ is a 3-approximation of the shortest-path metric $d_\tG$ of $\tG$. 

However, if the deletion probability $p = 0$, then we have w.h.p. that $d_{\anotherG_\athreshold}$ is a $3$-approximation of $d_\tG$ as long as $\tau < \frac{sn}{4}$, and $q \le \min\left\{~c_1, ~~c_2 \cdot \left(\frac{1}{n}\right)^{c_3/\tau}\cdot \frac{\tau}{\mys n}~\right\}$. 
\end{theorem}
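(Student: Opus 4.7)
The plan is to derive Theorem~\ref{thm:maincombined} as essentially a direct corollary of Lemma~\ref{lm:combinedkeepgoodedges} and Lemma~\ref{lm:combineddeletebadedges}, combined with a simple structural observation about which edges can survive the $\tau$-Clique filter. First I would choose the constants $c_1, c_2, c_3$ so that both lemmas apply simultaneously with high probability under the hypotheses on $\tau$ and $q$ stated in the theorem; by a union bound, with high probability the filtered graph $\tilde{G}_\tau$ satisfies $E(\tG) \subseteq E(\tilde{G}_\tau)$ (no \mygoodE{} deleted) and $E(\tilde{G}_\tau)$ contains no \mybadE{}.

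The core structural claim I would prove next is that every surviving edge $(u,v) \in E(\tilde{G}_\tau)$ corresponds to a path of length at most $3$ in $\tG$. Indeed, if $(u,v)$ is a \mygoodE{}, then $d(u,v) \le r$, so $(u,v) \in E(\tG)$ and $d_{\tG}(u,v) = 1$. Otherwise $(u,v)$ is an edge of $\widehat{G}$ that is not a \mygoodE{} and, by our filtering, also not a \mybadE{}; hence by Definition~\ref{def:goodedge} there exist $x \in N_{\tG}(u)$ and $y \in N_{\tG}(v)$ with $d(x,y) \le r$, giving a path $u \to x \to y \to v$ of length $3$ in $\tG$, so $d_{\tG}(u,v) \le 3$.

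From this, the two directions of the $3$-approximation follow easily. For the upper bound, since $E(\tG) \subseteq E(\tilde{G}_\tau)$, any shortest path of $\tG$ is a valid path in $\tilde{G}_\tau$, so $d_{\tilde{G}_\tau} \le d_{\tG}$. For the lower bound, given any shortest path $u = u_0, u_1, \ldots, u_k = v$ in $\tilde{G}_\tau$, I would replace each edge $(u_i, u_{i+1})$ with its corresponding length-$\le 3$ path in $\tG$ (guaranteed above), yielding a walk in $\tG$ of length at most $3k$, hence $d_{\tG}(u,v) \le 3 \cdot d_{\tilde{G}_\tau}(u,v)$. Together this gives $\tfrac{1}{3} d_{\tG} \le d_{\tilde{G}_\tau} \le d_{\tG}$, which certainly implies a $3$-approximation. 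For the special case $p = 0$, I would replace Lemma~\ref{lm:combinedkeepgoodedges} by Theorem~\ref{thm:insertiononlygoodedge} (giving the larger lower bound $\mys n / 4$ on edge clique numbers of \mygoodE{}s, so the threshold can be taken as large as $\tau < \mys n/4$), and replace Lemma~\ref{lm:combineddeletebadedges} by the insertion-only version of Theorem~\ref{thm:insertiononlybadedge}; the rest of the argument is identical.

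There is no real technical obstacle here, since the heavy lifting is already done in Theorem~\ref{thm:combinedbadclique}. The only care needed is bookkeeping of constants: one must verify that for $p \in (0, c_0)$ with $c_0$ a sufficiently small constant, the window $\tau \le \tfrac{2}{3}\log_{1/(1-p)} \mys n$ (required by Lemma~\ref{lm:combinedkeepgoodedges}) combined with the bound on $q$ in Eqn.~(\ref{eqn:qboundcombined}) (required by Lemma~\ref{lm:combineddeletebadedges}) is nonempty and reconciles with the $q$ bound claimed in the theorem. This reduces to the routine check that $c_1, c_2, c_3$ can be chosen independently of $\tau$ subject to these relations.
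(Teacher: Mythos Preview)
Your argument has a genuine gap in the upper bound $d_{\tilde{G}_\tau} \le d_{\tG}$ when $p>0$. You claim that Lemma~\ref{lm:combinedkeepgoodedges} yields $E(\tG)\subseteq E(\tilde{G}_\tau)$, but it does not: a \mygoodE{} is by Definition~\ref{def:goodedge} an edge \emph{of $\Ghat$} with $d(u,v)\le r$. Edges of $\tG$ that were removed in the $p$-deletion step are not present in $\Ghat$ at all, hence are not \mygoodE{}s and are certainly not in $\tilde{G}_\tau$. What Lemma~\ref{lm:combinedkeepgoodedges} actually gives you is $E(\Ghat)\cap E(\tG)\subseteq E(\tilde{G}_\tau)$, i.e.\ $d_{\tilde{G}_\tau}\le d_{\Ghat\cap\tG}$, and this graph can be strictly sparser than $\tG$.

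The paper closes this gap by invoking an external result (Lemma~17 of \cite{ParthasarathyST17}): for $p$ below a suitable constant $c_0$, with high probability $d_{\Ghat\cap\tG}\le 2\,d_{\tG}$. Chaining gives $d_{\tilde{G}_\tau}\le d_{\Ghat\cap\tG}\le 2\,d_{\tG}\le 3\,d_{\tG}$. This is exactly where the hypothesis $p\in(0,c_0)$ in the theorem statement comes from, which your write-up never accounts for. Your lower bound argument ($d_{\tG}\le 3\,d_{\tilde{G}_\tau}$ via replacing each surviving edge by a length-$\le 3$ path in $\tG$) is correct and matches the paper; and for the $p=0$ case your whole argument is fine since then $\Ghat\cap\tG=\tG$.
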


\myparagraph{Remark. } 
To give some example of the above theorem, first consider the insertion-only case (i.e, the deletion probability $p = 0$), which is a case of independent interest. 
In this case, if we choose $\tau = \ln n$ and assume that $sn > 4\tau$, then 
w.h.p. we can recover the shortest-path metric within a factor of 3 as long as
$q \le c \frac{\ln n}{sn}$ for some constant $c > 0$. 
If $sn = \Theta(\ln n)$ (but $sn > 4\tau = 4\ln n)$, then $q$ is only required to be smaller than a (sufficiently small) constant. 
If $sn = \ln^a n$ for some $a > 1$, then we require that $q \le \frac{c}{\ln^{a-1} n}$. 
In constrast, the work of \cite{ParthasarathyST17} requires that $q = o(s)$, which is $q = o(\frac{\ln^c n}{n})$ if $sn= \ln^a n$ with $a \ge 1$. The gap (ratio) between these two bounds is nearly a factor of $n$. 

For a \emph{constant} deletion probability $p \in (0, c_0)$, our clique filtering process still requires a much larger range of insertion probability $q$ compared to what's required in \cite{ParthasarathyST17}, although the gap is much smaller than the case for $p=0$. 
For example, assume $sn = \Theta(\ln n)$. Then if we choose the filtering parameter to be $\tau = \sqrt{\ln \ln n}$, then we can recover $d_{\tG}$ approximately as long as the insertion probability $q = o\left( \left(\frac{1}{n}\right)^{\frac{c_3}{\sqrt{\ln \ln n}}}\frac{\sqrt{\ln \ln n}}{\ln n}\right)$. This is still much larger than the $q$ required in \cite{ParthasarathyST17}, which is $q = o(s) = o\left(\frac{\ln n}{n}\right)$. In fact, $(\frac{1}{n})^{\frac{c_3}{\sqrt{\ln \ln n}}}\frac{\sqrt{\ln \ln n}}{\ln n}$ is asymptoticly larger than $\frac{1}{n^\eps}$ for any $\eps > 0$. 
However, we do point out that the Jaccard-filtering process in \cite{ParthasarathyST17} is algorithmically much simpler and faster, and can be done in $O(n^2)$ time, while the clique-filtering requires the computation of edge-clique numbers, which is computationally expensive.

\myparagraph{Acknowledgements.} 
This work is partially supported by National Science Foundation under grants DMS-1547357, CCF-1740761 and RI-1815697.

\bibliography{clique_number_RGG_under_ER_perturbation}

\appendix

\section{Proof of Claim \ref{claim:degreebound}}
\label{appendix:claim:nbhd}
\begin{proof}
For a fixed vertex $v\in V$, let $n_v$ be the number of points in $(V-\{v\}) \cap \aball_r(v)$. 
The expectation of $n_v$ is $ (n - 1) \cdot \mu\left(\aball_r(v)\right) \ge (n-1) \cdot \mu\left(B_{\frac{r}{2}}(v)\right)\ge s(n-1)$. 
By the Chernoff bound, we thus have that 
\begin{align*}
\myprob\left[n_v < \frac{sn}{4}\right] < \myprob\left[n_v < \frac{s(n-1)}{3}\right] &\le \myprob\left[n_v < \frac{1}{3}(n-1) \mu\left(\aball_r(v)\right)\right]  \le e^{-\frac{\left(\frac{2}{3}\right)^{2}}{2}(n-1)\mu(\aball_r(v))} \le  n^{-\frac{8}{3}}
\nonumber\\ 
\end{align*}

It then follows from the union bound that the probability that all $n$ vertices in $V$ have more than $sn/4$ neighbors is at least  $1 - n \cdot n^{-\frac{8}{3}} = 1 - n^{-5/3}$. 
\end{proof}

\section{The missing proofs in Section \ref{subsec:insertion}}
\label{appendix:thm:insertiononly}

\subsection{Proof of Theorem \ref{thm:insertiononlygoodedge}}
\label{appendix:thm:insertiononlygoodedge}

For each \mygoodE{} $(u,v)$, observe that $B_r(u) \cap B_r(v)$ contains at least one metric ball of radius $r/2$ (say $B_{r/2}(z)$ with $z$ being the mid-point of a geodesic connecting $u$ to $v$ in $X$, see Figure \ref{fig:edgetypes} (a)). And all the points in an $r/2-$ball span a clique in $\tG$ ($r-$neighborhood graph). Then by an argument similar to the proof of Claim \ref{claim:degreebound}, we have that with probability at least  $1 - n^{-\frac{2}{3}}$, the number of points in all of $O(n^2)$ number $r/2$-balls centered at some mid-point of the geodesics between all pair of nodes $u, v \in V$ is at least $\mys n/4$. 
Hence with probability at least  $1 - n^{-\frac{2}{3}}$, for all \mygoodE{} $(u,v)$ in $\Ghat$, $\omega_{u,v}(\Ghat) \geq \mys n/4$.

\subsection{The proof of Claim \ref{lm:r/2ballupperbound}}
\label{appendix:claim:r/2ballupperbound}
\begin{proof}
For a fixed vertex $v \in V$, let $n_{v,r/2}$ be the number of points in $(V-\{v\})\cap B_{r/2}(v)$. By the definition of random geometric graph, we know that $n_{v,r/2}$ is subject to binomial distribution $Bin\left(n-1, \mu\left(B_{r/2}(v)\right)\right)$. The expectation of $n_{v,r/2}$ is $(n-1)\mu(B_{r/2}(v)) \leq \myRC \mys n$. Also note that $(n-1)\mu\left(B_{r/2}(v)\right) \geq (n-1)s\geq 12 \ln n$. By applying the Chernoff bound, we thus have that
\begin{align*}
\myprob\left[n_{v,r/2} \geq \frac{5}{2}\myRC \mys n\right] \leq \myprob\left[n_{v,r/2} \geq \frac{5}{2}(n-1)\mu\left(B_{r/2}(v)\right)\right]\leq e^{-\frac{1}{3}\left(\frac{3}{2}\right)(n-1)\mu\left(B_{r/2}(v)\right)} \leq n^{-6}
\end{align*}

Finally, by applying the union bound, we know that with probability at least $1-n\cdot n^{-6} = 1- n^{-5}$, $\forall v \in G^{*}$, there are at most $\frac{5}{2}\myRC \mys n + 1 < 3\myRC \mys n$ points in the geodesic ball $B_{r/2}(v)$.
\end{proof}

\subsection{The proof of Lemma \ref{lem:insertioncaseA}}
\label{appendix:lem:insertioncaseA}
\begin{proof}
We first pick $c(\myBC)$ to be a positive constant such that $ \floor*{\frac{c(\myBC)N_{\max}}{2|\Lambda|}}-2 \leq N_{\max}$. Then, since $k = \floor*{\frac{\aK}{2|\Lambda|}} -2$, it is easy to see that for any $\aK \leq c(\myBC)N_{\max} = c(\myBC)3\rho \mys n$, we have $k \leq N_{\max}$. Pick $c =  c(\beta) 3\rho$. Then, $\aK \leq c \mys n$ implies $k \leq N_{\max}$.

To estimate the summation on the right hand side of Eqn. (\ref{expectationX}), we consider the quantity $x_{\max} := \max\limits_{i}\{x_i\}$. We first enumerate all the possible cases of $(x_1, x_2, \cdots, x_m)$ when $x_{\max}$ is fixed, and then vary the value of $x_{\max}$.  

Set $h(y) = \max\limits_{x_{\max}=y}\left\{\sum\limits_{i=1}^{m}x_i^2\right\}$ for $y \geq \ceil*{\frac{k}{m}}$. It is the maximum value of $\sum\limits_{i=1}^{m}x_i^2$ under the constraint $x_{\max}=y$. Without loss of generality, we assume $x_1=y$ and $y \geq x_2 \geq x_3 \geq \cdots \geq x_m \geq 0$. We argue that $\argmax\limits_{x_{\max}=y}\left\{\sum\limits_{i=1}^{m}x_i^2\right\} = \{y, y, \cdots, y, k-ry, 0, \cdots, 0\}$, that is $x_1 = x_2 = \cdots = x_r = y, x_{r+1}=k-ry$ where $r = \floor*{\frac{k}{y}}$. 

To show this, we first consider $x_2$: if $x_2 = y$, then consider $x_3$; otherwise, $x_2 < y$, then we search for the largest index $j$ such that $x_j >0$. Note the fact that if $x \geq y>0$, then $(x+1)^2+(y-1)^2 = x^2 + y^2 + 2(x-y) + 2>x^2+y^2$. So if we increase $x_2$ by $1$ and decrease $x_j$ by $1$, we will enlarge $\sum\limits_{i=1}^{m}x_i^2$. After we update $x_2 = x_2 + 1$, $x_j = x_j - 1$, we still get a decreasing sequence $x_1\geq x_2 \geq \cdots \geq x_m \geq 0$. If we still have $x_2 < y$, 
then we repeat the same procedure above (by increasing $x_2$ and decreasing $x_j$ where $j$ is the largest index such that $x_j>0$). 
We repeat this process until $x_2 = y$ or $x_1 + x_2 = k$. If it is the former case (i.e, $x_2=y$), then we consider $x_3$ and so on. Finally, we will get the sequence $x_1 = x_2 = \cdots = x_r = y, x_{r+1}=k-ry$ where $r = \floor*{\frac{k}{y}}$ as claimed, and this setting maximizes $\sum\limits_{i=1}^{m}x_i^2$. 

Next we claim that $h(y+1) > h(y)$. The reason is similar to the above. We update the sequence $x_1 = x_2 = \cdots = x_r = y, x_{r+1}=k-ry$ (which corresponding to $h(y)$) from $x_1$: we increase $x_1$ by $1$; search the largest index $s$ such that $x_{s} > 0$ and decrease $x_{s}$ by $1$. And then consider $x_2$ and so on and so forth. This process won't stop until $x_1 = x_2 = \cdots = x_q =y+1$ and $x_{q+1} = k - q(y+1)$ with $q = \floor*{\frac{k}{y+1}}$. Thus $h(y+1)>h(y)$. 

By enumerating all the possible values of $x_{\max}$, we split Eqn. (\ref{expectationX}) into three parts as follows (corresponding to $x_{\max} = k, x_{\max} \in \left[ \ceil*{\frac{k+1}{2}}, k-1\right] \text{~and~}x_{\max} \in \left[\ceil*{\frac{k}{m}}, \ceil*{\frac{k+1}{2}}-1\right]$) (see the remarks after this equation for how the inequality is derived); 
\begin{align}\label{eqn:mainestimate}
&q^{2k}\sum\limits_{\substack{x_1 + x_2 + \cdots + x_m = k \\x_i \geq 0}} \binom{N_{\max}}{x_1} \binom{N_{\max}}{x_2}\cdots \binom{N_{\max}}{x_m}q^{(k^2-\sum_{i=1}^{m}x_i^2)/2} \nonumber\\
&\leq ~~q^{2k}\binom{N_{\max}}{k}m ~~+~~  q^{2k}\sum\limits_{x_{\max} =\ceil*{\frac{k+1}{2}}}^{k-1}\Bigg(\binom{m}{1} \binom{N_{\max}}{x_{\max}} \sum\limits_{\substack{y_1+\cdots+y_{m-1} = k - x_{\max}\\ x_{\max} \geq y_i \geq 0}}\binom{N_{\max}}{y_1} \cdots \nonumber\\
&\binom{N_{\max}}{y_{m-1}}q^{x_{\max}(k-x_{\max})} \Bigg) ~~+~~ \binom{mN_{\max}}{k}q^{\frac{(k-1)^2}{4} + 2k}\nonumber\\
\end{align}

\myparagraph{Remark. } The first term on the right hand side of Eqn. (\ref{eqn:mainestimate}) comes from the fact that if $x_{\max} = k$, then all the possible cases for $(x_1, x_2, \cdots, x_m)$ are $(k,0,0,\cdots,0),(0,k,0,\cdots,0),\cdots, (0,\cdots,0,k)$, and there are $m$ cases all together. 
For each case, the value of each term in the summation is $\binom{N_{\max}}{k}$, giving rise to the first term in Eqn. (\ref{eqn:mainestimate}).

The third term on the right hand side of Eqn. (\ref{eqn:mainestimate}) can be derived as follows. First, observe that 
\begin{align*}
&\sum\limits_{x_{\max} = \ceil*{\frac{k}{m}}}^{\ceil*{\frac{k+1}{2}} - 1}\Bigg(\sum\limits_{\substack{x_1 + x_2 + \cdots + x_m = k \\x_i \geq 0, \max_{i}\{x_i\} = x_{\max}}} \binom{N_{\max}}{x_1} \binom{N_{\max}}{x_2}\cdots \binom{N_{\max}}{x_m}\Bigg)\\
\leq &\sum\limits_{\substack{x_1 + x_2 + \cdots + x_m = k \\x_i \geq 0}} \binom{N_{\max}}{x_1} \binom{N_{\max}}{x_2}\cdots \binom{N_{\max}}{x_m} = \binom{mN_{\max}}{k} . 
\end{align*}
On the other hand, as $x_{\max} \leq \ceil*{\frac{k+1}{2}} - 1 = \ceil*{\frac{k-1}{2}}$, we have:
\begin{align*}
\frac{k^2-\sum_{i=1}^{m}x_i^2}{2} \geq \frac{k^2 -h(x_{\max})}{2} \geq \frac{k^2 -h(\ceil*{\frac{k-1}{2}})}{2} \geq \frac{(k-1)^2}{4}, 
\end{align*}
where the second inequality uses the fact that $h(y)$ is an increasing function, and the last inequality comes from that $h(\ceil*{\frac{k-1}{2}}) \le (\ceil*{\frac{k-1}{2}})^2 + (\ceil*{\frac{k-1}{2}})^2 + 1 \le k^2/4 + k^2/4 + 1 = k^2/2 + 1$. 
\vspace{5mm}

In what remains, it suffices to estimate all the three terms on the right hand side of Eqn. (\ref{eqn:mainestimate}). 

\myparagraph{The first term of Eqn. (\ref{eqn:mainestimate}):} According to the assumptions in Eqn. (\ref{eqn:qbound1}), we know $q \leq \left(\dfrac{k!}{n^{\epsilon}N_{\max}^km}\right)^{1/2k}$. Thus, for the first term of Eqn. (\ref{eqn:mainestimate}), we have:
\begin{align}\label{eqn:firsttermestimate}
q^{2k}\binom{N_{\max}}{k}m ~~\leq~~ \left(\dfrac{k!}{n^{\epsilon}N_{\max}^km}\right) \dfrac{N_{\max}^k}{k!}m ~~=~~ \frac{1}{n^{\epsilon}}
\end{align}

\myparagraph{The second term of Eqn. (\ref{eqn:mainestimate}):}For the second term of Eqn. (\ref{eqn:mainestimate}), we relax the constraint $x_{\max} \geq y_i \geq 0$ to $y_i \geq 0$. Thus, we have:
\begin{align}\label{eqn:boxandballtrick}
\sum\limits_{\substack{y_1+\cdots+y_{m-1} = k - x_{\max}\\x_{\max} ~~\geq~~ y_i \geq 0}}\binom{N_{\max}}{y_1}\cdots &\binom{N_{\max}}{y_{m-1}} \leq \sum\limits_{\substack{y_1+\cdots+y_{m-1} = k - x_{\max}\\y_i \geq 0}}\binom{N_{\max}}{y_1}\cdots \binom{N_{\max}}{y_{m-1}}\nonumber\\
& =~~ \binom{(m-1)N_{\max}}{k-x_{\max}} ~~\leq~~ \dfrac{(m-1)^{k-x_{\max}}N_{\max}^{k-x_{\max}}}{(k-x_{\max})!} 
\end{align}

Now apply (\ref{eqn:boxandballtrick}) to the second term of (\ref{eqn:mainestimate}), we have (starting from the second line, we replace $x_{max}$ to be $j$ for simplicity):
\begin{align}
&q^{2k}\sum\limits_{x_{\max} =\ceil*{\frac{k+1}{2}}}^{k-1}\Bigg(\binom{m}{1} \binom{N_{\max}}{x_{\max}} \sum\limits_{\substack{y_1+\cdots+y_{m-1} = k - x_{\max}\\x_{\max} \geq y_i \geq 0}}\binom{N_{\max}}{y_1} \cdots \binom{N_{\max}}{y_{m-1}}q^{x_{\max}(k-x_{\max})} \Bigg) \nonumber\\
&\leq \sum\limits_{j =\ceil*{\frac{k+1}{2}}}^{k-1}\Bigg(m \frac{(N_{\max})^j}{j!}q^{2k + j(k-j)}\dfrac{(m-1)^{k-j}N_{\max}^{k-j}}{(k- j)!}\Bigg)\nonumber\\
& < \sum\limits_{j =\ceil*{\frac{k+1}{2}}}^{k-1}\Bigg(m^{k-j+1} N_{\max}^{k} \binom{k}{k-j} \frac{1}{k!} q^{2k + j(k-j)} \Bigg)\nonumber\\
& < \sum\limits_{j =\ceil*{\frac{k+1}{2}}}^{k-1}\Bigg(m^{k-j+1} N_{\max}^{k} \frac{k^{k-j}}{(k-j)!} \frac{1}{k!} q^{2k + j(k-j)} \Bigg)\label{eqn:secondtermmain}
\end{align}

Since $q \leq \left(\dfrac{k!}{k^2n^{\epsilon}N_{\max}^km^2}\right)^{1/k}$ by Eqn. (\ref{eqn:qbound1}), for each $j$ satisfying $\ceil*{\frac{k+1}{2}} \leq j \leq k-1$, we have:
\begin{align}
&m^{k-j+1} N_{\max}^{k} \frac{k^{k-j}}{(k-j)!} \frac{1}{k!} q^{2k + j(k-j)} \nonumber\\
& \leq~~ m^{k-j+1} N_{\max}^{k} \frac{k^{k-j}}{(k-j)!} \frac{1}{k!} \dfrac{k!}{k^2n^{\epsilon}N_{\max}^km^2} \left(\dfrac{k!}{k^2n^{\epsilon}N_{\max}^km^2}\right)^{\frac{2k+j(k-j)}{k}-1} \nonumber\\
& \leq~~ m^{k-j-1}k^{k-j-1}\left(\dfrac{k!}{k^2n^{\epsilon}N_{\max}^km^2}\right)^{\frac{k+j(k-j)}{k}}\frac{1}{kn^{\epsilon}}\nonumber\\
& \leq~~ m^{k-j-1}k^{k-j-1}\left(\dfrac{k!}{k^2n^{\epsilon}N_{\max}^km^2}\right)^{\frac{k-j-1}{2}}\frac{1}{kn^{\epsilon}}\label{eqn:xmaxbound}\\
& =~~ \left(\dfrac{k!}{n^{\epsilon}N_{\max}^k}\right)^{\frac{k-j-1}{2}} \frac{1}{kn^{\epsilon}}\nonumber\\
& \leq~~ \frac{1}{kn^{\epsilon}}\label{eqn:uselowerbound}
\end{align}

Eqn. (\ref{eqn:xmaxbound}) comes from two facts: 1) $k \leq N_{\max}$ (and thus the term $\dfrac{k!}{k^2n^{\epsilon}N_{\max}^km^2}< 1$) and 2) by tedious by elementary calculation, we can show that $\frac{k+j(k-j)}{k}\geq \frac{k-j-1}{2}$ when $\ceil*{\frac{k+1}{2}} \leq j \leq k-1$. 
Eqn. (\ref{eqn:uselowerbound}) holds since $k \leq N_{\max}$ (and thus $\dfrac{k!}{n^{\epsilon}N_{\max}^k} < 1$). 

\myparagraph{The third term of Eqn. (\ref{eqn:mainestimate}):} For the third term of (\ref{eqn:mainestimate}), we just plug in the condition $q \leq \left(\dfrac{k!}{n^{\epsilon}m^kN_{\max}^{k}}\right)^{4/k^2}$:
\begin{align}\label{eqn:2ndterm}
\binom{mN_{\max}}{k} q^{\frac{(k-1)^2}{4}+2k} &~~\leq~~ \frac{(mN_{\max})^k}{k!}q^{\frac{(k-1)^2}{4}+2k} \nonumber\\
&~~\leq~~ \frac{(mN_{\max})^k}{k!} \dfrac{k!}{n^{\epsilon}m^kN_{\max}^{k}}\left(\dfrac{k!}{n^{\epsilon}m^kN_{\max}^{k}}\right)^{\frac{4}{k^2}\left(\frac{(k-1)^2}{4}+2k\right)-1} ~~\leq~~ \frac{1}{n^{\epsilon}}
\end{align}
where the last inequality holds as $\dfrac{k!}{n^{\epsilon}m^kN_{\max}^{k}} < 1$. 

Finally, combining (\ref{eqn:firsttermestimate}), (\ref{eqn:uselowerbound}) and (\ref{eqn:2ndterm}), we have:
\begin{align*}
E[\aX \mid \aF] \leq \frac{1}{n^{\epsilon}} + \frac{k}{2} \cdot \frac{1}{kn^{\epsilon}}  + \frac{1}{n^{\epsilon}} = \frac{5}{2n^{\epsilon}}
\end{align*}
This proves Lemma \ref{lem:insertioncaseA}. 
\end{proof}

\subsection{Existences of constants $c_2^a$ and $c_3^a$}
\label{appendix:constantsc2c3}

We claim that there exist constants $c^a_2$ and $c^a_3$ (which depend on the doubling constant $L$ of $\mu$, the \Bconst{} $\myBC$, and the regularity constant $\myRC$), such that if 
\begin{align*}
q \le c^a_2 \cdot \left(\frac{1}{n}\right)^{c^a_3/\aK}\cdot \frac{\aK}{\mys n},
\end{align*}
then the conditions in Eqn. (\ref{eqn:qbound1}) will hold. 
We prove this by elementary calculation below, where we will use the Stirling's approximation $k! > \sqrt{2\pi}k^k e^{-k}$ and the fact that $k \leq N_{\max} = 3\rho sn$ ($\aK \leq c \mys n$ implies this due to the choice of $c$ in the proof of Lemma \ref{lem:insertioncaseA}) and $m \le n$ (where recall that $m$ is the size of the number of clusters in the clique-decomposition of $P_i$). 
\begin{align*}
&\left(\frac{k!}{n^{\epsilon} N_{\max}^k m} \right)^{\frac{1}{2k}} > \left(\frac{\sqrt{2\pi} k^k e^{-k}}{n^{1+\epsilon}} \right)^{\frac{1}{2k}} \left(\frac{1}{3\rho sn} \right)^{\frac{1}{2}} = \left(e^{-\frac{1}{2}} (2\pi)^{\frac{1}{4k}}\right) \left(\frac{1}{n} \right)^{\frac{1+\epsilon}{2k}} \left(\frac{k}{3\rho sn} \right)^{\frac{1}{2}}\\
&\left(\frac{k!}{k^2n^{\epsilon} N_{\max}^k m^2} \right)^{\frac{1}{k}} > \left(\frac{\sqrt{2\pi} k^k e^{-k}}{k^2} \right)^{\frac{1}{k}}\left(\frac{1}{n} \right)^{\frac{2+\epsilon}{k}}\left(\frac{1}{3\rho sn} \right) = \left(e^{-1} (2\pi)^{\frac{1}{4k}} k^{-\frac{2}{k}}\right) \left(\frac{1}{n} \right)^{\frac{2+\epsilon}{k}}\left(\frac{k}{3\rho sn} \right)\\
&\left(\frac{k!}{n^{\epsilon} m^k N_{\max}^k} \right)^{\frac{4}{k^2}} > \left(\frac{1}{3\rho sn} \right)^{\frac{4}{k}} (\sqrt{2\pi} k^k e^{-k})^{\frac{4}{k^2}}\left(\frac{1}{n}\right)^{\frac{4(k+\epsilon)}{k^2}} = \left((2\pi)^{\frac{2}{k^2}} e^{-\frac{4}{k}}\right) \left(\frac{1}{n}\right)^{\frac{4(k+\epsilon)}{k^2}}\left(\frac{k}{3\rho sn} \right)^{\frac{4}{k}}
\end{align*}
Thus, by comparing the exponents of each term, we know that if $q \leq \frac{1}{3\rho e}\left(\frac{1}{n}\right)^{\frac{4+\epsilon}{k}}\left(\frac{k}{sn}\right)$, then the condition on $q$ holds. 
Finally, since $k = \floor*{\frac{\aK}{2|\Lambda|}}-2$, easy to see there exists constants $c_2^{a}, c_3^{a}$ such that the constraint $q \leq c_2^a \left(\frac{1}{n}\right)^{c_3^a/\aK}\frac{\aK}{sn}$ implies the condition.

\subsection{The missing details in case (B) of Theorem \ref{thm:insertiononlybadedge}}\label{appendix:detailscaseBinsertiononly}
Denote by $\aH$ the event that ``for every $v \in V$, the ball $B_{r}(v)\cap V$ contains at most $3L\rho \mys n$ points'', and $\aH^c$ is its complement. 
By an argument similar to that of Claim \ref{lm:r/2ballupperbound}, we have that $\myprob[\aH^c] \le n^{-5}$. Set $N_u := |B^V_r(u)|$ and $N_v := |B^V_r(v)|$. Let $\tilde{k} := \floor*{\frac{\aK}{2}} - 2$. For every set $S$ of $(\tilde{k}+2)$ vertices in $\tilde{G}_{uv}^{local}$, let $A_S$ be the event ``\emph{$S$ is a $uv$-clique in $\tilde{G}_{uv}^{local}$}'' and $\aY_S$ its indicator random variable. Set
\begin{align*}
\aY = \sum\limits_{|S|=\tilde{k}+2} \aY_S. 
\end{align*}
Then $\aY$ is the number of $uv$-cliques of size $(\tilde{k}+2)$ in $\tilde{G}_{uv}^{local}$. Linearity of expectation gives:
\begin{align}\label{eqn:expectationI}
\myE[\aY \mid \aH]~~ =~~ \sum\limits_{|S|=\tilde{k}+2}\myE[\aY_S\mid \aH] ~~=~~ \sum\limits_{\substack{x_1 + x_2 = \tilde{k} \\ 0 \leq x_1 \leq N_u - 1, 0\le x_2 \le N_v - 1}} \binom{N_u-1}{x_1} \binom{N_v-1}{x_2}q^{(x_1+1)(x_2+1)-1}
\end{align}

To estimate this quantity, we first prove the following result:

\begin{lemma}\label{lem:insertioncaseB}
For any constant $\epsilon>0$, 
we have that $\myE[\aY \mid \aH] = O(n^{-\epsilon})$ as long as the following condition on $q$ holds: 
\begin{align}\label{eqn:qbound2}
q ~~\leq~~ \min \left\{\left(\dfrac{\tilde{k}!}{{\tilde{k}}^2n^{\epsilon}(N_u+N_v)^{\tilde{k}}}\right)^{1/\tilde{k}},~~ \left(\frac{\tilde{k}!}{n^{\epsilon}(N_u+N_v)^{\tilde{k}}}\right)^{16/\tilde{k}^2}\right\}. 
\end{align}

Specifically, setting $\epsilon=3$ (a case which we will use later), we have $\myE[\aY\mid \aH] = O(n^{-3})$.
\end{lemma}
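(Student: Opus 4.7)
The starting point is the algebraic identity $(x_1+1)(x_2+1) - 1 = x_1 x_2 + (x_1 + x_2) = x_1 x_2 + \tilde{k}$, which lets us factor out $q^{\tilde{k}}$ and rewrite the expectation as
\[
\myE[\aY \mid \aH] ~=~ q^{\tilde{k}} \sum_{x_1+x_2 = \tilde{k}} \binom{N_u-1}{x_1}\binom{N_v-1}{x_2}\, q^{x_1 x_2},
\]
where each product of binomials is at most $\binom{N_u+N_v-2}{\tilde{k}} \le (N_u+N_v)^{\tilde{k}}/\tilde{k}!$ by Vandermonde's identity. The factor $q^{x_1 x_2}$ is trivial at the boundary ($x_1 x_2 = 0$) but shrinks to $q^{\Theta(\tilde{k}^2)}$ when $x_1, x_2$ are balanced, so the plan is to split the sum by $x_{\max} := \max(x_1, x_2)$, mirroring the three-part split in the proof of Lemma~\ref{lem:insertioncaseA} but simpler because now $m = 2$.

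I would separate the sum into: (a) the \emph{boundary} part ($x_{\max} = \tilde{k}$, contributing two terms), (b) the \emph{intermediate} part ($x_{\max} \in [\lceil(\tilde{k}+1)/2\rceil, \tilde{k}-1]$, contributing at most $\tilde{k}$ symmetric terms), and (c) the \emph{balanced} part ($x_{\max} \in [\lceil\tilde{k}/2\rceil, \lceil(\tilde{k}+1)/2\rceil-1]$, contributing at most one term). For (a), each term is at most $(N_u+N_v)^{\tilde{k}} q^{\tilde{k}}/\tilde{k}!$, and the first bound in the min of Eqn.~(\ref{eqn:qbound2}) directly gives $O(1/(\tilde{k}^2 n^\epsilon))$. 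For (b), every $j = x_{\max}$ in this range satisfies $j(\tilde{k}-j) \ge \tilde{k}-1$, so each term is bounded by $(N_u+N_v)^{\tilde{k}} q^{2\tilde{k}-1}/\tilde{k}!$; since the first bound in Eqn.~(\ref{eqn:qbound2}) implies $q \le 1$ in the regime of interest (a short Stirling estimate shows that the argument of the $1/\tilde{k}$-th power is below $1$, using that $\tilde{k} \le \min(N_u, N_v)-1$), we get $q^{2\tilde{k}-1} \le q^{\tilde{k}}$, hence each intermediate term is again $O(1/(\tilde{k}^2 n^\epsilon))$ and summing over $\Theta(\tilde{k})$ of them gives $O(1/(\tilde{k} n^\epsilon))$.

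For (c), the exponent of $q$ is at least $(\tilde{k}-1)^2/4 + \tilde{k} \ge \tilde{k}^2/16$ for $\tilde{k} \ge 2$, and the second bound in the min of Eqn.~(\ref{eqn:qbound2}) gives exactly $q^{\tilde{k}^2/16} \le \tilde{k}!/(n^\epsilon (N_u+N_v)^{\tilde{k}})$, so this single term contributes at most $O(n^{-\epsilon})$. Combining the three parts yields $\myE[\aY \mid \aH] = O(n^{-\epsilon})$, and setting $\epsilon = 3$ produces the claimed $O(n^{-3})$ bound. The main delicate point is the intermediate range: one must verify that the number of terms (growing as $\Theta(\tilde{k})$) is absorbed by the $\tilde{k}^2$ slack in the first bound of Eqn.~(\ref{eqn:qbound2}). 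Thus the two conditions on $q$ are tightly calibrated --- the first controls the per-term $q^{\tilde{k}}$ behavior while tolerating a linear number of intermediate terms, and the second is tailored to the quadratic growth of the $q$-exponent in the balanced regime, where no linear slack would suffice.
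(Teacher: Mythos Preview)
Your argument is correct, modulo one small slip: you justify $q \le 1$ via ``$\tilde{k} \le \min(N_u, N_v)-1$'', but that inequality need not hold (take $N_u$ small and $N_v$ large). The right hypothesis is $\tilde{k} \le N_u + N_v - 2$, which follows whenever the sum is nonempty and suffices since then $\tilde{k}! \le \tilde{k}^{\tilde k} \le (N_u+N_v)^{\tilde k}$.

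Your route differs from the paper's. The paper splits the index range at $\lfloor \tilde{k}/4 \rfloor$, not $\tilde{k}/2$: it groups the outer terms $i \le \tilde{k}/4$ (and their mirror images) and controls them via the sharper bound $\binom{N_u}{i}\binom{N_v}{\tilde{k}-i} + \binom{N_u}{\tilde{k}-i}\binom{N_v}{i} \le (N_u+N_v)^{\tilde{k}}/\bigl(i!(\tilde{k}-i)!\bigr)$, treating $i=0$ and $i\ge 1$ separately and invoking $i(\tilde{k}-i)/\tilde{k} \ge i/2$; the inner terms $\tilde{k}/4 < i < 3\tilde{k}/4$ all satisfy $i(\tilde{k}-i) \ge \tilde{k}^2/16$ and are bounded \emph{collectively} by Vandermonde and the second condition on $q$ (this is precisely where the exponent $16/\tilde{k}^2$ comes from). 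By contrast, you bound every single binomial product by the Vandermonde total $\binom{N_u+N_v-2}{\tilde{k}}$, so each of the at most $\tilde{k}+1$ terms is at most $(N_u+N_v)^{\tilde{k}} q^{\tilde{k}} / \tilde{k}! \le 1/(\tilde{k}^2 n^\epsilon)$ by the first condition alone, and summing gives $O(n^{-\epsilon})$ directly. In particular, your treatment of part (c) via the second condition is unnecessary---the same crude per-term bound handles it---so you have incidentally shown that in this lemma the second condition on $q$ is redundant. The paper's more elaborate split is the $m=2$ specialization of the template from Lemma~\ref{lem:insertioncaseA} (where, with $m$ clusters, your per-term Vandermonde shortcut would not work); you trade that uniformity of technique for a shorter, more elementary argument.
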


The proof of this technical result can be found in Appendix \ref{appendix:lem:insertioncaseB}. 

Note that if event $\aH$ is true, then $N_u+N_v \leq 6L\myRC \mys n$. 
In this case, there exist two constants $c^{b}_2$ and $c^{b}_3$ which depend on the doubling constant $L$ of $\mu$, the \Bconst{} $\myBC$, and the regularity constant $\myRC$, 
such that if $\aK \leq 12L\rho \mys n$ and 
\begin{align*}
q \leq  c^b_2 \cdot \left(\frac{1}{n}\right)^{c^b_3/\aK}\cdot \frac{\aK}{\mys n},
\end{align*}
then the conditions in Eqn. (\ref{eqn:qbound2}) will hold (the simple proof of this can be found in Appendix \ref{appendix:caseBconstantsc2c3}). 

On the other hand, we have
\begin{align*}
\myprob\left[ \tilde{G}_{uv}^{local} ~\text {has a}~uv\text{-clique of size} \ge \frac{\aK}{2} \right] &~=  \myprob[\aY > 0] \\
&~=~ \myprob[\aY > 0 \mid \aH] \cdot \myprob[\aH] + \myprob[\aY > 0 \mid \aH^c] \cdot \myprob[\aH^c] \\
&~\leq~ \myprob[\aY > 0 \mid \aH] + \myprob[\aH^c]\\
&~\leq~ \myE[\aY \mid \aH] + n^{-5}. 
\end{align*}
Thus, by Lemma (\ref{lem:insertioncaseB}), we know that 
\begin{align}\label{eqn:caseBalmostfinal}
\text{If} ~\aK \leq 6L\rho \mys n~\text{and}~ q \leq ~ c^b_2 \cdot \left(\frac{1}{n}\right)^{c^b_3/\aK}\cdot \frac{\aK}{\mys n} \text{, then} ~ \myprob \left[\tilde{G}_{uv}^{local}~\text {has a}~uv\text{-clique of size} \ge \frac{\aK}{2} \right] = O(n^{-3})
\end{align}

Finally, suppose $\aK > \aK_1 = 12L\rho \mys n$. Set
\begin{align*}
c_1^b = c^b_2 \cdot \left(\frac{1}{n}\right)^{c^b_3/(12L\rho \ln n)}\cdot \frac{\aK_1}{\mys n} \leq c^b_2 \cdot \left(\frac{1}{n}\right)^{c^b_3/\aK_1}\cdot \frac{\aK_1}{\mys n}
\end{align*}
where the inequality holds as by \myassumption{} $\mys n > \ln n$. Plugging in $\aK_1 = 12L\rho \mys n$ to the definition of $c_1^b$, it is then easy to see that $c_1^b$ is a positive constant. Using Eqn. (\ref{eqn:caseBalmostfinal}), we know that if $q \leq c_1^b$ and $\aK > \aK_1 = 12L \rho \mys n$, then
\begin{align*}
\myprob\left[ \tilde{G}_{uv}^{local} ~\text {has a}~uv\text{-clique of size} \ge \frac{\aK}{2} \right] \leq \myprob\left[ \tilde{G}_{uv}^{local} ~\text {has a}~uv\text{-clique of size} \ge \frac{\aK_1}{2} \right] = O(n^{-3})
\end{align*}

Combining this with the discussion above and applying Lemma \ref{lem:insertioncaseB}, we have
\begin{align*}
\text{If} ~ q \leq ~ \min \left\{c_1^b, c^b_2 \cdot \left(\frac{1}{n}\right)^{c^b_3/\aK}\cdot \frac{\aK}{\mys n}\right\} \text{, then} ~ \myprob \left[\tilde{G}_{uv}^{local}~\text {has a}~uv\text{-clique of size} \ge \frac{\aK}{2} \right] = O(n^{-3})
\end{align*}

\subsection{The proof of Lemma \ref{lem:insertioncaseB}}
\label{appendix:lem:insertioncaseB}
\begin{proof}
Easy to see that if $\aK > 2(N_u + N_v)$, then $\tilde{k} > N_u + N_v - 2$ which implies that the summation on the right hand side of Eqn. (\ref{eqn:expectationI}) is $0$.
Now let's focus on the case when $\aK \leq 2(N_u + N_v)$. In this case, we have $\tilde{k} \leq (N_u -1 )+ (N_v - 1) < N_u + N_v$. Note that the right hand side of (\ref{eqn:expectationI}) can be bounded from above by:

\begin{align}
&\sum\limits_{\substack{x_1 + x_2 = \tilde{k} \\ 0 \leq x_1 \leq N_u - 1, 0 \leq x_2 \leq N_v - 1}} \binom{N_u-1}{x_1} \binom{N_v-1}{x_2}q^{(x_1+1)(x_2+1)-1} ~~\leq~~ q^{\tilde{k}}\sum\limits_{i=0}^{\tilde{k}}\binom{N_u}{i}\binom{N_v}{\tilde{k}-i}q^{i(\tilde{k}-i)}\nonumber\\
&\leq~~ q^{\tilde{k}}\left(\sum\limits_{i=0}^{\floor*{\frac{\tilde{k}}{4}}}\left[\binom{N_u}{i}\binom{N_v}{\tilde{k}-i} ~~+~~ \binom{N_u}{\tilde{k}-i}\binom{N_v}{i}\right]q^{i(\tilde{k}-i)}\right) ~~+~~ \binom{N_u+N_v}{\tilde{k}}q^{\tilde{k}+\frac{\tilde{k}^2}{16}}\label{eqn:localballestimate}
\end{align}

Eqn. (\ref{eqn:localballestimate}) is due to the fact that when $\floor*{\frac{\tilde{k}}{4}} + 1 \leq i \leq \tilde{k} - \floor*{\frac{\tilde{k}}{4}} - 1 $, we have $i(\tilde{k}-i) \geq \left(\floor*{\frac{\tilde{k}}{4}} + 1\right)\left(\floor*{\frac{\tilde{k}}{4}} + 1\right) \geq \frac{\tilde{k}^2}{16}$. Now it suffices to estimate the two terms on the right hand side of Eqn. (\ref{eqn:localballestimate}).

\myparagraph{The first term of Eqn. (\ref{eqn:localballestimate}):} For the first term of (\ref{eqn:localballestimate}), we have the following estimate:

\begin{align}
\bigg[\binom{N_u}{i}\binom{N_v}{\tilde{k}-i} + \binom{N_u}{\tilde{k}-i}\binom{N_v}{i}\bigg]q^{\tilde{k}+i(\tilde{k}-i)} &~~\leq~~ \dfrac{N_u^iN_v^{\tilde{k}-i} + N_u^{\tilde{k}-i}N_v^i}{i!(\tilde{k}-i)!}q^{\tilde{k}}q^{i(\tilde{k}-i)} \nonumber\\
&~~\leq~~ \dfrac{(N_u+N_v)^{\tilde{k}}}{i!(\tilde{k}-i)!}q^{\tilde{k}}q^{i(\tilde{k}-i)}\nonumber
\end{align}
By plugging in the condition $q \leq \left(\dfrac{\tilde{k}!}{\tilde{k}^2n^{\epsilon}(N_u+N_v)^{\tilde{k}}}\right)^{1/\tilde{k}}$, we have:

\begin{align}
\dfrac{(N_u+N_v)^{\tilde{k}}}{i!(\tilde{k}-i)!}q^{\tilde{k}}q^{i(\tilde{k}-i)} \leq \dfrac{(N_u+N_v)^{\tilde{k}}}{i!(\tilde{k}-i)!} \frac{\tilde{k}!}{\tilde{k}^2n^{\epsilon}(N_u+N_v)^{\tilde{k}}}q^{i(\tilde{k}-i)} = \dfrac{\tilde{k}!}{i!(\tilde{k}-i)!}q^{i(\tilde{k}-i)}\frac{1}{\tilde{k}^2n^{\epsilon}}\nonumber\\\nonumber
\end{align}

For $i = 0$, we have $\dfrac{\tilde{k}!}{i!(\tilde{k}-i)!}q^{i(\tilde{k}-i)}\frac{1}{\tilde{k}^2n^{\epsilon}} =  \frac{1}{\tilde{k}^2n^{\epsilon}}$. For $i \geq 1$, note that $1 \leq i \leq \floor*{\frac{\tilde{k}}{4}}$ implies $\frac{i(\tilde{k}-i)}{\tilde{k}} \geq \frac{i}{2}$. Thus, we have:
\begin{align*}
\dfrac{\tilde{k}!}{i!(\tilde{k}-i)!}q^{i(\tilde{k}-i)}\frac{1}{\tilde{k}^2n^{\epsilon}} &~~\leq~~ \frac{\tilde{k}^i}{i!}\left(\dfrac{\tilde{k}!}{\tilde{k}^2n^{\epsilon}(N_u+N_v)^{\tilde{k}}}\right)^{\frac{i(\tilde{k}-i)}{\tilde{k}}}\frac{1}{\tilde{k}^2n^{\epsilon}} \\
&~~\leq~~ \frac{\tilde{k}^i}{i!}\left(\dfrac{\tilde{k}!}{\tilde{k}^2n^{\epsilon}(N_u+N_v)^{\tilde{k}}}\right)^{\frac{i}{2}}\frac{1}{\tilde{k}^2n^{\epsilon}} ~~\leq~~ \left(\dfrac{\tilde{k}!}{n^{\epsilon}(N_u+N_v)^{\tilde{k}}}\right)^{\frac{i}{2}}\frac{1}{\tilde{k}^2n^{\epsilon}} ~~\leq~~ \frac{1}{\tilde{k}^2n^{\epsilon}}\\
\end{align*}
The last two inequalities hold since $\tilde{k} \leq  N_u + N_v$. Therefore,
\begin{align}
q^{\tilde{k}}\sum\limits_{i=0}^{\floor*{\frac{\tilde{k}}{4}}}\left[\binom{N_u}{i}\binom{N_v}{\tilde{k}-i} + \binom{N_u}{\tilde{k}-i}\binom{N_v}{i}\right]q^{i(\tilde{k}-i)} ~~\leq~~ \frac{\tilde{k}}{4}\frac{1}{\tilde{k}^2n^{\epsilon}} ~~=~~ \frac{1}{4\tilde{k}n^{\epsilon}}\label{eqn:localballfirstterm}
\end{align}

\myparagraph{The second term of Eqn. (\ref{eqn:localballestimate}):} For the second term of (\ref{eqn:localballestimate}), directly plugging in the condition $q \leq \left(\frac{\tilde{k}!}{(N_u+N_v)^{\tilde{k}}n^{\epsilon}}\right)^{16/\tilde{k}^2}$, we have:
\begin{align}
\binom{N_u+N_v}{\tilde{k}}q^{\tilde{k}+\frac{\tilde{k}^2}{16}} ~~\leq~~ \frac{(N_u+N_v)^{\tilde{k}}}{\tilde{k}!}\frac{\tilde{k}!}{(N_u+N_v)^{\tilde{k}}n^{\epsilon}}\left(\frac{\tilde{k}!}{(N_u+N_v)^{\tilde{k}}n^{\epsilon}}\right)^{\frac{16}{\tilde{k}^2}\left(\tilde{k}+\frac{\tilde{k}^2}{16}\right)-1} ~~\leq~~ \frac{1}{n^{\epsilon}}\label{eqn:localballsecondterm}
\end{align} 

Finally, combining (\ref{eqn:localballfirstterm}) and (\ref{eqn:localballsecondterm}), we have:
\begin{align*}
\myE[\aY \mid \aH] \leq \frac{1}{4\tilde{k}n^{\epsilon}} + \frac{1}{n^{\epsilon}} < \frac{2}{n^{\epsilon}}
\end{align*}
This finishes the proof of Lemma \ref{lem:insertioncaseB}. 
\end{proof}

\subsection{Existences of constants $c_2^b$ and $c_3^b$}\label{appendix:caseBconstantsc2c3}

Note that as event $\aH$ holds, we have $N_u + N_v \leq 6L\rho sn$. Also note that if $\aK \leq 12 L \rho \mys n$, then $\tilde{k} \leq 6 L \rho \mys n$. Hence
\begin{align*}
&\left(\frac{\tilde{k}!}{\tilde{k}^2 n^{\epsilon} (N_u + N_v)^{\tilde{k}}} \right)^{\frac{1}{\tilde{k}}} > \left(\frac{\sqrt{2\pi}\tilde{k}^{\tilde{k}} e^{-\tilde{k}}}{\tilde{k}^2 n^{\epsilon}} \right)^{\frac{1}{\tilde{k}}}\frac{1}{6L \rho sn} = \left( (2\pi)^{\frac{1}{2\tilde{k}}}\tilde{k}^{-\frac{2}{\tilde{k}}} e^{-1}\right) \left( \frac{1}{n} \right)^{\frac{\epsilon}{\tilde{k}}} \left(\frac{\tilde{k}}{6L \rho sn} \right)\\
&\left(\frac{\tilde{k}!}{n^{\epsilon} (N_u + N_v)^{\tilde{k}}} \right)^{\frac{16}{\tilde{k}^2}} > \left(\frac{1}{n} \right)^{\frac{16\epsilon}{\tilde{k}^2}} \left(\sqrt{2\pi}\tilde{k}^{\tilde{k}} e^{-\tilde{k}}\right)^{\frac{16}{\tilde{k}^2}} \left(\frac{1}{6L \rho sn} \right)^{\frac{16}{\tilde{k}}} = \left(2\pi \right)^{\frac{8}{\tilde{k}^2}}e^{-\frac{16}{\tilde{k}}} \left(\frac{1}{n} \right)^{\frac{16\epsilon}{\tilde{k}^2}} \left(\frac{\tilde{k}}{6L \rho sn} \right)^{\frac{16}{\tilde{k}}}
\end{align*}

Finally, observe that $(2\pi)^{\frac{8}{k^2}} > 1$, $e^{-\frac{16}{k}} > \frac{1}{e}$.
It then follows that there exists constants $c_2^b,c_3^b$ such that $c_2^b \left(\frac{1}{n} \right)^{c_3^b/\aK}\frac{\aK}{sn}$ is smaller than the last term in the right hand side of each equation above. Hence $\tilde{k} \leq 6 L \rho \mys n$ and $q \leq c_2^b \left(\frac{1}{n} \right)^{c_3^b/\aK}\frac{\aK}{sn}$ implies the condition on $q$ as in Eqn. (\ref{eqn:qbound2}).  

\section{Edge clique numbers for the deletion-only case}
\label{appendix:subsec:deletion}

In this section, we will give a lower bound on $\omega_{u,v}(\Ghat)$ for the deletion-only case. 
On the high level, we first prove the following lemma for an \myER{} random graph, via an application of Janson's Inequality \cite{alon2016probabilistic}. The proof is in Section \ref{appendix:subsec:ERclique}. 


\begin{lemma}\label{thm:ERclique}
Suppose $G = G(N, \bar{p})$ is an \myER{} random graph with $\bar{p} \in \left(\left(\frac{1}{N}\right)^{\frac{1}{10}} , \left(\frac{1}{N}\right)^{\frac{1}{\sqrt[64]{N}}}\right)$. Let $k = \floor*{\log_{1/\bar{p}}{N}}$, then
\begin{align*}
\myprob[\omega(G) < k] < e^{-N^{3/2}}
\end{align*} 
where $\omega(G)$ is the clique number of graph $G$.
\end{lemma}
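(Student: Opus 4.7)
The plan is to apply Janson's inequality to the family $\{X_S\}_{S \in \binom{[N]}{k}}$ of indicators that the $k$-subset $S$ spans a clique in $G = G(N, \bar{p})$. Setting $X = \sum_S X_S$, the event $\{\omega(G) < k\}$ coincides with $\{X = 0\}$, and Janson's inequality yields
\begin{equation*}
\myprob[X = 0] \leq \exp\!\left(-\frac{\mu^2}{2(\mu + \Delta)}\right), \qquad \mu = \myE[X], \quad \Delta = \sum_{\substack{S \neq T \\ |S \cap T| \geq 2}} \myprob[X_S = X_T = 1].
\end{equation*}
The target bound $e^{-N^{3/2}}$ will follow once we establish $\mu \geq 2 N^{3/2}$ and $\mu^2 / \Delta \geq 4 N^{3/2}$.

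For $\mu$, first note that $k = \floor*{\log_{1/\bar{p}} N}$ gives $\bar{p}^k \geq 1/N$, hence $\bar{p}^{\binom{k}{2}} \geq N^{-(k-1)/2}$ and
$\mu = \binom{N}{k}\bar{p}^{\binom{k}{2}} \geq (N/k)^k N^{-(k-1)/2} = N^{(k+1)/2}/k^k$.
The hypotheses $(1/N)^{1/10} < \bar{p} < (1/N)^{1/\sqrt[64]{N}}$ translate directly to $10 \leq k \leq N^{1/64}$, so $k^k \leq N^{k/64}$ and $\mu \geq N^{31k/64 + 1/2}$. For $k \geq 10$ this comfortably exceeds $N^5$, far above $N^{3/2}$.

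For $\Delta$, decompose by overlap size $j = |S \cap T|$:
\begin{equation*}
\frac{\Delta}{\mu^2} = \sum_{j=2}^{k-1} g_j, \qquad g_j := \frac{\binom{k}{j}\binom{N-k}{k-j}}{\binom{N}{k}}\, \bar{p}^{-\binom{j}{2}}.
\end{equation*}
Using the standard hypergeometric estimate $\binom{k}{j}\binom{N-k}{k-j}/\binom{N}{k} \leq (2k^2/N)^j/j!$ together with $\bar{p}^{-1} \leq N^{1/k}$, I would derive $g_j \leq (2k^2)^j/(j!) \cdot N^{-j(2k-j+1)/(2k)}$ and analyze $\log g_j$ as a function of $j$. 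Its derivative at $j = 2$ is dominated by $-(1 - 3/(2k))\log N$, which is negative, and the endpoint $j = k-1$ yields $g_{k-1} \leq k^{k+1}/N^{(k-1)/2}$, at most $N^{-4}$ for $k \geq 10$. Since $\log g_j$ has at most one sign change in its derivative on $[2, k-1]$, the maximum is attained at one of the two endpoints, each bounded by $N^{-1.8}$. Summing over at most $k \leq N^{1/64}$ terms gives $\Delta/\mu^2 \leq k \cdot \max_j g_j \leq N^{1/64 - 1.8} = o(N^{-3/2})$, as required.

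The main obstacle will be the unimodality/endpoint analysis of $g_j$ across the full parameter range, since both endpoints of the assumed range on $\bar{p}$ play an essential role: the upper bound $\bar{p} < (1/N)^{1/\sqrt[64]{N}}$ prevents $k$ (and thus $(2k^2)^j$ and $k^k$) from growing too fast, while the lower bound $\bar{p} > (1/N)^{1/10}$ prevents $\bar{p}^{-\binom{j}{2}}$ from dominating the hypergeometric decay at small $j$. Once the two endpoint estimates on $g_j$ are combined with $\mu \geq N^{3/2}$, Janson's inequality delivers $\myprob[X = 0] \leq \exp(-N^{3/2})$ as claimed.
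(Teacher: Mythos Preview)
Your proposal is correct and follows essentially the same route as the paper: both apply Janson's inequality to the indicators of $k$-cliques, lower-bound $\mu$ using $k \in [10, N^{1/64}]$, write $\Delta/\mu^2 = \sum_{j=2}^{k-1} g_j$ with the same $g_j$, and argue via a second-derivative/convexity analysis (together with $g'_j|_{j=2} < 0$) that the maximum occurs at an endpoint, then bound $g_2$ and $g_{k-1}$ separately. The only cosmetic difference is that you use the unified Janson bound $\exp(-\mu^2/(2(\mu+\Delta)))$ whereas the paper splits into the cases $\Delta \le \zeta$ and $\Delta \ge \zeta$ and invokes the standard and extended Janson inequalities respectively.
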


\myparagraph{Remark. }  
One can easily verify that $(\frac{1}{N})^{\frac{1}{10}}$ is very close to $0$ and $(\frac{1}{N})^{\frac{1}{\sqrt[64]{N}}}$ is very close to $1$ as $N$ goes to infinity. Hence the range $\bar{p} \in \left(\left(\frac{1}{N}\right)^{\frac{1}{10}} , \left(\frac{1}{N}\right)^{\frac{1}{\sqrt[64]{N}}}\right)$ is broader (significantly more relaxed) than requiring that $\bar{p}$ is a constant between $(0,1)$. 
Hence, while not pursued in the present paper, it is possible to show that Theorem \ref{thm:combinedgoodclique} holds for a larger range of $p$. 


\subsection{Proof of Lemma \ref{thm:ERclique}}
\label{appendix:subsec:ERclique}

We first introduce the following two Janson's inequalities \cite{alon2016probabilistic}
\begin{theorem}[Janson's Inequality]\label{thm:JansonInequality}
Let $\Omega$ be a finite universal set, and let $R$ be a random subset of $\Omega$ given by $\myprob[r \in R] = p_{r}$, with these events being mutually independent over $r\in \Omega$. Let $\{A_i\}_{i\in I}$ be subsets of $\Omega$, and $I$ a finite index set. Let $B_{i}$ be the event $A_{i} \subseteq R$ and $\bar{B_i}$ be its complement. Let $\aX_i$ be the indicator random variable for $B_i$, and $\aX := \sum_{i\in I} \aX_i$ the number of $A_{i} \subseteq R$. For $i,j\in I$, we write $i \sim j$ if $i \neq j$ and $A_i \cap A_j \neq \emptyset$. We define $\Delta := \sum\limits_{i \sim j}\myprob[B_i \wedge B_j].$
Here, the sum is over ordered pairs. Finally, we set $\zeta : = E[\aX] = \sum\limits_{i\in I}\myprob[B_i].$ Then
\begin{align}
\myprob\left[\bigwedge_{i\in I}\bar{B_i}\right] \leq e^{-\zeta + \Delta /2}.
\end{align}
\end{theorem}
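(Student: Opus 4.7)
The plan is to prove Janson's inequality via a conditional-probability telescoping combined with Harris's inequality (the product-measure case of FKG). Fix an arbitrary linear order $1, 2, \ldots, m$ on $I$ and write
\begin{align*}
\myprob\!\left[\bigwedge_{i=1}^{m} \bar B_i\right]
&= \prod_{i=1}^{m} \myprob\!\left[\bar B_i \mid \bar B_1, \ldots, \bar B_{i-1}\right]
= \prod_{i=1}^{m}\!\left(1 - \myprob\!\left[B_i \mid \bar B_1, \ldots, \bar B_{i-1}\right]\right) \\
&\leq \exp\!\left(-\sum_{i=1}^{m} \myprob\!\left[B_i \mid \bar B_1, \ldots, \bar B_{i-1}\right]\right),
\end{align*}
using $1-x \leq e^{-x}$. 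The task then reduces to establishing the per-index lower bound
\[
\myprob\!\left[B_i \mid \bar B_1, \ldots, \bar B_{i-1}\right] \;\geq\; \myprob[B_i] - \sum_{j<i,\, j\sim i} \myprob[B_i \wedge B_j],
\]
since summing over $i$ gives $\sum_i \myprob[B_i \mid \cdots] \geq \zeta - \Delta/2$, where the factor $1/2$ appears because $\Delta$ is defined over ordered pairs.

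To prove this per-index bound, I would split the conditioning events into $Y := \bigwedge_{j \in D_i} \bar B_j$ and $Z := \bigwedge_{j \in E_i} \bar B_j$, where $D_i := \{j<i : j \sim i\}$ and $E_i := \{j<i : j \not\sim i\}$. The critical structural observation is that $B_i$ is measurable with respect to $\{\mathbb{I}[r \in R] : r \in A_i\}$ while $Z$ is measurable with respect to $\{\mathbb{I}[r \in R] : r \in \bigcup_{j \in E_i} A_j\}$, and these two subsets of $\Omega$ are disjoint by the definition of $E_i$. The mutual independence of the $\mathbb{I}[r \in R]$ therefore gives $\myprob[B_i \mid Z] = \myprob[B_i]$. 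Writing
\[
\myprob[B_i \mid Y \wedge Z]
= \frac{\myprob[B_i \mid Z] - \myprob[B_i \wedge \bar Y \mid Z]}{\myprob[Y \mid Z]}
\]
and applying the union bound $\myprob[B_i \wedge \bar Y \mid Z] \leq \sum_{j \in D_i} \myprob[B_i \wedge B_j \mid Z]$, it remains to bound each $\myprob[B_i \wedge B_j \mid Z]$ by the unconditional $\myprob[B_i \wedge B_j]$. This is where Harris's inequality enters: the event $B_i \wedge B_j$ is an increasing event in $R$ (it requires $A_i \cup A_j \subseteq R$), while $Z$ is a decreasing event, so $\myprob[(B_i \wedge B_j) \cap Z] \leq \myprob[B_i \wedge B_j] \cdot \myprob[Z]$, i.e.\ $\myprob[B_i \wedge B_j \mid Z] \leq \myprob[B_i \wedge B_j]$.

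The main obstacle I foresee is making the division by $\myprob[Y \mid Z] \in (0,1]$ go in the correct direction for a lower bound: dividing a non-negative quantity by such a number only strengthens it, so the argument closes cleanly provided the numerator $\myprob[B_i] - \sum_{j \in D_i}\myprob[B_i \wedge B_j]$ is non-negative; if it is negative, the target inequality is vacuously true since $\myprob[B_i \mid \cdots] \geq 0$, so a brief case-split disposes of this. Once the per-index estimate is in hand, the remainder is mechanical assembly: multiply the bounds over $i$, invoke $1-x \leq e^{-x}$, and use $\sum_{i}\sum_{j<i,\,j\sim i}\myprob[B_i \wedge B_j] = \Delta/2$ to reach the exponent $-\zeta + \Delta/2$.
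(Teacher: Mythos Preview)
The paper does not prove this theorem; it is quoted (with attribution to \cite{alon2016probabilistic}) as a known tool and then applied in the proof of Lemma~\ref{thm:ERclique}. Your argument is the standard proof of Janson's inequality as it appears in Alon--Spencer: telescope, split the earlier indices into dependent and independent parts, use independence on the disjoint-support block and Harris/FKG on the dependent block, then assemble with $1-x\le e^{-x}$. It is correct, including the case-split when the numerator goes negative and the bookkeeping that the ordered-pair convention for $\Delta$ produces the factor $\tfrac12$.
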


\begin{theorem}[The Extended Janson's Inequality]\label{thm:extendedJanson}
Under the assumptions of Theorem \ref{thm:JansonInequality} and a further assumption that $\Delta \geq \zeta$, then
\begin{align}
\myprob\left[\bigwedge_{i\in I}\bar{B_i}\right] \leq e^{-\zeta^{2}/2\Delta}
\end{align}
\end{theorem}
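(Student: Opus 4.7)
The plan is to deduce the extended inequality from the basic Janson inequality (Theorem \ref{thm:JansonInequality}) by restricting attention to a carefully chosen sub-family of indices. The starting observation is that each $\bar{B_i}$ is a decreasing event in the random set $R$, so for any $S \subseteq I$ we have the trivial containment $\bigwedge_{i\in I}\bar{B_i} \subseteq \bigwedge_{i\in S}\bar{B_i}$, hence $\myprob[\bigwedge_{i\in I}\bar{B_i}] \le \myprob[\bigwedge_{i\in S}\bar{B_i}]$. Applying Theorem \ref{thm:JansonInequality} to the sub-family $\{A_i\}_{i\in S}$ yields
\[
\myprob\!\left[\bigwedge_{i\in S}\bar{B_i}\right] \;\le\; \exp(-\zeta_S + \Delta_S/2),
\]
where $\zeta_S := \sum_{i\in S}\myprob[B_i]$ and $\Delta_S := \sum_{i,j\in S,\, i\sim j}\myprob[B_i\wedge B_j]$. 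The problem thus reduces to finding an $S\subseteq I$ whose ``local'' exponent $\zeta_S - \Delta_S/2$ is as large as possible. Note that the hypothesis $\Delta \ge \zeta$ is exactly the regime where the basic Janson bound applied to the full index set $I$ is potentially vacuous (because $-\zeta + \Delta/2$ may be nonnegative), which is precisely why passing to a sub-family is needed.

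To pick a good $S$, I would invoke the probabilistic method: let $S$ be random, obtained by including each index $i\in I$ independently with probability $p\in[0,1]$. By linearity of expectation, $\myE[\zeta_S] = p\,\zeta$ and $\myE[\Delta_S] = p^2\,\Delta$, so
\[
\myE\!\left[\zeta_S - \tfrac{1}{2}\Delta_S\right] \;=\; p\,\zeta - \tfrac{1}{2}p^2\,\Delta.
\]
Viewed as a quadratic in $p$, this is maximized at $p^* = \zeta/\Delta$; the assumption $\Delta \ge \zeta$ is precisely what guarantees $p^*\in[0,1]$, so that this is a legitimate probability. Substituting $p^*$, the expected exponent equals $\zeta^2/(2\Delta)$, hence by the probabilistic method there exists a deterministic realization $S^* \subseteq I$ with $\zeta_{S^*} - \Delta_{S^*}/2 \;\ge\; \zeta^2/(2\Delta)$.

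Combining this choice with the monotonicity step and the basic Janson bound applied to $S^*$ gives
\[
\myprob\!\left[\bigwedge_{i\in I}\bar{B_i}\right] \;\le\; \myprob\!\left[\bigwedge_{i\in S^*}\bar{B_i}\right] \;\le\; \exp(-\zeta_{S^*} + \Delta_{S^*}/2) \;\le\; \exp(-\zeta^2/(2\Delta)),
\]
which is the desired extended inequality. The main conceptual step is recognizing that one should pass to a random sub-family in the first place; once this is set up, the remainder is a one-variable quadratic optimization. I do not anticipate any technical obstacle beyond verifying that Theorem \ref{thm:JansonInequality} applies verbatim to each realization of $\{A_i\}_{i\in S}$, which it does since restricting the index set preserves the independence structure on $R$ and only discards nonnegative terms from $\Delta$.
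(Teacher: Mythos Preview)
The paper does not prove this theorem; it is quoted as a known result from \cite{alon2016probabilistic} and used as a black box in the proof of Lemma~\ref{thm:ERclique}. Your argument is correct and is in fact the standard proof given in that reference: pass to a random sub-family with retention probability $p=\zeta/\Delta$, compute the expected exponent $p\zeta-\tfrac12 p^2\Delta=\zeta^2/(2\Delta)$, and invoke the probabilistic method to extract a deterministic sub-family achieving at least this value.
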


\myparagraph{Proof of Lemma \ref{thm:ERclique}.} 
Consider all the $k-$set $A_i$ of vertices in $G(N,\bar{p})$, let $B_i$ be the event ``$A_i$ is a clique in $G(N, \bar{p})$'' and $\aX_i$ its indicator random variable. Let $I$ be the finite index set enumerating all the $k-$sets in $G(N, \bar{p})$. Set
\begin{align*}
\aX = \sum\limits_{i\in I}\aX_i ;
\end{align*}
thus $\aX$ is the number of $k-$cliques in $G(N,\bar{p})$. Linearity of Expectation gives:
\begin{align}
\zeta = \myE[\aX] = \sum\limits_{i \in I}\myE[\aX_i]= \binom{N}{k}\bar{p}^{\binom{k}{2}}
\end{align}
For any fixed index $i$, we set
\begin{align}
\Delta^{*}_i:=\sum\limits_{j \in I:j \sim i}\myprob[B_j | B_i] = \sum\limits_{\ell=2}^{k-1}\binom{k}{\ell}\bar{p}^{\binom{k}{2}-\binom{\ell}{2}}\binom{N-k}{k-\ell}
\end{align}
It's easy to check that the following holds independent of $i$ (the details can be found in \cite{alon2016probabilistic}):
\begin{align*}
\Delta^*:= \Delta^{*}_i=\frac{\Delta}{\myE[X]} = \frac{\Delta}{\zeta}, 
\end{align*}
where $\Delta$, as before, is $\Delta := \sum\limits_{i,j\in I; i \sim j}\myprob[B_i \wedge B_j]$. 

The conditions $k = \floor*{\log_{1/\bar{p}}{N}}$ and $(\frac{1}{N})^{\frac{1}{10}} < \bar{p} < (\frac{1}{N})^{\frac{1}{\sqrt[64]{N}}}$ imply that $k \leq \log_{1/\bar{p}}{N} < k+1$, $ 10 < k < \sqrt[64]{N}$ and $\frac{\ln k}{\ln N} < \frac{1}{64}$. Note that for sufficiently large $N$, we have $N - k > \frac{N}{2}$. Thus

\begin{align*}
\zeta = \binom{N}{k}(\bar{p})^{\binom{k}{2}} \geq \frac{(N-k)^k}{k^k}(\bar{p})^{\frac{k(k-1)}{2}} > \left(\frac{N}{2k}\right)^{k} N ^{-\frac{k}{2}} = N^{\frac{k}{2}\left(1-\frac{2\ln{(2k)}}{\ln N}\right)} > N^{\frac{k}{2}\cdot \frac{1}{2}} > 2N^{\frac{3}{2}}\\
\end{align*}

If $\Delta \leq \zeta$, then by Theorem \ref{thm:JansonInequality}, we have
\begin{align*}
\myprob[\omega(G) < k] = \myprob[\aX = 0] = \myprob\left[\bigwedge_{i\in I}\bar{B_i}\right] \leq e^{-\zeta + \Delta/2} \leq e^{-\zeta/2} < e^{-N^{3/2}}
\end{align*}

Otherwise, if $\Delta \geq \zeta$, we want to apply Theorem \ref{thm:extendedJanson} to this case. It suffices to estimate the term $\frac{\zeta^2}{\Delta} = \frac{\zeta}{\Delta^*}$. In what follows, we estimate $\frac{\Delta^*}{\zeta}$ instead.
\begin{align}\label{eqn:delta*/mu}
\frac{\Delta^*}{\zeta} = \dfrac{\sum\limits_{\ell=2}^{k-1}\binom{k}{\ell}\binom{N-k}{k-\ell}(\bar{p})^{\binom{k}{2}-\binom{\ell}{2}}}{\binom{N}{k}(\bar{p})^{\binom{k}{2}}} =\sum\limits_{\ell=2}^{k-1}\dfrac{\binom{k}{\ell}\binom{N-k}{k-\ell}}{\binom{N}{k}}(\bar{p})^{-\binom{\ell}{2}}
\end{align}

Now set 
\begin{align*}
g(\ell) := \frac{\binom{k}{\ell}\binom{N-k}{k-\ell}}{\binom{N}{k}}(\bar{p})^{-\binom{\ell}{2}},
\end{align*}
thus the right hand side of Eqn. (\ref{eqn:delta*/mu}) equals $\sum_{\ell = 2}^{k-1} g(\ell)$. 
Below we will show that $g(\ell) \leq \max\{g(2),g(k-1)\}$ and will then get an upper bound for $g(2)$ and $g(k-1)$ respectively in order to upper bound $g(\ell)$. Note that for $\ell \in [2,k-1]$, we have
\begin{align}
g(\ell) & ~~=~~ \frac{\binom{k}{\ell}\binom{N-k}{k-\ell}}{\binom{N}{k}}(\bar{p})^{-\binom{\ell}{2}} ~~\leq~~ \frac{\binom{k}{\ell}\frac{(N-k)^{k-\ell}}{(k-\ell)!}}{\frac{(N-k)^k}{k!}}N^{\frac{1}{k}\frac{\ell(\ell-1)}{2}} ~~\leq~~ \frac{\binom{k}{\ell}k^\ell}{(N-k)^\ell}N^{\frac{1}{k}\frac{\ell(\ell-1)}{2}} \nonumber\\
& ~~\leq~~ \frac{(\frac{ek}{\ell})^\ell k^\ell}{(N-k)^\ell}N^{\frac{1}{k}\frac{\ell(\ell-1)}{2}} ~~=~~ N^{-\frac{\ell\ln{\left(\frac{\ell(N-k)}{ek^2}\right)}}{\ln{N}} + \frac{1}{k}\frac{\ell(\ell-1)}{2}}
\end{align}

Set 
\begin{align*}
h(\ell) := -\frac{\ell\ln{\left(\frac{\ell(N-k)}{ek^2}\right)}}{\ln{N}} + \frac{1}{k}\frac{\ell(\ell-1)}{2}
\end{align*}
and thus $g(\ell) = N^{h(\ell)}$. 
We claim that for any $\ell \in [2,k-1]$, we have $h(\ell) \leq \max \{h(2), h(k-1)\}$. 
Indeed, we can prove this by the following elementary calculations:

The derivative of $h(\ell)$ with respect to $\ell$ is
\begin{align*}
h'(\ell) = -\frac{\ln{\ell} + \ln{(N-k) - 2\ln{k}}}{\ln{N}} + \frac{2\ell-1}{2k}.
\end{align*} 
Next calculate its second derivative: 
\begin{align*}
h''(\ell) = -\frac{1}{\ln{N}}\frac{1}{\ell} + \frac{1}{k}.
\end{align*}
Note that $\ell_0 = \frac{k}{\ln{N}}$ is the only solution of $h''(\ell)=0$. Easy to check that $\ell_0 \leq k-1$. Therefore, we have the following two cases: 
\begin{description}\denselist
\item[case-a] 
$\ell_0 < 2$, then $h'(\ell)$ is strictly increasing on $\ell\in[2,k-1]$.
\item[case-b]
$\ell_0 \in [2, k-1]$, then $h'(\ell)$ is strictly decreasing on $[2, \ell_0]$ and strictly increasing on $[\ell_0, k-1]$. 
\end{description}

\begin{figure}[h]
\centering
\begin{tabular}{ccccccc}
\includegraphics[height=1.7cm]{./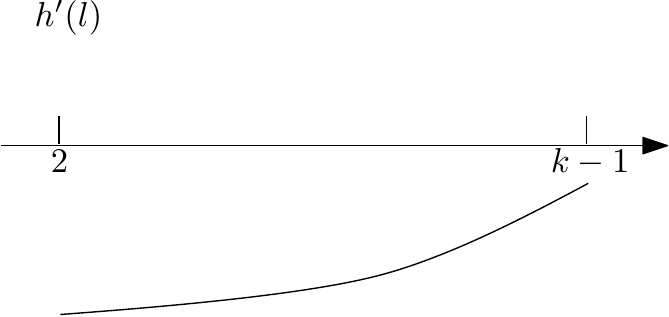} &\hspace*{0.05in}  & 
\includegraphics[height=1.7cm]{./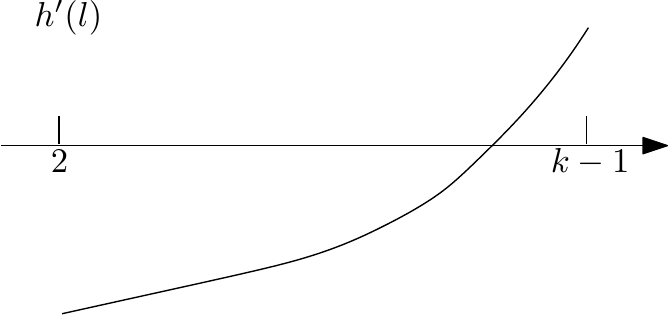} &\hspace*{0.05in}  & \includegraphics[height=1.7cm]{./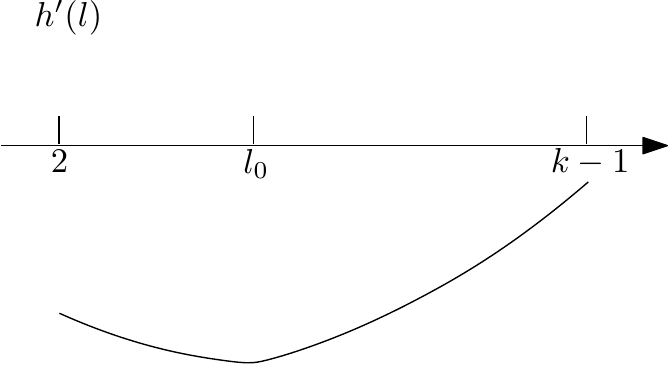} &\hspace*{0.05in}  & \includegraphics[height=1.7cm]{./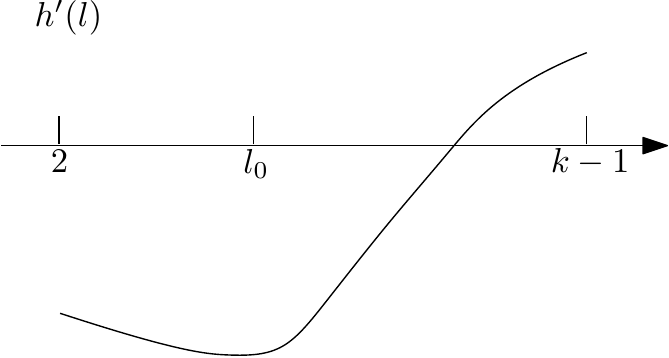}\\
\textbf{case-a}(1) &  & \hspace*{0in}\textbf{case-a}(2) &  & \hspace*{0in}\textbf{case-b}(1) &  & \hspace*{0in}\textbf{case-b}(2)
\end{tabular}
\vspace*{-0.15in}

\caption{All possible graphs of $h'(l)$. }
\label{fig:h_derivatives}
\end{figure}

Note that 
\begin{align*}
h'(2) < -\frac{\ln{2}+\ln{(N/2)} - 2\ln{k}}{\ln{N}} + \frac{3}{2k} = -1 +\frac{2\ln{k}}{\ln{N}} + \frac{3}{2k} < -1 + \frac{1}{32} + \frac{3}{20} < 0.
\end{align*}
Thus, by checking all possible graphs of $h'(l)$((see Figure \ref{fig:h_derivatives})), it's easy to see that in all cases, either $h(\ell$ is monotonically decreasing within range $\ell \in [2, k-1]$, or it first monotonically decreasing and then monotonically increasing within this range. In other words, the maximum is always achieved at one of the end point; that is, $\max_{\ell\in [2,k-1]}h(\ell) = \max \{h(2), h(k-1)\}$. 

Routine calculations show that 
\begin{align*}
h(2) < - \frac{2\left[\ln{(2\frac{N}{2}) - 1 - 2\ln k}\right]}{\ln N} + \frac{1}{k} = -2 + \frac{1}{k} + \frac{2}{\ln N}  + \frac{4\ln k}{\ln N} < -\frac{7}{4}
\end{align*}
and 
\begin{align*}
h(k-1) &~~<~~ -\frac{(k-1)\left[\ln{(\frac{N}{2}) -1 -\ln k - \ln \frac{k}{k-1}}\right]}{\ln N} + \frac{k^2-3k +2}{2k} \\
&~~<~~ \left[\frac{k^2-3k +2}{2k} - (k-1)\right] + \frac{k(1+ \ln 2)}{\ln N} + \frac{k\ln k }{\ln N} + \frac{k\ln \left(\frac{k}{k-1}\right)}{\ln N}\\
&~~<~~ -\frac{k}{2} - \frac{1}{2} + \frac{1}{10} + \frac{k}{8} + \frac{k}{64} + \frac{k}{64} < -\frac{7}{4}
\end{align*}
 Thus, $\forall \ell \in [2, k-1]$, we have $g(\ell) < N^{-\frac{7}{4}}$.

Hence, $\sum\limits_{\ell=2}^{k-1}g(\ell)  < k \cdot N^{-\frac{7}{4}}< \frac{1}{2} N^{-\frac{3}{2}}$. The second inequality is due to $k < \sqrt[64]{N} < \frac{\sqrt{N}}{2}$. 
Then by Eqn. (\ref{eqn:delta*/mu}), we have $\frac{\Delta^*}{\zeta} < \frac{1}{2} N^{-\frac{3}{2}}$, which implies $\frac{\zeta^2}{\Delta} = \frac{\zeta}{\Delta^*} > 2N^{\frac{3}{2}}$. Therefore, by Theorem \ref{thm:extendedJanson}, we have
\begin{align*}
\myprob[\omega(G) < k] = \myprob[\aX = 0] = \myprob\left[\bigwedge_{i\in I}\bar{B_i}\right] \leq e^{-\frac{\zeta^2}{2\Delta}} < e^{-N^{3/2}}.
\end{align*}
\subsection{Proof of Theorem \ref{thm:combinedgoodclique}}
\label{appendix:subsec:combinedgoodclique}
\begin{proof}
Using the argument in the proof of Claim \ref{claim:degreebound}, we know that for a fixed \mygoodE{} $(u,v)$ (i.e. $d(u,v) \leq r$), with probability $1 - n^{-\frac{8}{3}}$, the geodesic ball $B_{r/2}(z)$ ($z$ is the mid-point of a geodesic connecting $u$ to $v$ in $X$) contains at least $(\mys n/4)$ points. Note that all points in a $r/2$-ball form a clique in $r-$neighborhood graph. Since we remove each edge independently, in order to estimate $\omega_{u,v}(\Ghat)$ from below, it suffices to consider the ``local'' graph spanned by nodes in this $r/2$-ball. Note that this ``local'' graph have the same behavior as the standard \myER{} random graph $G_{uv}^{loc} := G(N_{z}, 1-p)$, where $N_{z}$ denotes the number of points falling in the ball $B_{r/2}(z)$. 

Furthermore, it is easy to see that, if $N_{z} \geq sn/4$, then for any {\it constant} $p \in (0,1)$, one can always find a sufficiently large $n$ such that $1-p \in \left((\frac{1}{N_{z}})^{\frac{1}{10}}, (\frac{1}{N_{z}})^{\frac{1}{\sqrt[64]{N_{z}}}}\right)$. 

Now, we are ready to apply Lemma \ref{thm:ERclique} to those ``local'' graphs: Note that $\bar{p}$ in Lemma \ref{thm:ERclique} will be set to be $1-p$, and $N$ will be set to be $N_{z}$.
\begin{align*}
\myprob\left[\omega (G_{uv}^{loc}) < k\mid N_{z} \geq \frac{sn}{4}\right] < e^{-(\frac{sn}{4})^{3/2}} \leq e^{-(3\ln{n})^{3/2}} < n^{-(\ln{n})^{1/2}}
\end{align*}
where $k = \floor*{\log_{1/(1-p)}{N_{z}}}$.

Hence for $k' = \frac{2}{3}\log_{1/(1-p)}{sn}$, we have that 
\begin{align*}
\myprob\left[\omega (G_{uv}^{loc}) < k'\mid N_{z} \geq \frac{sn}{4}\right]  < \myprob\left[\omega (G_{uv}^{loc}) < k \mid N_{z} \geq \frac{sn}{4}\right]  < n^{-(\ln{n})^{1/2}} .
\end{align*}
By the law of total probability, we know that
\begin{align*}
\myprob\left[\omega(G_{uv}^{loc}) < k' \mid d(u,v)\leq r\right] & < \myprob\left[\omega (G_{uv}^{loc}) < k' \mid N_{z} \geq \frac{sn}{4}\right] + \myprob\left[N_{z} < \frac{sn}{4}\right] \\
&< n^{-(\ln{n})^{1/2}} + n^{-\frac{8}{3}}
\end{align*} 

Applying the union bound, we have
\begin{align*}
\myprob\left[\bigwedge\limits_{u,v \in V; d(u,v) \leq r}\omega (G_{uv}^{loc}) \geq k'\right] &= 1 - \myprob\left[\exists u,v \in V~\text{with}~d(u,v) \le r~ \text{s.t.}~ \omega (G_{uv}^{loc}) < k'\right] \\
&\geq 1- \frac{1}{2}n^2\myprob\left[\omega(G_{uv}^{loc}) < k' \mid  d(u,v)\leq r\right] \\
&\geq 1 - \frac{1}{2}n^{-(\ln{n})^{1/2}+2} - \frac{1}{2}n^{-\frac{2}{3}}
\end{align*}

Thus, with high probability, for each \mygoodE{} $(u,v)$, we have $\omega_{u,v}(\Ghat) \geq k' = \frac{2}{3}\log_{1/(1-p)}{sn}$.
\end{proof}

\section{The proof of Theorem \ref{thm:combinedbadclique}}
\label{appendix:thm:combinedbadclique}
\begin{proof}
First, for part-(1) of Theorem \ref{thm:combinedbadclique}, note that as such a perturbed graph can have more edges than the deletion-only case, Theorem \ref{thm:combinedgoodclique} immediately implies this statement. 

In what follows, we prove part-(2) of the theorem, to upper bound the edge clique number of a \mybadE{} $(u,v) \in \Ghat$. The proof here is very similar to the proof of Theorem \ref{thm:insertiononlybadedge}, and we mainly need to adjust some technical details to include the deletion probability $p$. Given the similarity to the proof of Theorem \ref{thm:insertiononlybadedge}, we will use the same notations here and only provide main steps. 

Specifically, let $(u, v)$ be a \mybadE{} in $\Ghat$. Let $A_{uv}$, $\tilde{A}_{uv}$, and $B_{uv}$ as defined as in the proof of  Theorem \ref{thm:insertiononlybadedge}. 

First, by the pigeonhole principle and union bound, we have:
\begin{align}\label{eqn:mainresultcombined}
&\myprob\left[ \Ghat ~\text{has a}~uv\text{-clique of size} \ge \aK\right] \nonumber \\
\le &\myprob\left[ \Ghat|_{\tilde{A}_{uv}}~\text {has a}~uv\text{-clique of size} \ge \frac{\aK}{2}\right] + \myprob\left[ \Ghat|_{B_{uv}}~\text {has a}~uv\text{-clique of size} \ge \frac{\aK}{2}\right] 
\end{align}

\myparagraph{Case (A): bounding the first term in Eqn. (\ref{eqn:mainresultcombined}).} 
The arguments here proceed in the same way as in Case (A) of the proof of Theorem \ref{thm:insertiononlybadedge}, except for the  conditional expectation of $\aX$: Specifically, now $C_j^{(i)}$'s no longer span cliques but each $C_j^{(i)}$ spans an \myER{} random graphs with insertion probability $1-p$. Thus, we have 
\begin{align}
\myE[\aX \mid \aF] ~~=~~ &\sum\limits_{|S|=k+2}\myE[\aX_S \mid \aF] ~~=~~ q^{2k}\sum\limits_{\substack{x_1 + x_2 + \cdots + x_m = k \\ 0 \leq x_i \leq N_i}} \binom{N_1}{x_1}\cdots \binom{N_m}{x_m}q^{\frac{k^2-\sum_{i=1}^{m}x_i^2}{2}}(1-p)^{\sum_{i=1}^{m}\binom{x_i}{2}}\nonumber\\
&=~~ \left(\frac{q}{1-p}\right)^{2k}\sum\limits_{\substack{x_1 + x_2 + \cdots + x_m = k \\ 0 \leq x_i \leq N_i}} \binom{N_1}{x_1}\cdots \binom{N_m}{x_m}\left(\frac{q}{1-p}\right)^{\frac{k^2-\sum_{i=1}^{m}x_i^2}{2}}(1-p)^{\frac{k^2+3k}{2}}\nonumber\\
& \leq \left[\left(\frac{q}{1-p}\right)^{2k}\sum\limits_{\substack{x_1 + x_2 + \cdots + x_m = k \\ 0 \leq x_i \leq N_{\max}}} \binom{N_{\max}}{x_1}\cdots \binom{N_{\max}}{x_m}\left(\frac{q}{1-p}\right)^{\frac{k^2-\sum_{i=1}^{m}x_i^2}{2}}\right](1-p)^{\frac{k^2+3k}{2}}
\label{eqn:expectationXcombined}
\end{align}
  
\begin{lemma}\label{lem:combinedcaseA}
There exists a constant $c$ depending on $\myBC$ and $\rho$ such that for any constant $\epsilon>0$, if $\aK \leq c \mys n$ and $p,q$ satisfy: 
\begin{align}\label{eqn:combinedqbound1}
\frac{q}{1-p} ~~\leq~~ \min \Bigg\{& \left(\dfrac{k!}{n^{\epsilon}N_{\max}^km}\right)^{1/2k}\left(\frac{1}{1-p}\right)^{\frac{k^2+3k}{2}\frac{1}{2k}},~~ \left(\dfrac{k!}{k^2n^{\epsilon}N_{\max}^km^2}\right)^{1/k}\left(\frac{1}{1-p}\right)^{\frac{k^2+3k}{2}\frac{1}{2k+\frac{k^2}{4}}}~~, \nonumber\\
&\left(\dfrac{k!}{n^{\epsilon}m^kN_{\max}^{k}}\right)^{\frac{4}{k^2}}\left(\frac{1}{1-p}\right)^{\frac{k^2+3k}{2}\frac{1}{2k+\frac{(k-1)^2}{4}}}\Bigg\}
\end{align}
then we have that $\myE(\aX \mid \aF) = O(n^{-\epsilon})$. Specifically, we can set $\epsilon = 3$ (the choice will be necessary later to apply union bound) and obtain $\myE[\aX \mid \aF] = O(n^{-3})$. 
\end{lemma}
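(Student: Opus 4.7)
The plan is to reduce Lemma \ref{lem:combinedcaseA} to Lemma \ref{lem:insertioncaseA} by introducing the rescaled insertion probability $q' := q/(1-p)$. Starting from the bound on $\myE[\aX \mid \aF]$ given in Eqn. (\ref{eqn:expectationXcombined}), I would pull the factor $(1-p)^{(k^2+3k)/2}$ outside the sum; the remaining bracketed expression then has exactly the same form as the sum estimated in the proof of Lemma \ref{lem:insertioncaseA}, with $q$ replaced by $q'$. So the task reduces to bounding $(1-p)^{(k^2+3k)/2}$ times that sum.

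I would then follow the proof of Lemma \ref{lem:insertioncaseA} verbatim, splitting the sum into the same three pieces according to the value of $x_{\max}$: a first piece with $x_{\max} = k$ contributing a factor $(q')^{2k}$; a middle piece with $x_{\max} = j \in [\ceil*{(k+1)/2}, k-1]$ contributing factors of the form $(q')^{2k+j(k-j)}$; and a third piece with $x_{\max} \leq \ceil*{(k-1)/2}$ contributing a factor $(q')^{2k + (k-1)^2/4}$. Over each piece, the \emph{largest} exponent of $q'$ that actually appears is $\gamma_1 = 2k$, $\gamma_2 = 2k + k^2/4$ (attained near $j = k/2$), and $\gamma_3 = 2k + (k-1)^2/4$ respectively --- and these are precisely the quantities appearing in the three bounds of hypothesis Eqn. (\ref{eqn:combinedqbound1}) via exponents $\tfrac{k^2+3k}{2}\cdot\tfrac{1}{\gamma_i}$ on $(1/(1-p))$.

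The key observation is that the hypothesis is designed exactly so that, in each piece, the outer $(1-p)^{(k^2+3k)/2}$ is absorbed when $q'$ is raised to its worst-case exponent. Concretely, if $q' \leq A \cdot (1-p)^{-(k^2+3k)/(2\gamma)}$, then for any $\gamma' \leq \gamma$ one has $(q')^{\gamma'}(1-p)^{(k^2+3k)/2} \leq A^{\gamma'} (1-p)^{(k^2+3k)(1-\gamma'/\gamma)/2} \leq A^{\gamma'}$, since $1-p \leq 1$ and the exponent on $(1-p)$ is non-negative. Hence, after absorbing the outer factor, every term in every piece is bounded by the corresponding term appearing in the insertion-only analysis, and the three estimates (\ref{eqn:firsttermestimate}), (\ref{eqn:uselowerbound}), (\ref{eqn:2ndterm}) go through unchanged, each yielding $O(n^{-\epsilon})$. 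Summing gives $\myE[\aX \mid \aF] = O(n^{-\epsilon})$, and specializing $\epsilon = 3$ yields the claimed $O(n^{-3})$.

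The main obstacle is purely the bookkeeping of exponents: matching the three $\gamma_i$'s in the hypothesis to the correct worst-case exponents of $q'$ in the three pieces (in particular, verifying that $j(k-j)$ over $j \in [\ceil*{(k+1)/2}, k-1]$ is indeed maximized at the $j$ closest to $k/2$, giving $k^2/4$), and then checking that the cancellation argument above is uniform across all $j'$ in each piece, not merely at the maximum. Once this is in place, no new probabilistic content is required beyond what was already established in Lemma \ref{lem:insertioncaseA}.
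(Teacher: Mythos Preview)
Your approach is essentially identical to the paper's: both substitute $q' = q/(1-p)$, pull out the global factor $(1-p)^{(k^2+3k)/2}$, split into the same three ranges of $x_{\max}$, and use the three hypotheses in Eqn.~(\ref{eqn:combinedqbound1}) to absorb the $(1-p)$ factor so that each piece reduces to the corresponding insertion-only estimate from Lemma~\ref{lem:insertioncaseA}. One small inaccuracy in your write-up: for the third piece, $\gamma_3 = 2k+(k-1)^2/4$ is not the \emph{largest} exponent of $q'$ appearing there but a \emph{lower bound} --- the actual exponents $2k+(k^2-\sum_i x_i^2)/2$ all exceed $\gamma_3$. The single-factor form $\binom{mN_{\max}}{k}(q')^{\gamma_3}$ that you (and the paper) rely on is obtained by first replacing each exponent by this lower bound, which needs $q'\le 1$; after that step $\gamma'=\gamma_3=\gamma$ and your cancellation argument applies with equality. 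This point is glossed over in both arguments but is harmless in the intended regime.
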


The proof of the above lemma follows almost exactly the argument for Lemma \ref{lem:insertioncaseA}: Indeed, roughly speaking, (1) we will now use $\frac{q}{1-p}$ for the role of ``$q$" as in (\ref{expectationX}); and (2) compared to Eqn. (\ref{eqn:qbound1}) in Lemma \ref{lem:insertioncaseA}, the extra terms related to $\left( \frac{1}{1-p}\right)$ in the right hand side of Eqn. (\ref{eqn:combinedqbound1}) is needed to compensate the extra term $(1-p)^{\frac{k^2+3k}{2}}$ in the right hand side of Eqn. (\ref{eqn:expectationXcombined}) (as compared to Eqn. (\ref{expectationX}). Other than these two points, the proof proceeds in the same as as that for Lemma \ref{lem:insertioncaseA}. 
We thus omit the tedious details. 

Furthermore, it is easy to check that as $p\in (0, 1)$, we have that 
\begin{align*}
\left(\frac{1}{1-p}\right)^{\frac{3}{2}} < \min \left\{\left(\frac{1}{1-p}\right)^{\frac{k^2+3k}{2}\frac{1}{2k}}, \left(\frac{1}{1-p}\right)^{\frac{k^2+3k}{2}\frac{1}{2k+\frac{k^2}{4}}}, \left(\frac{1}{1-p}\right)^{\frac{k^2+3k}{2}\frac{1}{2k + \frac{(k-1)^2}{4}}}    \right\}
\end{align*}

It then follows from an almost identical argument from Appendix \ref{appendix:constantsc2c3} that there exists $c_2^a, c_3^a$ such that  if $\aK \le csn$ and $q \le c_2^a\cdot \left(\frac{1}{n} \right)^{c_3^a/\aK}\cdot \frac{\aK}{\mys n\sqrt{1-p}}$, then the conditions in Eqn. (\ref{eqn:combinedqbound1}) hold. 
By a similar argument as in the proof of (case-A) for Theorem \ref{thm:insertiononlybadedge}, we thus obtain that: 
there exist constants $c_1^a, c_2^a, c_3^a$ such that 
\begin{align*}
\text{If } q \leq \min\left\{c_1^a, c_2^a\cdot \left(\frac{1}{n} \right)^{c_3^a/\aK}\cdot \frac{\aK}{\mys n\sqrt{1-p}} \right\}, ~\text{then}~\myprob\left[ \Ghat|_{\tilde{A}_{uv}}~\text {has a}~uv\text{-clique of size} \ge \frac{\aK}{2}\right] = O(n^{-3}).
\end{align*}

\noindent{\bf Case (B): bounding the second term in Eqn. (\ref{eqn:mainresultcombined}).} 
Again, the arguments here proceed in the same way as in Case (B) of the proof of Theorem \ref{thm:insertiononlybadedge}, except for the conditional expectation of $\aY$ --- now $B_r^V(u)$ and $B_r^V(v)$ no longer span cliques but each of them spans a \myER{} random graphs with insertion probability $1-p$. Thus, we have :

\begin{align}\label{eqn:expectationIcombined}
\myE[\aY \mid \aH] &~~=~~ \sum\limits_{|S|=\tilde{k}+2}\myE[\aY_S \mid \aH]\nonumber\\
&~~=~~\sum\limits_{\substack{x_1 + x_2 = \tilde{k} \\ 0 \leq x_1 \leq N_u, 0 \leq x_2 \leq N_v}} \binom{N_u-1}{x_1} \binom{N_v-1}{x_2}q^{(x_1+1)(x_2+1)-1}(1-p)^{\binom{x_1+1}{2} + \binom{x_2+1}{2}}\nonumber\\
&~~=~~  \left[\sum\limits_{\substack{x_1 + x_2 = \tilde{k} \\ 0 \leq x_1 \leq N_u, 0 \leq x_2 \leq N_v}} \binom{N_u-1}{x_1} \binom{N_v-1}{x_2}\left(\frac{q}{1-p}\right)^{(x_1+1)(x_2+1)-1}\right](1-p)^{\frac{\tilde{k}^2+3\tilde{k}}{2}}
\end{align}

\begin{lemma}\label{lem:combinedcaseB}
For any constant $\epsilon>0$, we have that $\myE[\aY \mid \aH] = O(n^{-\epsilon})$ as long as the following condition on $p,q$ holds: 
\begin{align}\label{eqn:combinedqbound2}
\frac{q}{1-p} ~~\leq~~ &\min \Bigg\{\left(\dfrac{\tilde{k}!}{{\tilde{k}}^2n^{\epsilon}(N_u+N_v)^{\tilde{k}}}\right)^{\frac{1}{\tilde{k}}}\left(\frac{1}{1-p}\right)^{\frac{\tilde{k}^2+3\tilde{k}}{2}\frac{1}{\tilde{k} + \frac{\tilde{k}^2}{4}}}~~,\nonumber\\ 
&\left(\frac{\tilde{k}!}{(N_u+N_v)^{\tilde{k}}n^{\epsilon}}\right)^{\frac{16}{\tilde{k}^2}}\left(\frac{1}{1-p}\right)^{\frac{\tilde{k}^2+3\tilde{k}}{2}\frac{1}{\tilde{k} + \frac{\tilde{k}^2}{16}}} \Bigg\}.
\end{align}

Specifically, setting $\epsilon=3$ (a case which we will use later), we have $\myE[\aY \mid \aH] = O(n^{-3})$.
\end{lemma}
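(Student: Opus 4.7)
The approach is to reduce Lemma \ref{lem:combinedcaseB} to Lemma \ref{lem:insertioncaseB} by a single factorization. Write Eqn. (\ref{eqn:expectationIcombined}) as
\[
\myE[\aY \mid \aH] ~=~ (1-p)^{(\tilde{k}^2+3\tilde{k})/2} \cdot \Sigma\!\left(\tfrac{q}{1-p}\right),
\]
where $\Sigma(t)$ denotes the sum on the right-hand side of Eqn. (\ref{eqn:expectationI}) with $q$ replaced by the variable $t$. The plan is then to bound $\Sigma(q/(1-p))$ with the same combinatorial estimates used in Appendix \ref{appendix:lem:insertioncaseB}, and to pay for the external scalar $(1-p)^{(\tilde{k}^2+3\tilde{k})/2}$ using the extra $(1/(1-p))^{\alpha}$ factors built into the hypotheses (\ref{eqn:combinedqbound2}).

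First I would dispose of the trivial case $\aK > 2(N_u+N_v)$ (empty sum) and henceforth assume $\tilde{k} \le N_u+N_v$. Next I would copy the box-and-ball splitting of Eqn. (\ref{eqn:localballestimate}): separate the summation into \emph{wing} terms, where $\min(x_1,x_2) \le \lfloor \tilde{k}/4 \rfloor$, and \emph{middle} terms, where $\min(x_1,x_2) > \lfloor \tilde{k}/4 \rfloor$ and hence $(x_1+1)(x_2+1)-1 \ge \tilde{k}+\tilde{k}^2/16$. The first bound of (\ref{eqn:combinedqbound2}) is tailored to the wing part and the second to the middle part; the exponent $e := (x_1+1)(x_2+1)-1$ of $q/(1-p)$ is uniformly bounded by $\tilde{k}+\tilde{k}^2/4$ on the wing and is at least $\tilde{k}+\tilde{k}^2/16$ on the middle, which are precisely the denominators appearing in the two bounds of (\ref{eqn:combinedqbound2}).

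The main, and essentially only nontrivial, step is the bookkeeping of $(1-p)$ powers. For each term of the form $C_{x_1,x_2}\cdot(q/(1-p))^{e}$, substituting the relevant bound from (\ref{eqn:combinedqbound2}) yields a factor $(1/(1-p))^{e\alpha}$ with $\alpha = (\tilde{k}^2+3\tilde{k})/(2 e^{\max})$, where $e^{\max}$ is chosen to be an upper bound on $e$ in that part of the sum. Hence $e\alpha \le (\tilde{k}^2+3\tilde{k})/2$, so the external factor $(1-p)^{(\tilde{k}^2+3\tilde{k})/2}$ fully absorbs the $(1/(1-p))^{e\alpha}$ blow-up, leaving only a harmless $(1-p)^{\mathrm{nonneg}} \le 1$ residue. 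After this cancellation, the remaining bound on each term has the same form as the corresponding estimate in Appendix \ref{appendix:lem:insertioncaseB}.

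Summing the contributions then gives, exactly as in that appendix, a wing contribution of size $O(1/(\tilde{k} n^{\epsilon}))$ per index (and hence $O(n^{-\epsilon})$ after summing the $O(\tilde{k})$ indices) and a middle contribution of size $O(n^{-\epsilon})$, for a total of $\myE[\aY \mid \aH] = O(n^{-\epsilon})$; setting $\epsilon = 3$ proves the stated $O(n^{-3})$ bound. I expect the verification of the $(1-p)$-cancellation to be the sole genuine obstacle, and it is algebraic rather than probabilistic; once the factorization above is set up, the combinatorial estimates are inherited wholesale from Lemma \ref{lem:insertioncaseB} and no new random-graph argument is required.
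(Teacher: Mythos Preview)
Your proposal is correct and follows essentially the same route as the paper: factor out $(1-p)^{(\tilde{k}^2+3\tilde{k})/2}$, replace $q$ by $q/(1-p)$ in the sum of Eqn.~(\ref{eqn:expectationI}), reuse the wing/middle split of Eqn.~(\ref{eqn:localballestimate}), and let the extra $(1/(1-p))^{\alpha}$ factors in the hypotheses absorb the external prefactor. One small point of imprecision: for the \emph{middle} block your sentence ``$e^{\max}$ is chosen to be an upper bound on $e$ in that part of the sum'' is not literally true, since $\tilde{k}+\tilde{k}^2/16$ is a \emph{lower} bound on $e$ there; the cancellation works because the split (\ref{eqn:localballestimate}) has already replaced all middle terms by the single aggregate $\binom{N_u+N_v}{\tilde{k}}(q/(1-p))^{\tilde{k}+\tilde{k}^2/16}$ (using $q/(1-p)<1$), after which the exponent is \emph{exactly} $\tilde{k}+\tilde{k}^2/16$ and the $(1-p)$ powers cancel exactly. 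With that clarification, your argument and the paper's coincide.
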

Again, the proof of the above lemma follows almost exactly the argument for Lemma \ref{lem:insertioncaseB} where, roughly speaking, (1) we will now use $\frac{q}{1-p}$ for the role of ``$q$" as in (\ref{eqn:expectationI}); and (2) compared to Eqn. (\ref{eqn:qbound2}) in Lemma \ref{lem:insertioncaseB}, the extra terms related to $\left( \frac{1}{1-p}\right)$ in the right hand side of Eqn. (\ref{eqn:combinedqbound2}) is needed to compensate the extra term $(1-p)^{\frac{\tilde{k}^2+3\tilde{k}}{2}}$ in the right hand side of Eqn. (\ref{eqn:expectationIcombined}) (as compared to Eqn. (\ref{eqn:expectationI}). Other than these two points, the proof proceeds in the same as as that for Lemma \ref{lem:insertioncaseB}. 
We thus omit the tedious details. 

Furthermore, it is easy to verify that
\begin{align*}
\left(\frac{1}{1-p}\right)^{\frac{3}{2}} < \min \left\{\left(\frac{1}{1-p}\right)^{\frac{\tilde{k}^2+3\tilde{k}}{2}\frac{1}{\tilde{k} + \frac{\tilde{k}^2}{4}}}, \left(\frac{1}{1-p}\right)^{\frac{\tilde{k}^2+3\tilde{k}}{2}\frac{1}{\tilde{k}+\frac{\tilde{k}^2}{16}}}    \right\}
\end{align*}
Thus, following the same argument as in the proof for (case-B) of Theorem \ref{thm:insertiononlybadedge}, we can obtain that there exist constants $c_1^b, c_2^b, c_3^b$ such that 
\begin{align*}
\text{If } q \leq \min\left\{c_1^b, c_2^b\cdot \left(\frac{1}{n} \right)^{c_3^b/\aK}\cdot \frac{\aK}{\mys n\sqrt{1-p}} \right\}, ~\text{then}~\myprob\left[ \Ghat|_{B_{uv}}~\text {has a}~uv\text{-clique of size} \ge \frac{\aK}{2}\right] = O(n^{-3}).
\end{align*}

Putting these two cases together, we have that there exist constants $c_1=\min\{c_1^a, c_1^b\}, c_2 = \min\{c_2^a, c_2^b\}$ and $c_3 = \max\{c_3^a,c_3^b \}$ such that if $p, q$ satisfy (\ref{eqn:qboundcombined}), then 
\begin{align*}
\myprob\left[\Ghat ~\text{has a}~uv\text{-clique of size} \ge \aK\right] = O(n^{-3})
\end{align*}

Finally, by applying the union bound, we know:
\begin{align*}
\myprob\left[\text{for every \mybadE{} } (u,v) \text{, } \Ghat ~\text{has a}~uv\text{-clique of size} \ge \aK\right]  = O(n^{-1})
\end{align*}
\end{proof}

\section{Proof of Theorem \ref{thm:maincombined}}
\label{appendix:thm:maincombined}
Our goal is to show that $\frac{1}{3}d_\anotherGt \le d_\tG \le 3d_\anotherGt$. 
Let $\mathcal{E}_1$ denote the event where $d_{\Ghat\cap \tG} \le 2d_\tG$. By Lemma 17 of \cite{ParthasarathyST17}, event $\mathcal{E}_1$ happens with probability at least $1 - n^{-\Omega(1)}$. 

Let $\mathcal{E}_2$ denote the event where all edges $\Ghat \cap \tG$ are also contained in the edge set of the filtered graph $\anotherGt$; that is, $\Ghat \cap \tG \subseteq \anotherGt$. 
By Lemma \ref{lm:combinedkeepgoodedges}, event $\mathcal{E}_2$ happens with probability at least $1-n^{-\frac{2}{3}}$ (this bound is derived in the proof of Theorem \ref{thm:combinedgoodclique}).
It then follows that:   
\begin{align}
\text{If both events } \mathcal E_1 \text{~and~} \mathcal E_2 \text{~happen, then~} d_\anotherGt \le d_{\Ghat \cap \tG} \le 2 d_\tG \leq 3 d_\tG. \nonumber
\end{align}

What remains is to show $d_\tG \le 3 d_\anotherGt$. 
To this end, we define $\mathcal E_3$ to be the event where for all \mybadE{}s $(u,v)$ in $\Ghat$, we have $\omega_{u,v}(\Ghat) < \tau$. If $\mathcal E_3$ happens, then it implies that for an arbitrary edge $(u, v) \in E(\anotherGt)$, either $(u, v) \in E(\tG)$ (thus $d_\tG(u,v) = 1$) or $d_\tG(u, v) \le 3$ (since there is at least one edge connecting $N_\tG(u)$ and $N_\tG(v)$).
By Lemma \ref{lm:combineddeletebadedges}, event $\mathcal E_3$ happens with probability at least $1- o(1)$ (the exact bound can be found in the proof of Theorem \ref{thm:combinedbadclique}). 

By applying the union bound, we know that $\mathcal E_1$, $\mathcal E_2$ and $\mathcal E_3$ happen simultaneously with high probability. 

Using a similar argument as the proof of Theorem 11 in \cite{ParthasarathyST17}, it then follows that given any $u, v\in V$ connected in $\anotherGt$, we can find a path in $\tG$ of at most $3 d_\anotherGt(u, v)$ number of edges to connect $u$ and $v$. 
Furthermore, event $\mathcal E_1$ implies that if $u$ and $v$ are not connected in $\anotherGt$, then they cannot be connected in $\tG$ either. 
Putting everything together, we thus obtain $d_\tG \le 3 d_\anotherGt $. Theorem \ref{thm:maincombined} then follows. 

\end{document}